\newcommand{\@abbrev}[3]{
	\def\c@a@def##1{
		\if ##1.
		\relax
		\else
		\@ifdefinable{\@nameuse{#1##1}}{\@namedef{#1##1}{#2##1}}
		\expandafter\c@a@def
		\fi
	}
	\c@a@def #3.
}
\renewcommand{\phi}{\varphi}
\newcommand{\lra}{\longrightarrow}
\renewcommand{\phi}{\varphi}
\renewcommand{\theta}{\vartheta}
\renewcommand{\AA}{{\mathfrak A}}
\newcommand{\BB}{{\mathfrak B}}
\newcommand{\KK}{{\mathfrak K}}
\renewcommand{\epsilon}{\varepsilon}
\newcommand{\Cc}{{\cal C}}
\newcommand{\Ee}{{\cal E}}
\newcommand{\Ff}{{\cal F}}
\newcommand{\Kk}{{\cal K}}
\newcommand{\Ll}{{\cal L}}
\newcommand{\Mm}{{\cal M}}
\newcommand{\Oo}{{\cal O}}
\newcommand{\Pp}{{\cal P}}
\newcommand{\Qq}{{\cal Q}}
\newcommand{\Vv}{{\cal V}}
\renewcommand{\bar}{\overline}
\newcommand{\LA}[2]{\text{\upshape LA}^{#1}(#2)}
\newcommand{\CFI}{\mathsf{CFI}}
\newcommand{\IM}{\mathsf{IM}}
\newcommand{\C}[1]{\Cc^{#1}}
\newcommand{\Cmod}[1]{\Cc[#1]}
\newcommand{\CkMod}[2]{\Cc^{#1}[#2]}
\newcommand{\homEquiv}[1]{\equiv_{#1}}
\newcommand{\ImEquiv}[2]{\equiv^{\text{\upshape IM}}_{#1,#2}}
\newcommand{\homEquivMod}[2]{\equiv_{#1}^{#2}}
\newcommand{\equivC}[1]{\equiv_{\C{#1}}}
\DeclareMathOperator{\Hom}{Hom}
\DeclareMathOperator{\id}{id}
\DeclarePairedDelimiterX\setcond[2]{\{}{\}}{\mathchoice{\,}{}{}{}#1 \;\delimsize\vert\; #2\mathchoice{\,}{}{}{}}
\title{Limitations of Game Comonads via Homomorphism Indistinguishability}
\author{Moritz Lichter}{RWTH Aachen University, Germany}{lichter@lics.rwth-aachen.de}{https://orcid.org/0000-0001-5437-8074}{European Research Council (ERC) under the European Union’s Horizon 2020 research and innovation programme (EngageS: agreement No.\ 820148)}
\author{Benedikt Pago}{RWTH Aachen University, Germany }{pago@logic.rwth-aachen.de}{https://orcid.org/0000-0001-6377-1230}{}\author{Tim Seppelt}{RWTH Aachen University, Germany }{seppelt@cs.rwth-aachen.de}{https://orcid.org/0000-0002-6447-0568}{German Research Council (DFG) within Research Training Group 2236 (UnRAVeL)}
\keywords{finite model theory, graph isomorphism, linear-algebraic logic, homomorphism indistinguishability, game comonads, invertible-map equivalence} 
\authorrunning{M. Lichter, B. Pago, T. Seppelt} 
\begin{document}
	
	\maketitle

\begin{abstract}
Abramsky, Dawar, and Wang (2017) introduced the pebbling comonad for $k$-variable counting logic and thereby initiated a line of work that imports category theoretic machinery to finite model theory. 
Such game comonads have been developed for various logics, yielding characterisations of logical equivalences in terms of isomorphisms in the associated co-Kleisli category. 
We show a first limitation of this approach by studying linear-algebraic logic, which is strictly more expressive than first-order counting logic and whose $k$-variable logical equivalence relations are known as invertible-map equivalences (IM).
We show that there exists no finite-rank comonad on the category of graphs whose co-Kleisli isomorphisms characterise IM-equivalence, answering a question of Ó~Conghaile and Dawar (CSL 2021).
We obtain this result by ruling out a characterisation of IM-equivalence in terms of homomorphism indistinguishability and employing the Lovász-type theorems for game comonads established by Dawar, Jakl, and Reggio (2021).
Two graphs are homomorphism indistinguishable over a graph class
if they admit the same number of homomorphisms from every graph in the class. The IM-equivalences cannot be characterised in this way, neither when counting homomorphisms in the natural numbers, nor in any finite prime field.
\end{abstract}

\section{Introduction}
Logic fragments such as $k$-variable first-order logic with or without counting quantifiers induce equivalence relations on graphs, or more generally, on structures: Two structures are equivalent in this sense if they satisfy exactly the same sentences of the respective logic fragment. Such equivalence relations are approximations of the isomorphism relation. The more expressive the logic fragment, the more non-isomorphic structures are distinguished by it.
Classical model-comparison games and counterexamples like the Cai--Fürer--Immerman (CFI) construction show that $k$-variable FO (even with counting) does not distinguish all pairs of non-isomorphic structures. Hence, the induced equivalence is indeed strictly coarser than isomorphism.
Such approximations of isomorphism can be studied from many different angles. For example, it is well-known that counting logic equivalence is the same as indistinguishability by the Weisfeiler--Leman graph isomorphism test \cite{caifurimm92}. 

Another perspective to approximations of isomorphism is offered by homomorphism indistinguishability:
Two graphs $G$ and $H$ are \emph{homomorphism indistinguishable} over a class of graphs $\mathcal{F}$ if for all $F \in \mathcal{F}$ the number of homomorphisms from $F$ to $G$ is equal to the number of homomorphisms from $F$ to $H$.
Equivalence relations with respect to many logic fragments can be characterised as homomorphism indistinguishability relations over some graph class.
For example, two graphs are counting logic equivalent if and only if they are homomorphism indistinguishable over all graphs of bounded treewidth \cite{dvorak_recognizing_2010,dell_lovasz_2018}.
Besides counting logic equivalence, many other natural equivalence relations between graphs, including isomorphism~\cite{lovasz_operations_1967}, quantum isomorphism~\cite{mancinska_quantum_2020}, cospectrality~\cite{dell_lovasz_2018}, and feasibility of integer programming relaxations for graph isomorphism~\cite{dell_lovasz_2018,grohe_homomorphism_2022,roberson_lasserre_2023} have been characterised as homomorphism indistinguishability relations over various graph classes.
Characterising (logical) equivalences as homomorphism indistinguishability relations is desirable because such characterisations allow to compare the expressive power of logics solely by comparing the graph classes from which homomorphisms are counted \cite{roberson_lasserre_2023,roberson2022oddomorphisms}.
In this way, deep results from structural graph theory are made available for studying the expressive power of logics~\cite{seppelt2023logical}.

It is natural to ask whether this approach can be extended to interesting logics that are more expressive than counting logic, as they are for example studied in the quest for a logic for \textsc{Ptime}.
Such examples are \emph{rank logic}~\cite{rankLogicDiscovery, GradelPakusa19} and the more general \emph{linear-algebraic logic} (LA)~\cite{limitationsOfIM}.
We answer this question in the negative.
The \emph{invertible-map equivalence} $\ImEquiv{k}{\mathbb{P}}$,
as the equivalence of the $k$-variable fragment of LA is called, cannot be characterised as a homomorphism indistinguishability relation.

\begin{theorem} \label{thm:main}
	For every $k\geq6$, $\ImEquiv{k}{\mathbb{P}}$
	is not a homomorphism indistinguishability relation.
\end{theorem}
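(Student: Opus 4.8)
The plan is to argue by contradiction. Suppose $\ImEquiv{k}{\mathbb{P}}$ coincides with the homomorphism indistinguishability relation $\equiv_{\mathcal{F}}$ for some class of graphs $\mathcal{F}$, where homomorphisms are counted in $\mathbb{N}$ (the variant counting modulo a prime is handled in parallel, see below). It is convenient to normalise $\mathcal{F}$: since $\mathcal{F}^{\ast} := \{\, F : \hom(F,-) \text{ is constant on } \ImEquiv{k}{\mathbb{P}}\text{-equivalence classes} \,\}$ is the largest class $\mathcal{G}$ with ${\equiv_{\mathcal{G}}} \supseteq \ImEquiv{k}{\mathbb{P}}$, we have $\mathcal{F} \subseteq \mathcal{F}^{\ast}$ and hence the chain $\ImEquiv{k}{\mathbb{P}} \subseteq {\equiv_{\mathcal{F}^{\ast}}} \subseteq {\equiv_{\mathcal{F}}} = \ImEquiv{k}{\mathbb{P}}$. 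Consequently $\ImEquiv{k}{\mathbb{P}}$ is a homomorphism indistinguishability relation if and only if ${\equiv_{\mathcal{F}^{\ast}}} = \ImEquiv{k}{\mathbb{P}}$, so it suffices to exhibit two graphs that are $\equiv_{\mathcal{F}^{\ast}}$-equivalent but not $\ImEquiv{k}{\mathbb{P}}$-equivalent.

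The core of the argument is a treewidth bound: there is a constant $t = t(k)$ such that every $F \in \mathcal{F}^{\ast}$ has treewidth at most $t$. Equivalently, every graph $F$ of treewidth exceeding $t$ separates some pair $G \ImEquiv{k}{\mathbb{P}} H$ by homomorphism counts, i.e.\ $\hom(F,G) \neq \hom(F,H)$. To produce such a pair I would take a Cai--F\"urer--Immerman construction over the ring $\mathbb{Z}/2^{m}$ with $m$ chosen large relative to $k$, placed on a base graph derived from $F$ -- concretely, a large grid or wall minor supplied by the Excluded Grid Theorem, modified if necessary to push its connectivity above a bound depending on $k$. On the one hand, the twisted and untwisted variants $G,H$ are then $\ImEquiv{k}{\mathbb{P}}$-equivalent: this is the finite-variable incarnation of the known failure of linear-algebraic logic to define solvability of linear equation systems over $\mathbb{Z}/2^{m}$, an obstruction that holds uniformly over all prime fields and is insensitive to the size of the base graph. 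On the other hand, a homomorphism-counting argument for CFI graphs shows that $F$ itself, having treewidth above the connectivity of its own derived base graph, distinguishes $G$ from $H$. This yields the treewidth bound.

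Granting the bound, the contradiction is quick. Choose a base graph $B$ of connectivity and treewidth exceeding $t$, and let $G'$ and $H'$ be the untwisted and twisted $\mathbb{F}_2$-$\CFI$ graphs over $B$. By the connectivity of $B$, $G'$ and $H'$ are $\equivC{t+1}$-equivalent, hence, by the classical correspondence between $\equivC{t+1}$ and homomorphism-count equality over graphs of treewidth at most $t$, we get $\hom(F,G') = \hom(F,H')$ for every such graph $F$, in particular for every $F \in \mathcal{F}^{\ast}$; so $G' \equiv_{\mathcal{F}^{\ast}} H'$. At the same time, for $k \geq 6$ the logic $\LA{k}{\mathbb{F}_2}$ expresses solvability of the $\mathbb{F}_2$-linear system encoded in the $\CFI$ construction, so $G' \not\ImEquiv{k}{\mathbb{P}} H'$. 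This contradicts ${\equiv_{\mathcal{F}^{\ast}}} = \ImEquiv{k}{\mathbb{P}}$, proving the theorem. The modular variant runs along the same lines, except that -- since homomorphism counts modulo $p$ can detect $\mathbb{F}_p$-$\CFI$ constructions, via a Faben--Jerrum-type reduction -- one uses a $\CFI$ witness over a prime field $\mathbb{F}_q$ with $q \neq p$ (or over a prime power coprime to $p$), both in the treewidth bound and in the final step.

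The main obstacle is the treewidth bound on $\mathcal{F}^{\ast}$. One must build, on top of an \emph{arbitrary} high-treewidth graph $F$, a $\CFI$-type pair that is simultaneously (i) $\ImEquiv{k}{\mathbb{P}}$-equivalent -- which forces a gadget over a ring such as $\mathbb{Z}/2^{m}$ with $m$ large in terms of $k$ only, not of $F$ -- and (ii) still separated by the homomorphism counts from $F$, even though the gadget is no longer over a field and $F$ was not chosen to suit the construction. Making the treewidth, connectivity, and ring-size parameters cohere across all primes at once, and pinning down the precise threshold $k \geq 6$ at which $\LA{k}{\mathbb{F}_2}$ becomes strong enough to separate the $\mathbb{F}_2$-$\CFI$ witnesses, is where the real work lies.
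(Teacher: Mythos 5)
Your overall skeleton is sound and, in its first half, matches the paper's: passing to the maximal class $\mathcal{F}^{\ast}$ is legitimate, and your final contradiction in the bounded-treewidth case (classical $\mathbb{F}_2$-CFI graphs over a highly connected base graph are $\Cc^{t+1}$-equivalent, hence $\equiv_{\mathcal{F}^\ast}$-equivalent, yet separated by rank logic with $6$ variables) is exactly how the paper verifies the hypothesis of Seppelt's dichotomy. The gap is in the step you yourself flag as the real work: the claimed treewidth bound on $\mathcal{F}^{\ast}$, and your proposed route to it fails for two independent reasons. First, CFI graphs over a grid or wall minor are \emph{not} $\ImEquiv{k}{\mathbb{P}}$-equivalent -- they are not even $\Cc^{c}$-equivalent for small $c$: the paper itself relies on F\"urer's result that non-isomorphic CFI graphs over grids of height $c'$ are distinguished already by $\Cc^{c'+1}$. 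The known IM-equivalence constructions (Lichter's) require base graphs of large degree, girth, \emph{and} vertex-connectivity, at least locally; grids and walls have degree and connectivity at most~$4$, so no choice of $\mathbb{Z}/2^m$ rescues the construction there. Second, even granting an IM-equivalent CFI pair over some base graph $B$ derived from $F$, you need $F$ \emph{itself} to have different homomorphism counts into the pair. Containing $B$ as a minor does not give this: the paper's Corollary~\ref{cor:iso-hom} only shows that the base graph $G$ distinguishes its own CFI graphs ($\hom_{\id}$ counts the solutions of the twisted linear system), and determining which other graphs $F$ distinguish a CFI pair is precisely the content of Roberson's oddomorphism theory, not a consequence of a treewidth comparison.

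The paper sidesteps the need for a full treewidth bound by invoking Seppelt's theorem as a dichotomy: since $\mathcal{F}^{\ast}$ is minor-closed, either it has bounded treewidth (refuted as you do) or it contains all planar graphs, so that $\ImEquiv{k}{\mathbb{P}}$ would have to refine quantum isomorphism. The second horn is then refuted by a \emph{single} carefully built counterexample: a planar base graph that is ``locally nice'' (a high-arity tree glued onto a wide grid), over which the $\mathbb{Z}_{2^i}$-CFI graphs are $\ImEquiv{k}{\mathbb{P}}$-equivalent by a local adaptation of Lichter's argument, yet are distinguished by homomorphism counts from the planar base graph itself and hence are not quantum isomorphic. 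If you want to salvage your direct approach, you would essentially have to reprove this planar construction and additionally solve the ``which $F$ distinguish CFI pairs'' problem for arbitrary high-treewidth $F$; the quantum-isomorphism detour avoids both.
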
	
The proof relies on CFI-like constructions such as the one from \cite{lichter2023separating} which was used by the first author to separate rank logic from polynomial time. We combine this with results by Roberson~\cite{roberson2022oddomorphisms} in order to obtain graphs which are invertible-map equivalent but not quantum isomorphic. As shown by the third author \cite{seppelt2023logical}, this suffices to conclude that invertible-map equivalence is not a homomorphism indistinguishability relation --  if it were, then it would have to be a refinement of quantum isomorphism.

\Cref{thm:main} also implies a negative answer to a question raised by Ó~Conghaile and Dawar~\cite{conghaile_game_2021}. Their work is part of a recent line of research aiming to characterise logical equivalences via a notion from category theory, namely as \emph{co-Kleisli isomorphism} for certain \emph{game comonads}. Ó~Conghaile and Dawar asked whether such a game comonad can be constructed for linear-algebraic logic. Employing a categorical Lovász-type theorem   \cite{dawar_lovasz-type_2021} that allows to infer the existence of a homomorphism indistinguishability relation from the existence of appropriate game comonads, we obtain the following result. To our knowledge, this is the first provable limitation of such comonadic characterisations.
\begin{theorem}
	\label{thm:no-comonad}
	For every $k \geq 6$,
	there is no finite-rank comonad $\mathbb{C}$ on the category of graphs such that $\ImEquiv{k}{\mathbb{P}}$ coincides with the isomorphism relation in the co-Kleisli category of~$\mathbb{C}$.
\end{theorem}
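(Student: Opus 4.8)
The plan is to obtain \Cref{thm:no-comonad} as a formal consequence of \Cref{thm:main}, with the categorical Lovász-type theorem of Dawar, Jakl, and Reggio~\cite{dawar_lovasz-type_2021} serving as the bridge from comonads to homomorphism counting. Fix $k \ge 6$ and assume, towards a contradiction, that there is a finite-rank comonad $\mathbb{C}$ on the category of graphs for which isomorphism in the co-Kleisli category $\mathsf{Kl}(\mathbb{C})$ coincides with $\ImEquiv{k}{\mathbb{P}}$. The finite-rank hypothesis is precisely what places $\mathbb{C}$ within the scope of~\cite{dawar_lovasz-type_2021}: their theorem guarantees, for every finite-rank comonad, a class $\mathcal{F}$ of finite graphs such that two graphs $G$ and $H$ are isomorphic in $\mathsf{Kl}(\mathbb{C})$ if and only if they admit the same number of homomorphisms from every $F \in \mathcal{F}$; that is, $\cong_{\mathsf{Kl}(\mathbb{C})}$ is the homomorphism indistinguishability relation over $\mathcal{F}$. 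Combining this with the assumed identity $\cong_{\mathsf{Kl}(\mathbb{C})} = \ImEquiv{k}{\mathbb{P}}$ shows that $\ImEquiv{k}{\mathbb{P}}$ is a homomorphism indistinguishability relation (over $\mathcal{F}$), contradicting \Cref{thm:main}. This contradiction proves the theorem for every $k \ge 6$.

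In more detail, I would carry this out in four steps. First, recall from~\cite{dawar_lovasz-type_2021} the definition of a finite-rank comonad and the exact statement of the accompanying Lovász-type theorem, specialising to graphs. Second, note that the hypothesised $\mathbb{C}$ satisfies these hypotheses by assumption, so that the theorem applies and yields a class $\mathcal{F}$ of finite graphs with $\cong_{\mathsf{Kl}(\mathbb{C})}$ equal to homomorphism indistinguishability over $\mathcal{F}$. Third, substitute $\cong_{\mathsf{Kl}(\mathbb{C})} = \ImEquiv{k}{\mathbb{P}}$ to conclude that $\ImEquiv{k}{\mathbb{P}}$ equals homomorphism indistinguishability over $\mathcal{F}$. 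Fourth, apply \Cref{thm:main}, which asserts that $\ImEquiv{k}{\mathbb{P}}$ is not a homomorphism indistinguishability relation at all, for the contradiction.

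I expect the argument itself to be routine bookkeeping once \Cref{thm:main} is in hand; the only delicate point is the interface with~\cite{dawar_lovasz-type_2021}. One has to confirm that ``finite-rank comonad on the category of graphs'', as the statement uses the term, really falls under their Lovász-type theorem (rather than only game comonads produced by some specific construction), that the morphisms counted in that theorem are genuine graph homomorphisms and not morphisms of the co-Kleisli category or an enriched variant, and that the witnessing class $\mathcal{F}$ consists of finite graphs, so that what one obtains is a homomorphism indistinguishability relation in the precise sense used here. All the real difficulty lies upstream, in \Cref{thm:main}: it is there that one must construct graphs which are $\ImEquiv{k}{\mathbb{P}}$-equivalent but distinguished by some homomorphism count, which the paper does through a CFI-type construction and a detour via quantum isomorphism.
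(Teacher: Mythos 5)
Your proposal is correct and matches the paper's argument exactly: the paper derives \cref{thm:no-comonad} from \cref{thm:main} via the Lovász-type theorem for finite-rank comonads (\cref{thm:dawar}, from Dawar–Jakl–Reggio and Reggio), concluding that a comonad characterising $\ImEquiv{k}{\mathbb{P}}$ would make it a homomorphism indistinguishability relation. The interface concerns you flag are precisely what the paper addresses by stating \cref{thm:dawar} for arbitrary finite-rank comonads on graphs with a witnessing class of finite graphs and ordinary homomorphism counts.
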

In this context, the concept of a comonad is best explained by recalling the \emph{pebbling comonad}~$\mathbb{T}_k$ introduced by Abramsky, Dawar, and Wang~\cite{abramsky_pebbling_2017}. Designed to provide a categorical formulation of the $k$-pebble game from finite model theory, it can be thought of as map sending structures to structures encoding Spoiler's plays in this game. Being a comonad, it gives rise to a category, the co-Kleisli category, whose objects are graphs and whose morphisms can be interpreted as winning strategies for Duplicator in the $k$-pebble game. Various notions from finite model theory can now be recovered from this construction: For example, a graph has treewidth less than $k$ if and only if it admits a $\mathbb{T}_k$-coalgebra. Crucially, two graphs satisfy the same $k$-variable counting logic sentences if and only if they are isomorphic in the co-Kleisli category of~$\mathbb{T}_k$.
Subsequently, comonads for many fragments \cite{abramsky_pebbling_2017,abramsky_relating_2021,montacute_pebble-relation_2022} and extensions \cite{conghaile_game_2021} of first-order logic have been constructed. 
They have in common that their co-Kleisli morphisms and isomorphisms encode winning strategies for Duplicator in one-sided, symmetric, and bijective games. Our \cref{thm:no-comonad} rules out that invertible-map equivalence can be dealt with along similar lines.

Comonads on the category of graphs and homomorphism indistinguishability are intimately connected.
Every homomorphism indistinguishability relation over a graph class with mild closure properties can be characterised as co-Kleisli isomorphism over a comonad~\cite{abramsky_discrete_2022}.
Conversely, the existence of co-Kleisli isomorphisms over some comonad that sends finite structures to finite structures can be characterised as a homomorphism indistinguishability relation \cite{dawar_lovasz-type_2021,reggio_polyadic_2021}.
This fundamental connection between comonads and homomorphism counting relations is exactly the reason why we can conclude the impossibility of the former from the impossibility of the latter: There is no finite-rank comonad for linear-algebraic logic.

 Hence, linear-algebraic logic seems to be of a very different nature than the weaker counting logic as it does not connect with the theory revolving around homomorphism counting and game comonads. This raises the question as to what is the precise reason for this situation. What makes a logic ``nice enough'' to fit within the homomorphism indistinguishability and comonadic framework? 
We can at least say that the shortcomings of LA in this respect are not due to it being strictly stronger than counting logic. There does exist an extension of counting logic which admits a comonad construction and thereby a homomorphism indistinguishability relation: This is $k$-variable infinitary FO enriched with all possible $n$-ary \emph{generalised quantifiers} over one-dimensional interpretations~\cite{conghaile_game_2021}. An $n$-ary generalised quantifier (also known as \emph{Lindström quantifier}) is essentially a membership oracle for a class $\Kk$ (of at most $n$-ary structures) that allows to test whether some structure~$\BB$ interpretable in the given structure $\AA$ is in $\Kk$.
 LA lies somewhere between counting logic and its extension by \emph{all} binary Lindström quantifiers because LA is infinitary FO extended with a \emph{proper subclass} of binary Lindström quantifiers. As a side node, counting logic itself is nothing but the extension of FO with \emph{all unary} Lindström quantifiers~\cite{kolaitis1995generalized}. Hence, we can describe the situation as follows: Whenever a Lindström-extension of infinitary FO contains \emph{all} one-dimensional Lindström quantifiers up to a given arity $n$, then it admits a comonad. If it only contains a subset of these Lindström 
 quantifiers, then this is not necessarily the case (our \cref{thm:main} is true even when we restrict LA to one-dimensional interpretations).

 Finally, another direction that we explore in this paper is counting homomorphism  in \emph{finite prime fields}. A large part of the theory of homomorphism indistinguishability that has been established so far works over the natural numbers. Given the fact that the linear-algebraic operators in LA are over finite fields, one might a priori suspect that the appropriate homomorphism indistinguishability relation must be based on homomorphism counts modulo a prime. However, this can also be ruled out, even when the homomorphisms are counted modulo several primes (\cref{thm:IMnoHomRelationModP}).
 
 As a positive result concerning homomorphism counting modulo primes, we find that Dvo\v{r}\'ak's  proof \cite{dvorak_recognizing_2010} can be adapted to finite fields: Two graphs admit the same numbers of homomorphisms modulo $p$ from all graphs of treewidth less than $k$ if and only if they are equivalent with respect to $k$-variable FO with mod-$p$ counting quantifiers (\cref{thm:dvorak-mod-p}).

\section{Preliminaries}
All structures in this article are relational and finite. When we speak of graphs, we mean $\{E\}$-structures, where $E$ is binary. When nothing else is specified, graphs are undirected. General relational structures are usually denoted $\AA$ or $\BB$, with $A$ or $B$ being used for the universe. Structures that are graphs will be written as $G = (V,E)$.
The set $\{1,2,...,n\}$ is denoted as $[n]$, and $\bbP \subseteq \bbN$ denotes the set of primes.

\subparagraph*{Counting logic.} The logic $\C{k}$ is the $k$-variable fragment of first-order logic with counting quantifiers of the form $\exists^{\geq i} x$, for every $i \in \bbN$. The semantics is as expected, i.e., a structure~$\AA$ satisfies a sentence $\exists^{\geq i} x \phi(x)$ if there exist at least $i$ distinct $a \in A$ such that $\AA \models \phi(a)$. We write $\AA \equivC{k} \BB$ if $\AA$ and $\BB$ are $\C{k}$-equivalent, i.e., they satisfy exactly the same $\C{k}$-sentences.

\subparagraph*{Lindström quantifiers and interpretations.} A more general way to extend FO is with \emph{Lindström} quantifiers (also known as \emph{generalised} quantifiers). A Lindström quantifier is essentially a membership oracle for a class of structures.
Before introducing Lindström quantifiers, we need the concept of logical interpretations.
Let $\sigma, \tau$ be vocabularies with $\tau = \{R_1,...,R_m\}$ where each $R_i$ is a relation symbol of arity $r_i$, and let $\Ll$ be a logic. An
\emph{$\ell$-dimensional $\Ll[\sigma,\tau]$-interpretation} $I$ is an $\Ll$-definable mapping from $\sigma$\nobreakdash-structures to $\tau$\nobreakdash-structures.
The elements of the $\tau$-structure are sets of $\ell$-tuples in the original $\sigma$-structure.
Generally, interpretations can take a tuple of parameters $\bar{z}$:
An $\ell$-dimensional $\Ll$-interpretation (with parameters) is a tuple
\[
I(\bar{z}) = \big(\phi_\delta(\bar{x}, \bar{z}), \phi_\approx(\bar{x},\bar{y}, \bar{z}), \phi_{R_1}(\bar{x}_1,...,\bar{x}_{r_1},\bar{z}), ..., \phi_{R_m}(\bar{x}_1,...,\bar{x}_{r_m},\bar{z})\big),
\]
where $\bar{x}, \bar{y}, \bar{x}_i$ are $\ell$-tuples of variables, and 
$\phi_\delta,\phi_\approx, \phi_{R_i}$ are $\sigma$-formulas of $\Ll$. The interpretation $I(\bar{z})$ defines a partial mapping from $\sigma$-structures to $\tau$-structures. For a given $\sigma$-structure $\AA$ and an assignment $\bar{z} \mapsto \bar{a}$, we define $\BB$ be as a $\tau$-structure with universe $B := \setcond{ \bar{b} \in A^k }{ \AA \models \phi_\delta(\bar{b}, \bar{a}) }$ and relations $R_i^\BB := \setcond{  (\bar{b}_1,...,\bar{b}_{r_i}) \in B^{r_i}  }{ \AA \models \phi_{R_i}(\bar{b}_1,...,\bar{b}_{r_i},\bar{a})  }$, for all $i \in [m]$. From this structure, the ``output'' $I(\AA, \bar{z} \mapsto \bar{a})$ is obtained by factoring out the equivalence classes defined by $\phi_\approx$. Formally, let $\Ee := \setcond{ (\bar{b}_1, \bar{b}_2) \in A^{2k} }{ \AA \models \phi_{\approx}(\bar{b}_1, \bar{b}_2,\bar{a})  }$. If $\Ee$ is not a congruence relation on $\BB$, then $I(\AA, \bar{z} \mapsto \bar{a})$ is undefined.
Otherwise, $I(\AA, \bar{z} \mapsto \bar{a})$ is defined to be the quotient structure $\BB/\Ee$.

Let $\Kk$ be a class of $\tau$-structures and $\Ll$ be a logic. The extension $\Ll(\Qq_\Kk)$ of $\Ll$ by the \emph{Lindström quantifier} for $\Kk$ is obtained by closing $\Ll$ under the following formula formation rule: Whenever $I(\bar{x})$ is an $\Ll(\Qq_\Kk)[\sigma,\tau]$-interpretation, then $\Qq_\Kk I(\bar{x})$ is a $\tau$-formula of $\Ll(\Qq_\Kk)$ with free variables $\bar{x}$. For a $\tau$-structure $\AA$ and an assignment $\bar{x} \mapsto \bar{a}$, it holds $(\AA, \bar{a}) \models \Qq_\KK I(\bar{x})$ if $I(\AA, \bar{x} \mapsto \bar{a}) \in \Kk$. If $\mathbf{Q}$ is a class of Lindström quantifiers, then $\Ll(\mathbf{Q})$ denotes the extension by all Lindström quantifiers in $\mathbf{Q}$. When we speak of the one-dimensional restriction of such a logic, we mean that in formulas $\Qq_\Kk I(\bar{x})$, the interpretation $I$ has to be one-dimensional.

\subparagraph*{Linear-algebraic logic and invertible-map equivalences.} Linear-algebraic logic (LA) was introduced by Dawar, Grädel, and Pakusa~\cite{limitationsOfIM} as an extension of infinitary first-order logic with all isomorphism-invariant linear-algebraic operators. As such, it extends \emph{rank logic} \cite{rankLogicDiscovery,GradelPakusa19}. Rank logic in turn is an extension of FO with operators for determining the rank of a matrix that is definable in the input structure. In linear-algebraic logic, formulas have access to \emph{any} isomorphism-invariant parameter of a definable matrix, not only the rank. The motivation for studying such a logic was to show that no linear-algebraic operators whatsoever can enhance the power of FO such that its $k$-variable fragment distinguishes all non-isomorphic structures, for some fixed $k$. For the detailed definition of LA, we refer to \cite{limitationsOfIM}; in short, LA is the closure of FO under infinite conjunctions and disjunctions and under all Lindström quantifiers of the form $\Qq_f^{t,\ell}I(\bar{x})$. Here, $I$ is an $\ell$-dimensional interpretation and $f$ is any \emph{linear-algebraic} function over some field $\bbF$ with some arity $m  \geq 1$ that maps tuples $(M_1,...,M_m)$ of linear transformations/matrices over $\bbF$ to natural numbers. For instance, the rank operator is such a function with $m=1$ that maps a given matrix to its rank. The condition that $f$ is \emph{linear-algebraic} means that $f$ is invariant under vector space isomorphisms.
Formally, this means that whenever two sequences of matrices $M_1,...,M_m$ and $M'_1,...,M'_m$ over $\bbF$ are \emph{simultaneously similar}, then $f(M_1,...,M_m) = f(M'_1,...,M'_m)$. Simultaneous similarity means that there is an invertible matrix $S$ over $\bbF$ such that $M_i \cdot S = S \cdot M'_i$ for all $i \in [m]$. That is to say, there exists an isomorphism between the underlying vector spaces that maps each linear transformation $M_i$ to the corresponding $M'_i$ that operates on the isomorphic space. 
A structure $\AA$ satisfies $\Qq_f^{t,\ell}I(\bar{x})$ if $I(\AA)$ is a structure that encodes a tuple $(M_1,...,M_m)$ of matrices and satisfies $f(M_1,...,M_m) \geq t$.

Fragments of LA yield interesting equivalence relations between structures, which are approximations of isomorphism. The fragments that are studied in the literature (e.g.\ in \cite{limitationsOfIM, lichter2023separating, limitationsOfIM}) are parametrized by $k \in \bbN$ and $Q \subseteq \bbP$. The logic $\LA{k}{Q}$ is the $k$-variable fragment of LA
that only uses linear-algebraic operators over finite fields of characteristic $p \in Q$.
The equivalence relation induced by $\LA{k}{Q}$ is called \emph{invertible-map equivalence}. We write $\AA \ImEquiv{k}{Q} \BB$ if the two structures satisfy exactly the same $\LA{k}{Q}$-sentences. Invertible-map equivalence of two given structures can be tested in polynomial time \cite{limitationsOfIM}.

The logic $\LA{k}{Q}$ is at least as expressive as $\C{k}$ because the quantifier $\exists^{\geq i} x \phi(x)$ can be simulated with the rank operator~\cite{limitationsOfIM}: We have $\AA \models \exists^{\geq i} x \phi(x)$ if and only if the diagonal matrix which has a $1$-entry at exactly those positions $(a,a) \in A^2$ such that $\AA \models \phi(a)$ has rank at least~$i$. This works irrespective of which primes are in $Q$. Hence, the relation $\ImEquiv{k}{Q}$ is at least as fine as $\equivC{k}$; in fact, it is strictly finer because there exist generalised CFI-structures that are $\equivC{k}$-equivalent but distinguishable in rank logic \cite{rankLogicDiscovery}.

Invertible-map equivalence also has a characterization in terms of a Spoiler-Duplicator game called the \emph{invertible-map game}~\cite{DawarHolm12}.
We follow the exposition in \cite{lichter2023separating}.
Let $Q \subseteq \mathbb{P}$ and $k \in \bbN$. The IM-game $\Mm^{k,Q}$ is played on two structures $\AA$ and $\BB$. There are $k$ pairs of pebbles labelled with $1,\dots, k$.
A position in the game is a pair $\bar{a},\bar{b}$ of tuples $\bar{a} \in A^m$ and $\bar{b} \in B^m$ for some $m \leq k$.
In position $\bar{a},\bar{b}$ corresponding pebbles,
that is, pebbles with the same label, are placed on $a_i$ and $b_i$ for every $i \in [\ell]$.
In the initial position, all pebbles are placed beside the structures.
If $|A| \neq |B|$, then Spoiler wins immediately.
Otherwise, a round of the game is played as follows:
\begin{enumerate}
	\item Spoiler chooses a prime $p \in Q$ and a number $\ell$ satisfying $2\ell \leq k$. He picks up $2\ell$ pebbles from $\AA$  and the corresponding pebbles (with the same labels) from $\BB$.
	\item Duplicator picks a partition $\Pp$ of $A^\ell \times A^\ell$ and another one $\Pp'$ of $B^\ell \times B^\ell$ such that $|\Pp| = |\Pp'|$. Furthermore, she picks a bijection $f \colon \Pp \to \Pp'$ 
	and an invertible $(A^\ell \times B^\ell)$-matrix $S$ over $\bbF_p$ such that 
	$\chi^P = S \cdot \chi^{f(P)} \cdot S^{-1}$ for every $P \in \Pp$. Here, $\chi^P$ denotes the characteristic matrix of $P$, which has a $1$-entry at position $(\bar{u},\bar{v})$ if and only if $\bar{u}\bar{v} \in P$.
	\item Spoiler chooses a block $P \in \Pp$, a tuple $\bar{u} \in P$, and a tuple $\bar{v} \in f(P)$. Then for each $i \in [2\ell]$, he places one of the pebbles picked up from $\AA$ on~$u_i$ and the corresponding one picked up from~$\BB$ on $v_i$.	
\end{enumerate}	
After a round, Spoiler wins the game if the pebbles do not define a partial isomorphism or if Duplicator was not able to respond with a matrix satisfying the condition above.
Note that this condition states that the characteristic matrices of the blocks
are simultaneously similar.

\begin{lemma}
	Let $k \in \bbN$, $Q \subseteq \bbP$, $\AA$ and $\BB$ be structures,
	$\bar{a} \in A^k$, and $\bar{b} \in B^k$.
	Then $(\AA, \bar{a}) \ImEquiv{k}{Q} (\AA, \bar{b})$
	if and only if Duplicator has a winning strategy in the invertible-map game
	$\Mm^{k,Q}$ on $\AA$ and $\BB$ in position $\bar{a},\bar{b}$.
\end{lemma}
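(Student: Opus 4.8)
The plan is to establish the equivalence between the logical characterisation of $\ImEquiv{k}{Q}$ and the existence of a Duplicator winning strategy in the invertible-map game $\Mm^{k,Q}$ by a standard back-and-forth-style argument, adapting the known proof for the invertible-map game (originally due to Dawar and Holm, with the version of $\LA{k}{Q}$ following Dawar, Grädel, and Pakusa). The two directions are proven separately.

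**First I would** handle the direction from a winning strategy to logical equivalence, proceeding by induction on the structure of $\LA{k}{Q}$-formulas. The key observation is that Duplicator's condition in step 2 of each round — that there is a single invertible matrix $S$ over $\bbF_p$ simultaneously conjugating the characteristic matrices $\chi^P$ of each block $P$ to $\chi^{f(P)}$ — is precisely the statement that the tuples of matrices definable over the two structures via an $\ell$-dimensional interpretation are simultaneously similar. Since every linear-algebraic function $f$ occurring in a $\LA{k}{Q}$-operator $\Qq_f^{t,\ell}I(\bar{x})$ is by definition invariant under simultaneous similarity, a Duplicator winning strategy in position $\bar a, \bar b$ guarantees that $(M_1,\dots,M_m)$ computed from $I(\AA,\bar a)$ and the corresponding tuple from $I(\BB,\bar b)$ agree on $f$, hence on whether $f(M_1,\dots,M_m) \geq t$. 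The atomic case uses that the pebbles define a partial isomorphism; the infinitary Boolean connectives are immediate; and for a quantifier $\Qq_f^{t,\ell}I$ one uses Duplicator's freedom to choose the partition $\Pp$ of $A^\ell \times A^\ell$ finely enough to separate, inside each block, the tuples on which a given subformula holds, so that the bijection $f$ together with the conjugating matrix $S$ transports both the structure of the interpreted matrices and the truth of the relevant subformulas. One must be slightly careful that a single partition and single matrix $S$ suffice to handle all the formulas $\phi_\delta, \phi_\approx, \phi_{R_i}$ of the interpretation at once; this is where the simultaneous (rather than individual) similarity condition in the game is essential.

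**The hard part will be** the converse: if $(\AA,\bar a) \ImEquiv{k}{Q} (\BB,\bar b)$ then Duplicator wins from $\bar a, \bar b$. Here one defines Duplicator's strategy so that every reachable position $\bar a', \bar b'$ (with $|\bar a'| = |\bar b'| \leq k$) satisfies $(\AA,\bar a') \ImEquiv{k}{Q} (\BB,\bar b')$, i.e. the relation is a back-and-forth system for the game. The delicate point is constructing, in step 2 of a round where Spoiler has picked up $2\ell$ pebbles and a prime $p$, a partition $\Pp$ of $A^\ell \times A^\ell$, a partition $\Pp'$ of $B^\ell \times B^\ell$, a bijection $f\colon\Pp\to\Pp'$, and an invertible $S$ over $\bbF_p$ with $\chi^P = S\cdot\chi^{f(P)}\cdot S^{-1}$, such that for every block $P$, every $\bar u \in P$, and every $\bar v \in f(P)$, the resulting extended position remains $\ImEquiv{k}{Q}$-equivalent. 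The natural candidate is to let $\Pp$ be the partition of $A^\ell\times A^\ell$ into classes of the $\LA{k}{Q}$-type (over $2\ell$ free variables filling the picked-up pebble slots), and similarly for $\Pp'$; one must then argue that (i) these two partitions have the same cardinality and are matched by the obvious type-preserving bijection $f$, and (ii) a conjugating invertible matrix $S$ over $\bbF_p$ actually exists. Claim (ii) is exactly where one invokes the presence of the relevant linear-algebraic operators in $\LA{k}{Q}$ over characteristic $p$: if no such $S$ existed, the tuples of characteristic matrices would fail to be simultaneously similar, and since similarity type over $\bbF_p$ is itself captured by a linear-algebraic invariant (e.g. via ranks of suitable matrix combinations, which are $\LA{k}{Q}$-definable for $p \in Q$), this would contradict the assumed $\ImEquiv{k}{Q}$-equivalence. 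Finally one checks that in the resulting position the pebbles define a partial isomorphism (immediate from $\ImEquiv{k}{Q}$-equivalence preserving atomic types) and that the invariant is maintained, so Spoiler never wins.

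Wrapping up, both directions together give the stated biconditional; the argument is essentially the one of \cite{DawarHolm12,limitationsOfIM} transported to the present notation, and the only genuinely substantive input is the definability of simultaneous-similarity type over $\bbF_p$ by linear-algebraic operators, which pins down exactly why the matrix $S$ in the game can be found whenever the structures are $\LA{k}{Q}$-equivalent.
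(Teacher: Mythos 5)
Your sketch is sound and follows essentially the same route as the paper, which does not prove this lemma itself but states that it ``follows from a combination of \cite{DawarHolm12, limitationsOfIM}'' --- i.e.\ from precisely the back-and-forth argument you outline (induction on formulas for one direction; for the other, type partitions as blocks together with the $\LA{k}{Q}$-definability of the simultaneous-similarity type over $\bbF_p$, which forces the existence of the conjugating matrix $S$). The only substantive content of the paper's ``proof'' beyond the citation is to flag the two adaptations needed --- the game in the cited works is additionally parametrised by the dimension $2\ell$ of the interpretations, and \cite{DawarHolm12} treats only finite sets of primes since its logics are not infinitary --- both of which your sketch absorbs without difficulty (infinite conjunctions and disjunctions are handled trivially in your induction).
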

The lemma follows from a combination of~\cite{DawarHolm12, limitationsOfIM},
in which the game is also parametrised by the dimension $2\ell$ of the interpretations. In~\cite{DawarHolm12}, only finite sets of primes are considered because the logics considered there are not infinitary. The arguments straight-forwardly apply to arbitrary sets of primes.

\subparagraph{Homomorphism Indistinguishability.}
Let $F$ and $G$ be graphs. A \emph{homomorphism} $\psi$ from $F$ to $G$ is a map $\psi \colon V(F) \to V(G)$ such that $\psi(u)\psi(v) \in E(G)$ for every edge $uv \in E(F)$. We write $\hom(F, G)$ for the number of homomorphisms from $F$ to $G$.
Homomorphism counts induce equivalence relations on graphs: Let $\Ff$ be a class of graphs. Two graphs $G$ and $H$ are \emph{homomorphism indistinguishable} over $\mathcal{F}$, denoted by $G \homEquiv{\Ff} H$, if for every $F \in \Ff$, it holds $\hom(F,G) = \hom(F,H)$.
An equivalence relation $\approx$ between graphs is a \emph{homomorphism indistinguishability relation} if there exists a graph class $\mathcal{F}$ such that $\approx$ and $\equiv_{\mathcal{F}}$ coincide.

In this article, we call two graphs \emph{quantum isomorphic} if they are homomorphism indistinguishable over all planar graphs.
The term was introduced in \cite{atserias_quantum_2019} as a quantum information theoretic notion. The titular result of \cite{mancinska_quantum_2020} asserts that it is the same as homomorphism indistinguishability over all planar graphs. We do not need the original quantum based definition here.

\section{Homomorphisms to CFI-Like Graphs over Arbitrary Abelian Groups}
	\label{sec:roberson}
	
	Roberson~\cite{roberson2022oddomorphisms} studied homomorphisms to CFI-like graphs constructed over $\bbZ_2$.
	This variant of CFI graphs was introduced by Fürer~\cite{Furer2001}.
	Neuen and Schweitzer~\cite{NeuenSchweitzer17} generalised the more classical CFI construction from $\bbZ_2$ to arbitrary finite abelian groups.
	We combine both constructions and generalize the CFI construction from~\cite{Furer2001,roberson2022oddomorphisms}
	to arbitrary finite abelian groups.
	We fix such a group~$\Gamma$ throughout this section and write its operation as addition.
	
	For a graph $G$ and a vertex $v \in V(G)$, write $E(v) \coloneqq \{e \in E(G) \mid v \in e\}$ for the set of edges incident to $v$.
	We consider vectors $U \in \Gamma^X$ for e.g.\ $X = V(G)$.
	For $x \in X$, write $U(x) \in \Gamma$ for the $x$-th entry of $U$.
 	We write $\sum U$ for $\sum_{x \in X} U(x)$.
	If convenient, we denote by $x$ also the vector in $\Gamma^X$ with entry $1$ at the $x$-th position and $0$ everywhere else.
	
	\begin{definition}
		\label{def:robersonCFI}
		
		A \emph{base graph} is a connected graph.
		Let $G$ be a base graph and $U \in \Gamma^{V(G)}$.
	For every vertex $u$ of $G$, we define
	$$V_u := \setcond*{(u, S)}{S \in \Gamma^{E(u)}, \sum S = U(u)}.$$
	The \emph{CFI graph} $\CFI[\Gamma, G, U]$ over the finite abelian group $\Gamma$ and
	the base graph $G$ has vertex set $\bigcup_{u \in V(G)} V_u$
	and edge set
	\[\setcond[\big]{\{(u,S),(v,T)\}}{(u,S) \in V_u, (v,T) \in V_v, uv \in E(G), S(uv) + T(uv) = 0}.\]
	We say that the vertices in $V_u$ have \emph{origin} $u$.
	\end{definition}
	
	\begin{lemma}[restate=lemIso, name = ]
		\label{lem:iso}
		Let $G$ be a base graph and $U, U' \in \Gamma^{V(G)}$. If $\sum U = \sum U'$, then $\CFI[\Gamma, G,U] \cong \CFI[\Gamma, G,U']$.
	\end{lemma}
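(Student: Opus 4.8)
The plan is to reduce the statement to a single \emph{elementary move} along an edge and then exploit connectedness of $G$. Call a vector $W \in \Gamma^{V(G)}$ an \emph{edge-difference} along $uv \in E(G)$ if, for some $\gamma \in \Gamma$, it has $\gamma$ in coordinate $u$, $-\gamma$ in coordinate $v$, and $0$ in every other coordinate; note that then $\sum W = 0$. I would prove two sublemmas: (i) if $U'$ is obtained from $U$ by adding an edge-difference, then $\CFI[\Gamma,G,U] \cong \CFI[\Gamma,G,U']$; and (ii) if $G$ is connected and $\sum U = \sum U'$, then $U'$ is obtained from $U$ by adding finitely many edge-differences. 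Composing the isomorphisms from (i) along the sequence supplied by (ii) then yields the lemma.

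For (i), fix the edge $uv$ and $\gamma \in \Gamma$, so that $U'(u) = U(u)+\gamma$, $U'(v) = U(v)-\gamma$, and $U'=U$ on all other vertices. Define $\phi \colon \CFI[\Gamma,G,U] \to \CFI[\Gamma,G,U']$ by leaving $(w,R)$ untouched for $w \notin \{u,v\}$, sending $(u,S)$ to $(u,S^{+})$ where $S^{+}$ is $S$ with $\gamma$ added to the $uv$-entry, and sending $(v,T)$ to $(v,T^{-})$ where $T^{-}$ is $T$ with $\gamma$ subtracted from the $uv$-entry. This is well defined because adding $\gamma$ to a single entry raises $\sum S$ by $\gamma$, so $\phi$ maps $V_u$ (computed for $U$) bijectively onto $V_u$ (computed for $U'$), and symmetrically for $V_v$; on all other $V_w$ it is the identity, hence $\phi$ is a bijection on vertex sets, with inverse the analogous map built from $-\gamma$. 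Edge preservation is a short case check: an edge $\{(w_1,S_1),(w_2,S_2)\}$ of the CFI graph is governed by the condition $S_1(w_1w_2) + S_2(w_1w_2) = 0$; if $\{w_1,w_2\} \neq \{u,v\}$ then $\phi$ does not alter the $w_1w_2$-entries at all, and if $\{w_1,w_2\} = \{u,v\}$ then the $+\gamma$ at $u$ and the $-\gamma$ at $v$ cancel in the sum. So $\phi$ is an isomorphism in both directions.

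For (ii), I would induct on $|V(G)|$. If $|V(G)| = 1$ then $U = U'$ and there is nothing to do. Otherwise fix a spanning tree $T$ of $G$ and a leaf $\ell$ of $T$ with tree-neighbour $m$, and let $W$ be the edge-difference along $\ell m$ with value $U'(\ell) - U(\ell)$ at $\ell$, so that $U + W$ agrees with $U'$ at $\ell$. The graph $G - \ell$ is still connected, since it contains the tree $T - \ell$; and $U+W$ and $U'$, restricted to $V(G)\setminus\{\ell\}$, have equal sums (the total sums agree and the $\ell$-entries agree). By the induction hypothesis they differ by finitely many edge-differences along edges of $G - \ell$, which are also edges of $G$; adding $W$ back expresses $U'$ as $U$ plus edge-differences along edges of $G$.

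I do not expect a deep obstacle here. The only points that need genuine care are, in (i), checking that $\phi$ lands in the correctly twisted vertex sets and verifying edge preservation in each of the (few) cases, and, in (ii), the observation that deleting a leaf of a spanning tree keeps the graph connected, which is what makes the induction close.
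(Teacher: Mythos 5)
Your proposal is correct and follows essentially the same approach as the paper's: an elementary move of weight along a single edge via the map $(u,S)\mapsto(u,S+\gamma\cdot uv)$, $(v,T)\mapsto(v,T-\gamma\cdot uv)$, composed using the connectedness of $G$. The only difference is a small refinement in bookkeeping: the paper moves a fixed unit and informally appeals to composition along paths, whereas you allow an arbitrary $\gamma\in\Gamma$ per edge and make the composition step explicit via a spanning-tree/leaf-deletion induction, which is marginally more careful when $\Gamma$ is not cyclic.
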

	The proof of \cref{lem:iso} uses well-known arguments for CFI graphs (see Appendix~\ref{app:roberson}).
For a graph $G$, and $U \in \Gamma^{V(G)}$, consider the \emph{projection map} $\rho \colon \CFI[\Gamma, G,U] \to G$ sending $(v, S)$ to~$v$. Clearly, $\rho$ is a homomorphism. For a graph $F$ and $\psi \colon F \to G$, define
	\[
	\Hom_\psi(F, \CFI[\Gamma, G,U]) \coloneqq \setcond[\big]{\phi \in \Hom(F, \CFI[\Gamma, G,U])}{\rho \circ \phi = \psi}.
	\]
	The sets $\Hom_\psi(F, \CFI[\Gamma, G,U])$ for all $\psi \colon F \to G$ partition the set $\Hom(F, \CFI[\Gamma, G,U])$ of homomorphisms $F \to \CFI[\Gamma, G,U]$.
	Write $\hom_\psi(F, \CFI[\Gamma, G,U])$ for the cardinality of $\Hom_\psi(F, \CFI[\Gamma, G,U])$.
	
	\begin{lemma} \label{lem:bijection}
		Let $F$ be a graph and $G$ be a base graph.
		Let $U \in \Gamma^{V(G)}$ and fix $\psi \in \Hom(F, G)$.
		Consider the system of equations $\mathsf{Hom}(F, G, U, \psi)$ with variables $x_e^a$ for all $a \in V(F)$ and $e \in E(\psi(a))$ and equations
		\begin{align}
			\sum_{e \in E(\psi(a))} x_e^a &= U(\psi(a)) && \text{for all } a \in V(F), \label{sys1}\\
			x_e^a + x_e^b &= 0 && \text{for all } ab \in E(F) \text{ and } e = \psi(ab) \in E(G).\label{sys2}
		\end{align}
		Then the number of solutions to $\mathsf{Hom}(F, G, U, \psi)$ over $\Gamma$ is $\hom_\psi(F, \CFI[\Gamma, G,U])$.
	\end{lemma}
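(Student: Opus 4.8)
The plan is to prove the statement by exhibiting an explicit bijection between $\Hom_\psi(F, \CFI[\Gamma, G,U])$ and the set of solutions of $\mathsf{Hom}(F, G, U, \psi)$ over $\Gamma$, after which the equality of the two cardinalities is immediate. First I would observe that a map $\phi \in \Hom_\psi(F, \CFI[\Gamma, G,U])$ is completely described by a family $(S_a)_{a \in V(F)}$ of vectors $S_a \in \Gamma^{E(\psi(a))}$: the condition $\rho \circ \phi = \psi$ says precisely that $\phi(a)$ has origin $\psi(a)$, so $\phi(a) = (\psi(a), S_a)$ for a unique $S_a$, and membership of $(\psi(a), S_a)$ in $V_{\psi(a)}$ is equivalent to $\sum S_a = U(\psi(a))$. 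Conversely, any such family with $\sum S_a = U(\psi(a))$ for all $a$ defines a map $a \mapsto (\psi(a), S_a)$ into the vertex set of $\CFI[\Gamma, G,U]$ that satisfies $\rho \circ \phi = \psi$.

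Next I would translate the edge condition. Fix such a family $(S_a)_a$ with the induced vertex map $\phi$, and let $ab \in E(F)$. Since $\psi$ is a homomorphism (into the simple base graph $G$, so $\psi(a) \neq \psi(b)$), the edge $e \coloneqq \psi(ab) = \psi(a)\psi(b)$ lies in $E(\psi(a)) \cap E(\psi(b))$, hence $S_a(e)$ and $S_b(e)$ are both defined. By \cref{def:robersonCFI}, the unordered pair $\{(\psi(a), S_a),(\psi(b), S_b)\}$ is an edge of $\CFI[\Gamma, G,U]$ if and only if $S_a(e) + S_b(e) = 0$. Therefore the vertex map $\phi$ is a homomorphism $F \to \CFI[\Gamma, G,U]$ if and only if $S_a(\psi(ab)) + S_b(\psi(ab)) = 0$ for every edge $ab \in E(F)$.

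Finally, I would set $x_e^a \coloneqq S_a(e)$ for all $a \in V(F)$ and $e \in E(\psi(a))$. This is a bijection between families $(S_a)_a$ with $S_a \in \Gamma^{E(\psi(a))}$ and assignments to the variables of $\mathsf{Hom}(F, G, U, \psi)$, under which the constraint $\sum S_a = U(\psi(a))$ becomes exactly equation \eqref{sys1} and the edge condition $S_a(\psi(ab)) + S_b(\psi(ab)) = 0$ becomes exactly equation \eqref{sys2}. Composing the three correspondences shows that $\phi \mapsto (x_e^a)_{a,e}$ restricts to a bijection from $\Hom_\psi(F, \CFI[\Gamma, G,U])$ onto the solution set of $\mathsf{Hom}(F, G, U, \psi)$, which yields the claimed equality $\hom_\psi(F, \CFI[\Gamma, G,U]) = $ (number of solutions).

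This argument is essentially a bookkeeping exercise unfolding the definitions, so I do not expect a genuine obstacle. The only point that needs a moment of care is verifying that whenever $ab \in E(F)$ the edge $\psi(ab)$ is incident to both $\psi(a)$ and $\psi(b)$, so that both variables $x_{\psi(ab)}^a$ and $x_{\psi(ab)}^b$ actually occur in \eqref{sys2} and the equation has the stated form; this follows at once from $\psi$ being a graph homomorphism into $G$.
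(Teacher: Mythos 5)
Your proposal is correct and follows essentially the same route as the paper: both proofs unfold the definitions and exhibit the bijection $\phi \leftrightarrow (x_e^a)$ given by $x_e^a = S_a(e)$ where $\phi(a) = (\psi(a), S_a)$, checking that the vertex-membership condition corresponds to \eqref{sys1} and the adjacency condition to \eqref{sys2}. The only cosmetic difference is direction: the paper starts from a solution $x$ and builds $\phi_x$, then checks injectivity and surjectivity, whereas you start from $\phi$ and read off a solution; the underlying correspondence and the verifications are the same.
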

	\begin{proof}
		The proof is by giving a bijection between the solution set and  $\Hom_\psi(F, \CFI[\Gamma, G,U])$.
		Let $x = (x_e^a)_{a \in V(F), e \in E(\psi(a))}$ be a solution to $\mathsf{Hom}(F, G, U, \psi)$. Define a homomorphism $\phi_x \in \Hom_\psi(F, \CFI[\Gamma, G,U])$ via 
		$\phi_x(a) \coloneqq \big(\psi(a), (x^a_e)_{e \in E(\psi(a))}\big)$.
		\Cref{sys1} guarantees that this is indeed a map from the vertices of $F$ to the ones of $\CFI[\Gamma, G,U]$.
		If $a$ and $b$ are adjacent in $F$, then so are $\psi(a)$ and $\psi(b)$ in $G$.
		Furthermore, $x_{\psi(ab)}^a + x_{\psi(ab)}^b = 0$ by \cref{sys2}.
		Hence, $\phi_x(a)$ and $\phi_x(b)$ are adjacent in $\CFI[\Gamma, G,U]$.
		
		It is easy to see that this construction is injective, i.e., if $\phi_x = \phi_y$, then $x = y$.
		For surjectivity, let $\phi \in \Hom_\psi(F, \CFI[\Gamma, G,U])$.
		For every $a \in V(F)$ and $e \in E(\psi(a))$, define $x_e^a$ as the second component of $\phi(a)$, i.e.\@ $x_e^a \coloneqq S_a(e)$ where $\phi(a) = (\psi(a), S_a)$. Clearly, $x = (x_e^a)$ is such that $\phi_x = \phi$. The fact that~$x$ satisfies \cref{sys1,sys2} is easily verified.
	\end{proof}

	\begin{theorem} \label{thm:equations}
		Let $G$ be a base graph and $U \in \Gamma^{V(G)}$. 
		Let $\psi\in \Hom(F, G)$ for some graph~$F$.
		\begin{enumerate}
			\item Then $\hom_\psi(F, \CFI[\Gamma, G,0]) > 0$.
			\item If $\mathsf{Hom}(F, G, U , \psi)$ has a solution, then $\hom_\psi(F, \CFI[\Gamma, G,0]) = \hom_\psi(F, \CFI[\Gamma, G,U])$.
			\item If $\mathsf{Hom}(F, G, U, \psi)$ has no solution,  then $\hom_\psi(F, \CFI[\Gamma, G,U]) = 0$.
		\end{enumerate}
	\end{theorem}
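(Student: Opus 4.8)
The plan is to reduce everything to the linear-algebraic structure of the system $\mathsf{Hom}(F,G,U,\psi)$ over the abelian group $\Gamma$, exploiting the fact established in \cref{lem:bijection} that $\hom_\psi(F,\CFI[\Gamma,G,U])$ equals the number of solutions of this system. Observe that for fixed $F$, $G$, $\psi$, the systems $\mathsf{Hom}(F,G,0,\psi)$ and $\mathsf{Hom}(F,G,U,\psi)$ have the same left-hand sides; only the right-hand sides of the equations \cref{sys1} differ. So $\mathsf{Hom}(F,G,U,\psi)$ is an inhomogeneous linear system whose associated homogeneous system is precisely $\mathsf{Hom}(F,G,0,\psi)$. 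This is the standard affine-subspace picture: a consistent inhomogeneous system has a solution set that is a coset of the solution set of its homogenisation, hence of the same cardinality; an inconsistent system has no solutions at all.

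For item (1), I would first show that $\mathsf{Hom}(F,G,0,\psi)$ always has at least one solution, namely the all-zero assignment $x_e^a \coloneqq 0$ for all $a \in V(F)$ and $e \in E(\psi(a))$. Plugging this into \cref{sys1} gives $\sum_{e} 0 = 0 = U(\psi(a))$ when $U = 0$, which holds, and into \cref{sys2} gives $0 + 0 = 0$, which also holds. Hence the solution set is nonempty, and by \cref{lem:bijection} we get $\hom_\psi(F,\CFI[\Gamma,G,0]) \geq 1 > 0$.

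For items (2) and (3) the key point is that $\Gamma$ is a finite abelian group, so the solution set of a system of $\Gamma$-linear equations (each equation being a $\mathbb{Z}$-linear combination of the variables equated to a constant) behaves exactly like an affine subspace: if $x^{(0)}$ is any particular solution of $\mathsf{Hom}(F,G,U,\psi)$ and $y$ ranges over solutions of the homogeneous system $\mathsf{Hom}(F,G,0,\psi)$, then $x^{(0)} + y$ ranges bijectively over all solutions of $\mathsf{Hom}(F,G,U,\psi)$ — this is because the defining relations are additive, so $x$ solves the $U$-system iff $x - x^{(0)}$ solves the $0$-system. This bijection $y \mapsto x^{(0)} + y$ is clearly injective and surjective onto the solution set, so the two solution sets have equal cardinality whenever the $U$-system is consistent; combined with \cref{lem:bijection} this gives item (2). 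Item (3) is immediate: if $\mathsf{Hom}(F,G,U,\psi)$ has no solution, then by \cref{lem:bijection} the set $\Hom_\psi(F,\CFI[\Gamma,G,U])$ is empty, i.e.\ $\hom_\psi(F,\CFI[\Gamma,G,U]) = 0$.

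I do not expect a serious obstacle here; the only point requiring a little care is spelling out that the coset argument is valid over an arbitrary finite abelian group rather than a field — but since all the equations are genuinely additive relations among group elements (no scalar multiplication by non-integers appears), the translation map $y \mapsto x^{(0)} + y$ works verbatim, and finiteness of $\Gamma$ is not even needed for the bijection (only for interpreting "number of solutions" as a finite number, which is automatic since $F$ and $G$ are finite). One should also note that \cref{sys1,sys2} are well-posed precisely because $e \in E(\psi(a))$ whenever we write $x_e^a$ and because $ab \in E(F)$ implies $\psi(a)\psi(b) \in E(G)$, so that $\psi(ab)$ is a legitimate edge of $G$ incident to both $\psi(a)$ and $\psi(b)$ — exactly the bookkeeping already verified in the proof of \cref{lem:bijection}.
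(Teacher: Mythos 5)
Your proposal is correct and follows essentially the same route as the paper: the all-zero assignment witnesses item (1), item (3) is immediate from \cref{lem:bijection}, and items (2) is obtained by translating the solution set of the homogeneous system by a particular solution $x^*$ of the inhomogeneous one. The paper merely packages the two families of equations into a single matrix equation before invoking the same coset bijection $x \mapsto x + x^*$, so the difference is purely notational.
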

	\begin{proof}
		The system $\mathsf{Hom}(F, G, U, \psi)$ can be compressed into a matrix equation as follows:
		For $\psi \in \Hom(F, G)$
		and $P \coloneqq \{(a,e) \mid a \in V(F), e \in E(\psi(a))\}$, let $A^\psi \in \Gamma^{V(F) \times P}$ and $B^\psi \in \Gamma^{E(F) \times P}$  be the matrices such that
		\begin{equation} \label{eq:def-a-b}
			A^\psi_{b, (a,e)} = \delta_{b = a} \quad \text{and} \quad
			B^\psi_{bc, (a,e)} = \delta_{a \in \{b, c\} \land e = \psi(bc) }.
		\end{equation}
		Then \cref{sys1,sys2} are equivalent to
		\begin{equation} \label{realsys}
			\left( \begin{matrix}
				A^\psi \\
				B^\psi
			\end{matrix} \right) x
			= \left( \begin{matrix}
				U \circ \psi \\
				0
			\end{matrix} \right).
		\end{equation}
		If $U = 0$, then this system always has a solution, namely $x = 0$.
		In particular, by \cref{lem:bijection}, $\Hom_\psi(F, \CFI[\Gamma, G,0]) \neq \emptyset$.
Given \cref{lem:bijection}, it remains to give a bijection between the sets of solutions to $\left( \begin{smallmatrix}
			A^\psi \\
			B^\psi
		\end{smallmatrix} \right) x
		= \left( \begin{smallmatrix}
			0 \\
			0
		\end{smallmatrix} \right)$  and the set of solutions to $\left( \begin{smallmatrix}
			A^\psi \\
			B^\psi
		\end{smallmatrix} \right) x
		= \left( \begin{smallmatrix}
			U \circ \psi \\
			0
		\end{smallmatrix} \right)$. Provided with a solution $x^*$ to the latter system, $x \mapsto x + x^*$ can be taken to be this bijection.
	\end{proof}
	
	\begin{corollary} \label{cor:iso-hom}
		Let $G$ be a base graph and $U \in \Gamma^{V(G)}$. 
		Then the following are equivalent:
		\begin{enumerate}
			\item $\sum U = 0$,\label{it1}
			\item $\CFI[\Gamma, G,U] \cong \CFI[\Gamma, G,0]$,\label{it2}
			\item $\hom(G, \CFI[\Gamma, G,U]) = \hom(G, \CFI[\Gamma, G,0])$,\label{it3}
			\item $\hom_{\id}(G, \CFI[\Gamma, G,U]) = \hom_{\id}(G, \CFI[\Gamma, G,0])$, where $\id$ is the identity map on $G$.\label{it4}
		\end{enumerate}
	\end{corollary}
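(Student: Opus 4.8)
The plan is to prove the four statements equivalent by establishing the cycle of implications $\ref{it1} \Rightarrow \ref{it2} \Rightarrow \ref{it3} \Rightarrow \ref{it4} \Rightarrow \ref{it1}$, using only \Cref{lem:iso} and \Cref{thm:equations}. The implication $\ref{it1} \Rightarrow \ref{it2}$ is immediate from \Cref{lem:iso} applied with $U' = 0$, since $\sum U = 0 = \sum 0$. The implication $\ref{it2} \Rightarrow \ref{it3}$ is immediate as well, because an isomorphism $\CFI[\Gamma, G, U] \to \CFI[\Gamma, G, 0]$ induces, by post-composition, a bijection $\Hom(G, \CFI[\Gamma, G, U]) \to \Hom(G, \CFI[\Gamma, G, 0])$.

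For $\ref{it3} \Rightarrow \ref{it4}$ I would decompose $\hom(G, \CFI[\Gamma, G, U]) = \sum_{\psi \in \Hom(G, G)} \hom_\psi(G, \CFI[\Gamma, G, U])$, and likewise for $U = 0$. \Cref{thm:equations} yields, for every $\psi \in \Hom(G,G)$, the termwise bound $\hom_\psi(G, \CFI[\Gamma, G, U]) \le \hom_\psi(G, \CFI[\Gamma, G, 0])$: if $\mathsf{Hom}(G, G, U, \psi)$ is solvable the two sides are equal (part 2), and otherwise the left side is $0$ (part 3) while the right side is strictly positive (part 1). Since the total sums coincide by \ref{it3}, every summand must coincide; specialising to $\psi = \id$ gives \ref{it4}.

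For $\ref{it4} \Rightarrow \ref{it1}$: part 1 of \Cref{thm:equations} gives $\hom_{\id}(G, \CFI[\Gamma, G, 0]) > 0$, so \ref{it4} forces $\hom_{\id}(G, \CFI[\Gamma, G, U]) > 0$, whence by the contrapositive of part 3 the system $\mathsf{Hom}(G, G, U, \id)$ has a solution $(x_e^a)_{a \in V(G),\, e \in E(a)}$. Summing \cref{sys1} over all $a \in V(G)$ gives $\sum_{a \in V(G)} \sum_{e \in E(a)} x_e^a = \sum U$; rewriting the left-hand side as a sum over edges of $G$, each $e = ab \in E(G)$ contributes $x_e^a + x_e^b$, which is $0$ by \cref{sys2}, so $\sum U = 0$.

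The only nonroutine point is the bookkeeping in $\ref{it3} \Rightarrow \ref{it4}$: passing from equality of the two total homomorphism counts to equality of the individual $\psi$-indexed parts relies essentially on the termwise domination, and hence on part 1 of \Cref{thm:equations} (the $U = 0$ CFI graph realises every homomorphism from $G$ to the base graph). Everything else is a direct application of the lemmas already established.
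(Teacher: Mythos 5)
Your proposal is correct and follows the same route as the paper: $\ref{it1}\Rightarrow\ref{it2}$ via \cref{lem:iso}, $\ref{it2}\Rightarrow\ref{it3}$ trivially, $\ref{it3}\Rightarrow\ref{it4}$ by the termwise-domination argument from \cref{thm:equations} (which the paper leaves implicit), and $\ref{it4}\Rightarrow\ref{it1}$ by summing \cref{sys1} over vertices and cancelling edgewise via \cref{sys2}, exactly as in the paper.
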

	\begin{proof}
		The fact that \cref{it1} implies \cref{it2} follows from \cref{lem:iso}.
		It is immediate that \cref{it2} implies \cref{it3}.
		The fact that \cref{it3} implies \cref{it4} follows from \cref{thm:equations}.
		It thus remains to prove that \cref{it4} implies \cref{it1}.
		
		By \cref{thm:equations}, let $x$ be a solution to \cref{realsys} for $\psi = \id \colon G \to G$. Then,
		\[
		\sum_{a \in V(G)} U(a) \overset{\eqref{sys1}}{=} \sum_{a \in V(G)}\sum_{e \in E(a)} x_e^a = \sum_{e = ab\in E(G)} x_e^a + x_e^b \overset{\eqref{sys2}}{=} 0.
		\]
		Hence, \cref{it1} holds.
	\end{proof}
Thus, if $G$ is planar, then it witnesses quantum non-isomorphism of the CFI graphs. 
	
	\begin{corollary}
		\label{cor:planar-base-implies-non-quantum-isomorphic}
		If $G$ is a planar base graph and $\sum U \neq 0$, then $\CFI[\Gamma, G,0]$ and $\CFI[\Gamma, G,U]$ are not quantum isomorphic.
	\end{corollary}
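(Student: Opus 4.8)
The plan is to read the statement off directly from \cref{cor:iso-hom} together with the definition of quantum isomorphism adopted in this article. Recall that, by the convention fixed just before \cref{sec:roberson}'s corollaries (justified by the Mančinska--Roberson theorem), two graphs are \emph{quantum isomorphic} exactly when they are homomorphism indistinguishable over the class of all planar graphs. Hence, to prove non-quantum-isomorphism of $\CFI[\Gamma, G, 0]$ and $\CFI[\Gamma, G, U]$, it suffices to produce a single planar graph $F$ with $\hom(F, \CFI[\Gamma, G, 0]) \neq \hom(F, \CFI[\Gamma, G, U])$.

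The natural choice is $F = G$ itself. A base graph is by definition connected, and by hypothesis $G$ is planar, so $G$ lies in the test class. Now invoke the contrapositive of \cref{cor:iso-hom}: since $\sum U \neq 0$, condition~\ref{it1} fails, hence condition~\ref{it3} fails, i.e.\ $\hom(G, \CFI[\Gamma, G, U]) \neq \hom(G, \CFI[\Gamma, G, 0])$. Thus the planar graph $G$ witnesses that the two CFI graphs are not homomorphism indistinguishable over planar graphs, which is precisely to say that they are not quantum isomorphic.

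There is essentially no obstacle here: all of the content has already been extracted in \cref{thm:equations} and \cref{cor:iso-hom}, whose proofs reduce $\hom_\psi(F, \CFI[\Gamma, G, U])$ to the solvability of the linear system $\mathsf{Hom}(F, G, U, \psi)$ over $\Gamma$ and then sum the telescoping identity \eqref{sys1}--\eqref{sys2} along the path $\psi = \id$ to force $\sum U = 0$. The only point requiring a moment's care is terminological, namely that we are using the homomorphism-counting notion of quantum isomorphism rather than the original operator-algebraic one; but since this identification has already been recorded in the preliminaries, the corollary follows immediately, with no further computation.
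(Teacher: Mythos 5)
Your proof is correct and is exactly the paper's intended argument: the paper derives the corollary from \cref{cor:iso-hom} via the one-line remark that the planar base graph $G$ itself witnesses the failure of homomorphism indistinguishability over planar graphs, which is the definition of quantum isomorphism adopted in the preliminaries.
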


	\section{Invertible-Map Equivalence and Homomorphism Indistinguishability}
	\label{sec:inv-map-hom-indistinguish}
	
	In this section we prove that, for every $k \geq 6$, the invertible-map equivalence
	$\ImEquiv{k}{\bbP}$ over the set of all primes is not a homomorphism indistinguishability relation.
	The proof idea is the following:
	Using techniques from~\cite{lichter2023separating},
	we will construct, for every $k$, a planar base graph~$G$ 
	such that we obtain non-isomorphic but $\ImEquiv{k}{\bbP}$-equivalent generalised CFI graphs over~$G$
	and~$\bbZ_{2^i}$ for some~$i$.
	By \cref{cor:planar-base-implies-non-quantum-isomorphic},
	the two CFI graphs are not quantum isomorphic.
	Exploiting~\cite{seppelt2023logical}, we will see that this implies that~$\ImEquiv{k}{\bbP}$ is not a homomorphism-indistinguishability relation.
	
	\begin{lemma}
		\label{lem:homRelationRefinesQuantumIsomorphism}
		Let $k \geq 6$.
		If $\ImEquiv{k}{\bbP}$ (over graphs) is a homomorphism indistinguishability relation, then all $\ImEquiv{k}{\bbP}$-equivalent graphs are quantum isomorphic.
	\end{lemma}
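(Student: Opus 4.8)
The plan is to combine one elementary property of $\ImEquiv{k}{\bbP}$ with the theory of~\cite{seppelt2023logical}. Suppose $\ImEquiv{k}{\bbP}$ on graphs coincides with $\homEquiv{\Ff}$ for some class of graphs~$\Ff$. Since quantum isomorphism is by definition homomorphism indistinguishability over the planar graphs, what has to be shown is that $\hom(F,G) = \hom(F,H)$ whenever $G \homEquiv{\Ff} H$ and $F$ is planar. The property of $\ImEquiv{k}{\bbP}$ I would feed in is the one recalled in the preliminaries: the rank operator over $\bbF_p$ simulates the counting quantifiers $\exists^{\ge i}x$, so $\ImEquiv{k}{\bbP}$ is at least as fine as $\equivC{k}$, and hence, since $k \ge 3$, at least as fine as $\equivC{3}$. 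By the theorem of Dvo\v{r}\'ak~\cite{dvorak_recognizing_2010,dell_lovasz_2018}, $\equivC{3}$ is homomorphism indistinguishability over the class of graphs of treewidth at most~$2$; thus $\homEquiv{\Ff}$ refines this relation, that is, $\hom(F,G) = \hom(F,H)$ already holds whenever $G \homEquiv{\Ff} H$ and $F$ has treewidth at most~$2$.

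The heart of the argument is to promote this refinement from graphs of treewidth at most~$2$ to all planar graphs, and that is where I would invoke~\cite{seppelt2023logical}: the statement to extract is one which, from $\homEquiv{\Ff}$ refining $\equivC{3}$ — together with whatever additional definability or closure property $\ImEquiv{k}{\bbP}$ enjoys as the equivalence of a linear-algebraic logic — lets one conclude that $\homEquiv{\Ff}$ refines quantum isomorphism, i.e.\@ that $\hom(F,G) = \hom(F,H)$ holds whenever $G \homEquiv{\Ff} H$ and $F$ is planar. Granting this, any two $\ImEquiv{k}{\bbP}$-equivalent graphs are homomorphism indistinguishable over the planar graphs, hence quantum isomorphic, which is the assertion of the lemma.

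I expect this last step to be the delicate one, because refining $\equivC{3}$ does not by itself suffice. Indeed, $\equivC{3}$ is itself a homomorphism indistinguishability relation refining $\equivC{3}$, yet it is not a refinement of quantum isomorphism: by \cref{cor:planar-base-implies-non-quantum-isomorphic}, the two CFI graphs over a sufficiently large planar base graph (with a twist $U$ satisfying $\sum U \neq 0$) are $\equivC{3}$-equivalent but not quantum isomorphic. So the argument must genuinely exploit more about $\ImEquiv{k}{\bbP}$ than mere refinement of $\equivC{3}$ — presumably that, having access to rank operators over all prime fields, it distinguishes the CFI-type pairs responsible for the gap between bounded-treewidth and planar graphs. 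Locating the corresponding statement in~\cite{seppelt2023logical} and verifying that $\ImEquiv{k}{\bbP}$ meets its hypotheses is, I expect, the main obstacle.
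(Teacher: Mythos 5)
Your overall strategy---assume $\ImEquiv{k}{\bbP}$ equals $\homEquiv{\Ff}$ for some class $\Ff$ and invoke \cite{seppelt2023logical} to force $\homEquiv{\Ff}$ to refine homomorphism indistinguishability over planar graphs---is the paper's strategy, and your closing diagnosis of where the real work lies is accurate. But the step you defer is the entire content of the proof, and the ingredient you do supply points in the wrong direction. The relevant statement \cite[Theorem~22]{seppelt2023logical} does not ask that $\ImEquiv{k}{\bbP}$ \emph{refine} $\equivC{3}$ (i.e.\ that $\homEquiv{\Ff}$ be at least as fine as some bounded-treewidth relation); it asks for the converse non-containment: that \emph{no} $\equivC{\ell}$ refines $\ImEquiv{k}{\bbP}$, i.e.\ that for every $\ell \in \bbN$ there exist $\Cc^{\ell}$-equivalent graphs that are \emph{not} $\LA{k}{\mathbb{P}}$-equivalent. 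The mechanism is that this hypothesis rules out $\Ff$ (or rather its minor-closed homomorphism-distinguishing closure, which is where self-complementarity of the logic enters) having bounded treewidth, whereupon the Excluded Grid Theorem forces it to contain all planar graphs. Your observation that the rank operator simulates counting quantifiers establishes a containment that plays no role in this argument, and, as you yourself note via the $\equivC{3}$ counterexample, cannot suffice.

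The missing verification is concrete and is exactly why the lemma requires $k \geq 6$ --- a constant your proposal never uses, which is a symptom of the gap. For every $\ell$, the classical CFI construction \cite{caifurimm92} yields a base graph over which the two non-isomorphic CFI graphs are $\Cc^{\ell}$-equivalent; these same graphs are distinguished in rank logic \cite{rankLogicDiscovery}, and the distinguishing matrix is given by a \emph{one-dimensional} interpretation using only $6$ variables \cite{Holm2011}, hence the pair is already separated by $\LA{k}{\mathbb{P}}$ for every fixed $k \geq 6$. Supplying this family of witnesses (uniformly in $\ell$, for the \emph{fixed} $k$) is what discharges the hypothesis of \cite[Theorem~22]{seppelt2023logical}; without it the proof does not go through.
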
	
	\begin{proof}
		For every (self-complementary) logic $\Ll$,
		the following holds~\cite[Theorem 22]{seppelt2023logical}:
		If $\Ll$-equivalence is a homomorphism indistinguishability relation,
		and if, for every $\ell \in \bbN$, there are $\Cc^{\ell}$-equivalent
		but not $\Ll$-equivalent graphs $H$ and $H'$,
		then all $\Ll$-equivalent graphs are quantum isomorphic.
Here, $\Ll$ is $\LA{k}{\mathbb{P}}$.
		We show that for every $\ell \in \bbN$,
		there are $\Cc^{\ell}$-equivalent 
		but not $\LA{k}{\mathbb{P}}$-equivalent graphs $H$ and $H'$.
		Let $\ell \in \bbN$.
		It is well-known~\cite{caifurimm92}
		that there is a base graph~$G$ such that the two non-isomorphic  CFI graphs~$H$ and~$H'$ over~$\bbZ_2$ and~$G$, using the classical CFI construction (which we have not presented in this paper),
		are $\Cc^{\ell}$\nobreakdash-equivalent.
		However, the CFI graphs~$H$ and~$H'$ are not equivalent in rank logic~\cite{rankLogicDiscovery}.
		The interpretation defining the distinguishing matrices is actually one-dimensional and requires $6$ variables~\cite{Holm2011}.
		Thus,~$H$ and~$H'$ are not $\LA{k}{\mathbb{P}}$-equivalent.
	\end{proof}	
For now, assume the following lemma, which we will prove in the end of this section.
	\begin{lemma}
		\label{lem:planar-base-graph-im-equivalent}
		For every $k \in \bbN$, there is a planar base graph $G$
		and an $i \in \bbN$ such that, for all $U,U' \in \bbZ_{2^i}^{V(G)}$
		satisfying $\sum U = \sum U' + 2^{i-1}$, we have
		$\CFI[\bbZ_{2^i}, G, U] \ImEquiv{k}{\bbP}\CFI[\bbZ_{2^i}, G, U']$.
	\end{lemma}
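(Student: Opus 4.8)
The plan is to prove the lemma through the game characterisation of invertible-map equivalence, i.e.\ by exhibiting a winning strategy for Duplicator in the IM-game $\Mm^{k,\bbP}$ played on $\CFI[\bbZ_{2^i}, G, U]$ and $\CFI[\bbZ_{2^i}, G, U']$. By \cref{lem:iso} these two graphs depend, up to isomorphism, only on $\sum U$ and $\sum U'$, so we may replace $U, U'$ by vectors supported on a single common vertex $v_0 \in V(G)$; the two CFI graphs then differ only by twisting the gadget at $v_0$ by the unique element of order two in $\bbZ_{2^i}$. Since $2^{i-1} \neq 0$ in $\bbZ_{2^i}$, \cref{cor:iso-hom} shows that these graphs are \emph{not} isomorphic — which is exactly what makes the claimed equivalence non-trivial, and what \cref{cor:planar-base-implies-non-quantum-isomorphic} will later exploit.

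For the base graph $G$ I would take a planar graph produced by the methods of \cite{lichter2023separating} (or a planar variant of the base graphs used there) that is ``robustly connected'' relative to $k$: informally, $G$ should remain connected after removing at most $k$ of its vertices, and, away from any such bounded region, should carry enough internally disjoint paths that the order-two twist can be freely ``routed'' around the board. Planar graphs of unbounded treewidth with such robustness do exist — large grid-like graphs suffice, and the explicit gadget graphs of \cite{lichter2023separating} can be drawn in the plane — so planarity is not in tension with what the game argument requires; its sole purpose is to make \cref{cor:planar-base-implies-non-quantum-isomorphic} applicable. The exponent $i$ must be taken large as a function of $k$: this is the arithmetic ingredient that prevents the linear-algebraic operators of $\LA{k}{\bbP}$ — chiefly the rank operator over $\bbF_2$, the only one capable of registering anything of the $\bbZ_{2^i}$-structure — from detecting an order-two twist with merely $k$ variables, exactly as in the separation of rank logic from \textsc{Ptime}.

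The heart of the argument is Duplicator's strategy in $\Mm^{k,\bbP}$. The invariant I would maintain is that in every reachable position the twist can be relocated to a region of $G$ disjoint from all pebbled vertices; by \cref{lem:iso}, sliding a twist along an edge does not change the isomorphism type, so this invariant furnishes, in any position with at most $k$ pebbles, an isomorphism between the two CFI graphs fixing the pebbled tuples. When Spoiler picks up $2\ell$ pebbles and demands partitions $\Pp$ of $A^\ell \times A^\ell$ and $\Pp'$ of $B^\ell \times B^\ell$ with $|\Pp| = |\Pp'|$, a bijection between them, and an invertible matrix $S$ over $\bbF_p$ conjugating the characteristic matrices block by block, Duplicator transports her partition through the partial isomorphism provided by the invariant. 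It then remains to produce the conjugating matrix $S$ for \emph{every} prime $p$: for odd $p$ the isomorphism itself gives a permutation matrix and we are done, whereas for $p = 2$ one must invoke the linear-algebraic invariance at the core of \cite{lichter2023separating}, namely that the two systems of $0/1$-matrices over $\bbF_2$ attached to the untwisted and the order-two-twisted CFI graphs over $\bbZ_{2^i}$ become simultaneously similar once $i$ exceeds a bound depending on $k$. After Spoiler's final pebble placement, one re-establishes the routing invariant, closing the round.

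The step I expect to be the main obstacle is this last one under the planarity constraint: exhibiting a planar base graph that, for the prescribed $k$ and a suitable $i$, simultaneously possesses the disjoint-connectivity needed for the routing invariant and yields $\bbF_2$-linear-algebraic data that remains simultaneously similar under an order-two twist. Essentially all of the technical weight of \cref{lem:planar-base-graph-im-equivalent} lies in transplanting the construction and the game analysis of \cite{lichter2023separating} into the planar setting while keeping track of the characteristic-matrix similarity condition imposed by the IM-game.
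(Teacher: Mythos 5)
Your overall plan for the characteristic-$2$ part (play the IM-game, maintain the invariant that the order-two twist can be routed to a region disjoint from the pebbles, and import the similarity-matrix construction of \cite{lichter2023separating}) matches the paper. But there are two genuine gaps. First, the base graph. You assert that ``large grid-like graphs suffice'' and that ``the explicit gadget graphs of \cite{lichter2023separating} can be drawn in the plane''. Neither is true: the construction of \cite{lichter2023separating} requires base graphs whose degree, girth, \emph{and} vertex-connectivity are all simultaneously unbounded, and planarity flatly forbids this (a planar graph has a vertex of degree at most $5$, has vertex-connectivity at most $5$, and high girth forces average degree below $3$); grids have degree $4$ and girth $4$. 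The key idea of the paper, which is missing from your proposal, is to \emph{localize} these requirements: one defines $(r,d,g,c)$-nice graphs in which high degree, high girth and high connectivity are only demanded in a ball around one distinguished vertex, while every small separator leaves at most one component that is not a subgraph of a bounded-height grid. Such graphs are built explicitly as a high-arity tree glued onto a long grid (\cref{lem:nice-planar-graphs}), and one must then re-examine the proof of \cite{lichter2023separating} to check that non-trivial automorphisms are only ever applied inside the nice region, so that the local properties suffice (\cref{lem:orbits-independent-nice,lem:nice-implies-2equiv}).

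Second, the odd primes. Your claim that ``for odd $p$ the isomorphism itself gives a permutation matrix and we are done'' does not work: the two structures are \emph{not} isomorphic, and the invariant only provides an isomorphism onto a structure that still carries a twist, so there is no permutation of $B^\ell$ whose matrix conjugates the orbit partitions globally. The paper instead invokes the homogeneity theorem of Dawar--Gr\"adel--Pakusa~\cite{limitationsOfIM}: if the $c'$-orbits of the structures are definable in counting logic, then $\Cc$-equivalence already implies IM-equivalence over all odd primes (the similarity matrix is obtained by averaging over the automorphism group, a $2$-group, whose order is invertible modulo odd primes). Establishing this definability of orbits (\cref{lem:nice-implies-homogeneous}) is itself non-trivial, uses the grid-structure of the components outside the nice region, and requires working with the enriched structures $\CFI^*$ (preorder plus the relations $N_{u,v}$, $C_{u,v}$, $I_j$) rather than the bare CFI graphs, passing to the reduct only at the very end. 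Finally, combining the $\{2\}$-equivalence with the $\bbP\setminus\{2\}$-equivalence into $\ImEquiv{k}{\bbP}$ is not automatic: it uses \cite[Lemma~10]{dawarGraedelLichter} and costs variables (the paper runs the game with $3k+1$ variables to conclude $\ImEquiv{k}{\bbP}$), a bookkeeping step your proposal omits entirely.
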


	\begin{proof}[Proof of \cref{thm:main}]
		Let $k\geq 6$.
		By \cref{lem:planar-base-graph-im-equivalent},
		there is a planar base graph $G$ and an $i \in \bbN$
		such that $\CFI[\bbZ_{2^i}, G, 0] \ImEquiv{k}{\bbP}\CFI[\bbZ_{2^i}, G, U]$ for some $U \in \bbZ_{2^i}$ with $\sum U= 2^{i-1}$.
		These two CFI graphs are not quantum isomorphic
		by \cref{cor:planar-base-implies-non-quantum-isomorphic}.
		Hence, the invertible-map equivalence $\ImEquiv{k}{\bbP}$ is not a
		homomorphism indistinguishability relation by 	\cref{lem:homRelationRefinesQuantumIsomorphism}.
	\end{proof}
	Because the interpretation in the proof of \cref{lem:homRelationRefinesQuantumIsomorphism}
	is one-dimensional,
	the result of \cref{thm:main} also hold
	for equivalence in the fragment of $k$-variable linear-algebraic logic
	that is restricted to one-dimensional interpretations.
		
	It remains to prove \cref{lem:planar-base-graph-im-equivalent}.	
	Without the planarity requirement,
	non-isomorphic but $\ImEquiv{k}{\mathbb{P}}$-equivalent generalised CFI structures were
	constructed in~\cite{lichter2023separating}.
	By a careful analysis of the proof,
	the construction can be adapted to certain planar base graphs,
	which we will show now.
	However, we first have to extend our CFI graphs by additional relations.
	An ordered graph is a pair $(G,\leq)$
	of a graph $G$ and a total order $\leq$ on $V(G)$.
	If $G$ is an ordered graph, we denote its vertex set, its edge set, and its order by $V(G)$, $E(G)$, and $\leq^G$, respectively.
	\begin{definition}
		Let $i$ be a positive integer, $G$ be an ordered base graph,
		and $U \in \bbZ_{2^i}^{V(G)}$.
		We define the CFI structure $\CFI^*[\bbZ_{2^i}, G, U]$
		on the same vertex set as $\CFI[\bbZ_{2^i}, G, U]$,
		that is, on $\bigcup_{u \in V(G)} V_u$ (recall \cref{def:robersonCFI}).
		We first define a total preorder $\preceq$ on the vertices:
		$(u,S) \preceq (v,T)$ if and only if $u \leq^G v$.
 		For every $uv \in E(G)$, we define the following relations:
		\begin{align*}
			N_{u,v} &:= \setcond[\big]{
				((u,S),(u,T)) \in V_u^2} {S(uv)=T(uv)},\\
			C_{u,v} &:= \setcond[\big]{((u,S),(u,T)) \in V_u^2}{S(uv)+1=T(uv)}.
		\end{align*}
		Finally, we add for every $j \in \bbZ_{2^i}$ the following relation:
		\begin{align*}
			I_j := \setcond[\big]{\{(u,S),(v,T)\}}{(u,S) \in V_u, (v,T) \in V_v, uv \in E(G), S(uv) + T(uv) = j}.
		\end{align*}
	\end{definition}
	The structure $\CFI^*[\bbZ_{2^i}, G, U]$ can be seen as a vertex-coloured and edge-coloured
	directed graph.
	The preorder assigns colours to vertices, where vertices obtain the same color exactly if they have the same origin.
	The other relations colour edges by the set of relations in which they are contained.
	Note that $I_0$ coincides with the edge relation of the CFI graph $\CFI[\bbZ_{2^i}, G, U]$.
	The additional relations are, apart from the preorder, already implicit in  $\CFI[\bbZ_{2^i}, G, U]$ and are made explicit to ensure definability of certain properties in logics.

	Non-isomorphic but $\ImEquiv{k}{\mathbb{P}}$-equivalent CFI graphs
	were constructed using
	a class of regular base graphs,
	in which the degree, the girth, and the vertex-connectivity are simultaneously unbounded~\cite{lichter2023separating}.
	We will show that it suffices that the graph only satisfies these  properties ``locally''.
	The \emph{$r$-ball around a vertex} $v \in V$
	is the set of vertices with distance at most~$r$ to~$v$.
	
	\begin{definition}
	Let $G$ be a base graph and $r,d,g,c\in \bbN$.
	We say that $G$ is \emph{$(r, d,g,c)$-nice} if there is some vertex $w \in V(G)$
	such that the $r$-ball $W$ around $w$ satisfies the following:
	\begin{enumerate}
		\item Every vertex in $W$ has degree at least~$d$.
		\item Every cycle in $G$ containing a vertex of $W$ as length at least $g$.
		\item For every set $V' \subseteq V(G)$ of size at most $c$, all vertices in $W\setminus V'$ are contained in the same connected component of $G -V'$.
		\item For every set $V' \subseteq V(G)$ of size $c' \leq c$,
		there is at most one connected component of $G-V'$
		that is not an induced subgraph of a grid of height $c'$.
	\end{enumerate}
	\end{definition}

	\begin{lemma}
		\label{lem:nice-planar-graphs}
		For every $n \in \bbN$, there is a planar graph $G$ that is $(n,2n,2n,n)$-nice.
	\end{lemma}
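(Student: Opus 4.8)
The plan is to construct $G$ explicitly as a large tree surrounding $w$, glued to a grid-like periphery, and then to check the four defining conditions. Fix $n$ and let $T$ be the rooted tree with root $w$ in which $w$ has $2n$ children, every other internal node has $2n-1$ children, and every leaf has depth $2n$; write $L$ for its number of leaves. Let $\Lambda$ be the grid graph on vertex set $[L]\times[n]$, and obtain $G$ from the disjoint union of $T$ and $\Lambda$ by identifying the $i$-th leaf of $T$ --- in the left-to-right order of a fixed plane embedding of $T$ --- with the vertex $(i,1)$ on the top row of $\Lambda$. Drawing $T$ in the upper half-plane with its leaves on the horizontal axis in that order, and $\Lambda$ in the lower half-plane, witnesses that $G$ is planar. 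We take $w$ as the distinguished vertex. A preliminary observation is that the $n$-ball $W$ around $w$ is exactly the set of tree vertices of depth at most $n$: every vertex of $\Lambda$ lies at distance at least $2n$ from $w$, and any walk from $w$ that enters $\Lambda$ and returns to a tree vertex of depth $d$ has length at least $4n-d>d$.

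The first two conditions are then immediate. Every vertex of $W$ is internal in $T$ and hence has degree exactly $2n$. Moreover $T$ is acyclic, so any cycle meeting $W$ contains a vertex $g$ of $\Lambda$; since $g$ is at distance at least $2n$ from $w$, it is at distance at least $2n-\mathrm{depth}(v)\ge n$ from every $v\in W$ lying on the cycle, and as a cycle has length at least twice the distance between any two of its vertices, the cycle has length at least $2n$.

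The third and fourth conditions I would establish together, by showing that for every $V'\subseteq V(G)$ with $|V'|\le n$ the graph $G-V'$ has exactly one component containing a high-degree tree vertex, and that every other component is an induced subgraph of $\Lambda$ of height at most $|V'|$. The quantitative ingredients are: (a) every internal vertex of $T$ has more than $n$ children, so no internal vertex --- in particular no vertex of $W$ --- can be separated by at most $n$ deletions from the leaves of its subtree, hence from $\Lambda$; (b) the leaves below any internal vertex form a contiguous block of the top row of $\Lambda$ whose length far exceeds $n^2$, whereas a region of $\Lambda$ detachable by at most $n$ deletions has only $O(n^2)$ vertices and, by a standard grid-separator argument, height at most the size of the cut; (c) since $\Lambda$ has height exactly $n$, a deletion set that cuts $\Lambda$ transversally must be a full column and thereby exhausts the entire budget, leaving $T$ intact to reconnect the two halves. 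Combining these, every surviving tree vertex, and every vertex of $\Lambda$ outside the few thin boundary pieces, lies in one common component; this component contains $W\setminus V'$, and since it is the only one carrying a high-degree tree vertex, all other components are the thin sub-grids of height at most $|V'|=c'$.

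The step I expect to be the main obstacle is precisely the simultaneous satisfaction of the last two conditions: one must rule out that deleting at most $n$ vertices splits off a second non-grid component, or a grid component that is too tall. This is what the parameter choices are engineered for --- branching $2n-1>n$ keeps every vertex of $W$ anchored to the periphery, and periphery height exactly $n$ forces any transversal cut to consume the whole deletion budget. At a higher level, $G$ has to take this shape because no planar graph can be globally of large girth, large minimum degree, and high connectivity at once; all three properties must be confined to a bounded region around $w$, with $G$ being grid-like --- in the sense required by the fourth condition --- away from it. The grid-separator bookkeeping behind the clause ``height at most $c'$'' is routine but must be carried out with some care.
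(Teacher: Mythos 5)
Your construction and verification are essentially identical to the paper's: a complete $2n$-ary tree rooted at $w$, glued along its leaves to the top row of a long grid, with the girth bound coming from the distance from $W$ to the grid, the degree bound from the branching, and the last two conditions from the $2n$ internally disjoint paths of each vertex of $W$ into the grid together with the observation that a vertex set of size at most $n$ can only detach thin corners or holes of the grid. The only difference is parametric --- the paper uses tree depth $4n$ and grid height $2n$ where you use $2n$ and $n$ --- and your tighter choice still goes through (you correctly handle the one extra case this creates, namely that a transversal cut of a height-$n$ grid exhausts the entire deletion budget and so leaves the tree intact), it just leaves less slack.
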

	\begin{proof}
		Let $n \in \bbN$ be arbitrary but fixed.
		We start with a complete $2n$-ary tree (with fixed root~$w$) of depth~$4n$.
		For every $i \geq 1$, the $i$-th level of the tree
		consists of $(2n)(2n-1)^{i-1}$ vertices.
		In particular, the tree has $(2n)(2n-1)^{4n-1}$ leaves.
		Next, we attach a grid of height~$2n$ and width $(2n)(2n-1)^{4n-1}$ to the tree as follows:
		The $i$-th leaf from the left (according to the usual drawing of a tree in the plane) is identified with the $i$-th vertex of the grid in the first row.
		Denote this graph by~$G$. 
		It is easy to see that~$G$ is planar.
		We prove that~$G$ is $(n,2n,2n,n)$-nice, which is witnessed by the root~$w$.
		Let~$W$ be the~$n$-ball around~$w$,
		that is, the set of vertices whose level is at most $n+1$ in the tree.
		By construction, every vertex in~$W$ has degree~$2n$ and every cycle, in which a vertex of~$W$ is contained, has length at least~$2n$ because the tree has depth~$4n$.
		
		For every vertex $u \in W$, there are at least~$2n$ paths from $u$ into the grid that are disjoint apart from~$u$.
		Let $V'\subset V(G)$ be a set of at most~$n$ vertices.
		We show that all vertices in $W\setminus V'$ are connected in $G - V'$.
		Let $u,v \in W\setminus V'$.
		If there is a path from~$u$ to~$v$ only using vertices of the tree, we are done.
		Otherwise, there are at most~$n$ paths disjoint apart from~$u$ respectively~$v$ into the grid (because there were~$2n$ such paths for $u$ respectively $v$ before removing~$n$ vertices).
		Let~$V_u$ and~$V_v$ be the sets of endpoints of these paths,
		i.e., sets of size at least~$n$ of vertices in the first row of the grid.
		Because there is no path between~$u$ and~$v$ in the tree, at most $n-1$ vertices of the grid are removed in $G-V'$ (we count the leaves of the tree as vertices of the grid).
		By removing $n-1$ vertices from a grid of height~$2n$ (and larger width)
		it is not possible to separate the sets~$V_u$ and~$V_v$
		because they are of size at least~$n$ each.
		Hence, some vertex of~$V_u$ is connected to some vertex of~$V_v$
		in $G-V'$ and thus~$u$ and~$v$ are connected in $G-V'$.
		
		We finally show that at most one connected component of $G-V'$ 
		is not an induced subgraph of a grid of height at most $|V'|$.
		First, we claim that all vertices of the tree are in the same connected component of $G-V'$ (again, we count the leaves as vertices of the grid).
		One easily sees that the argument above actually works for all vertices of the tree
		because for all vertices of the tree there are~$2n$ disjoint paths into the grid.	
		So there is a component containing all vertices of the tree
		and some vertices of the grid.
		Second, because the grid has height and length greater than $n$,
		by removing $|V'| \leq n$ vertices from~$G$ we can only ``cut out'' holes or corners of the grid.
		This means that the component containing the tree vertices also contains all grid vertices apart from the holes and corners cut out.
		Each of them contains at most $|V'|$ vertices per column and thus all these holes and corners are induced subgraphs of a grid of height $|V'|$.
	\end{proof}
We now analyse properties of CFI structures over nice base graphs.
	The following proofs assume that the reader is familiar with the CFI construction.
	For more details we refer for example to~\cite{caifurimm92,Furer2001,GradelPakusa19,lichter2023separating}.
For some number $c \in \bbN$,
	a \emph{$c$-orbit} of a structure~$\AA$
	is a maximal set of $c$-tuples of~$\AA$
	that are all related by an automorphism of~$\AA$.
	That is, $\bar{x},\bar{y} \in A^c$ are in the same
	orbit if and only if there is an automorphism~$\phi$ of~$\AA$
	such that $\phi(\bar{x}) = \bar{y}$.
	The set of $c$-orbits is a partition of~$A^c$.
	
	We often need isomorphisms of a particular kind between generalised CFI structures.
	We have seen in \cref{lem:iso}
	that two CFI graphs $\CFI[\bbZ_{2^i},G,U]$ and $\CFI[\bbZ_{2^i},G,U']$ over some base graph $G$ are isomorphic if and only if $\sum U = \sum U'$.
	The same reasoning applies to the CFI structures
	$\CFI^*[\bbZ_{2^i},G,U]$ and $\CFI^*[\bbZ_{2^i},G,U']$ (see also \cite{lichter2023separating}).
	Let $p = u_1,\dots,u_m$ be a path in $G$ and $j \in \bbZ_{2^i}$.
	Now we can construct an isomorphism $\phi$ between
	$\CFI^*[\bbZ_{2^i},G,U]$ and $\CFI^*[\bbZ_{2^i},G,U-ju_1+ju_m]$
	(where $jv$ denotes the vector in $V(G)^{\bbZ_{2^i}}$
	that has entry $j$ at position $v$ and is zero otherwise)
	such that $\phi$ is the identity map on all vertices
	whose origin is not contained in $p$.
	This isomorphism can be composed out of the maps constructed in \cref{lem:iso} by following the path $p$.
	We call such isomorphisms \emph{path-isomorphisms}.
	If $p$ is a closed cycle, then the associated path-isomorphism is an automorphism of the structure, which we call \emph{cycle-automorphism}.

	\begin{lemma}
		\label{lem:nice-implies-homogeneous}
		Let $i \in \bbN$, $G$ be an $(r, d,g,c)$-nice ordered base graph, and $U \in \bbZ_{2^i}^{V(G)}$.
		Then two tuples of length $c' \leq c$ of $\CFI^*[\bbZ_{2^i},G, U]$ are $\Cc^{3c'}$-equivalent if and only if they are in the same $c'$-orbit.
	\end{lemma}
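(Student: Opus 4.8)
The plan is to prove the two directions separately, the backward one being essentially free and the forward one carrying all the weight. If the two $c'$-tuples lie in the same $c'$-orbit, then by definition some automorphism of $\CFI^*[\bbZ_{2^i},G,U]$ carries one to the other; automorphisms preserve satisfaction of all formulas, so the tuples are $\Cc^{m}$-equivalent for every $m$, in particular for $m=3c'$. Hence only the implication ``$\Cc^{3c'}$-equivalent $\Rightarrow$ same orbit'' needs an argument, and I would prove its contrapositive via the bijective $3c'$-pebble game, which characterises $\Cc^{3c'}$-equivalence: assuming the tuples lie in distinct orbits, I would build a winning strategy for Spoiler using at most $3c'$ pebbles.

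First I would describe the automorphism group. Since $G$ carries a total order it has no nontrivial automorphism, and since the preorder $\preceq$ of $\CFI^*[\bbZ_{2^i},G,U]$ colours each vertex by its origin, every automorphism fixes all origins; as in \cref{lem:iso}, such an automorphism is a composition of cycle-automorphisms and therefore corresponds to a $\bbZ_{2^i}$-valued cycle $z$ of $G$, acting by replacing each vertex $(u,S)$ with $(u,S+z|_{E(u)})$. One checks that these maps, and only these, preserve the relations $N_{u,v}$, $C_{u,v}$, and $I_j$ as well. Writing the first tuple as $\bar x=((u_j,S_j))_j$ and noting that the second tuple must have the same origins $\bar u$ (otherwise the colours let Spoiler win in one move), the second tuple $\bar y=((u_j,T_j))_j$ lies in the orbit of $\bar x$ exactly when the prescribed differences $T_j-S_j\in\bbZ_{2^i}^{E(u_j)}$ jointly extend to some cycle $z$ of $G$. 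Thus ``different orbit'' means precisely that this finite $\bbZ_{2^i}$-linear feasibility problem fails, and I would unwind the failure into a concrete certificate living close to $W_0:=\{u_1,\dots,u_{c'}\}$.

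This is where niceness of $G$ enters. The extension problem is feasible iff the prescribed differences satisfy the conservation constraints at the $u_j$ (automatic, since $S_j,T_j\in V_{u_j}$), the matching constraints on edges internal to $W_0$ (these are recorded by the atomic type via the $I_j$-relations), and one cut constraint for each connected component of $G-W_0$. The girth condition makes the graph tree-like around $W_0$, and the separation condition forces a single large component of $G-W_0$ to carry essentially all edges leaving $W_0$, while the grid-structure condition controls the remaining components; together these let me rewrite the obstruction as a cut separating the pebbled origins along which the prescribed differences fail to sum to zero in $\bbZ_{2^i}$, or as a single incident edge whose difference cannot be matched. Spoiler then ``walks'' this certificate: he keeps $c'$ pebbles on the tuple, follows the cut with the remaining $2c'$ pebbles, and at each incidence uses the explicit relations $I_j$ (which reveal the value $S(uv)+T(uv)$ across an edge) together with $N_{u,v}$, $C_{u,v}$ to force Duplicator's successive bijections to commit to the value of the twist there; the girth bound keeps the walk from closing up prematurely, so the commitments accumulate without cancellation. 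When the traversal is complete the totals on the two sides disagree modulo $2^{i}$, the pebbled tuple fails to be a partial isomorphism, and Duplicator loses. By contraposition, $\Cc^{3c'}$-equivalence implies being in the same $c'$-orbit.

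The main obstacle is the bookkeeping behind ``$3c'$ pebbles suffice'': one must bound, at every step of Spoiler's walk, the number of base-graph vertices that need to be remembered simultaneously by $c'$, and one must handle separately the case where the certificate is acyclic, where the separation condition and, for the planar graphs supplied by \cref{lem:nice-planar-graphs}, the attached grid are used to close the walk into a cycle without spending extra pebbles. This analysis runs parallel to the one carried out for globally regular base graphs in \cite{lichter2023separating}; the extra point to verify here is that the argument only ever inspects $G$ within the $r$-ball around the distinguished vertex, so that $(r,d,g,c)$-niceness is exactly the hypothesis one needs.
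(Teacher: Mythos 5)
Your overall skeleton matches the paper's: the backward direction is immediate because automorphisms preserve all formulas, the forward direction reduces to showing that tuples in different orbits are distinguishable, and the orbit condition is correctly identified as a $\bbZ_{2^i}$-linear feasibility problem (an extension of prescribed differences to an element of the cycle space, subject to one cut constraint per connected component of $G$ minus the pebbled origins). The paper additionally simplifies the combinatorics by an induction on tuple length, reducing to two tuples that differ in a single coordinate, and separately treats the case where another pebbled vertex shares the origin of the differing coordinate (distinguished via the $C_{u,v}$-relations); you treat the whole tuple at once, which is workable but makes the feasibility system heavier and silently assumes consistency of the prescribed differences at repeated origins.

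The genuine gap is in how the infeasibility is converted into a $\Cc^{3c'}$-distinguishing formula. You propose a direct Spoiler strategy that ``walks the cut,'' and you defer the pebble bookkeeping to the analysis of \cite{lichter2023separating} for regular base graphs. This deferral is exactly the step that cannot be made: in \cite{lichter2023separating} the base graphs are highly connected precisely so that \emph{no} such cheap Spoiler strategy exists, so no parallel argument bounds the pebble number there. The paper instead uses the following two-step localization, which your proposal does not contain. First, since $\sum S = \sum T$ at each origin, the component-wise sums of the required shifts add up to zero, so if the system is infeasible then \emph{at least two} components of $G - B - \{v\}$ carry a nonzero shift. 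Second, niceness condition~4 guarantees that at most one component fails to be an induced subgraph of a grid of height $c'$; hence at least one offending component \emph{is} such a grid subgraph, and there the two induced CFI structures are non-isomorphic CFI graphs over a graph of treewidth at most $c'$, which Fürer's theorem \cite{Furer2001} already shows are not $\Cc^{c'+1}$-equivalent. That is where the bound $3c'$ comes from; no bespoke walk or girth argument is needed (the girth and degree conditions play no role in this lemma, and your appeal to the specific planar construction of \cref{lem:nice-planar-graphs} is out of place since the lemma is stated for arbitrary nice base graphs). Without the ``two bad components, one of them a grid'' argument, your plan has no way to rule out that the entire obstruction hides in the single large component, where Spoiler provably cannot win with $3c'$ pebbles.
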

	\begin{proof}
		
	We start with the following special case:
	\begin{claim}[restate = niceImpliesHomogeneousStep, name = ]
		\label{cln:nice-implies-homogeneous-step}
		Let $\bar{a} = \bar{\gamma}x$ and $\bar{b} = \bar{\gamma}y$
		be tuples of length $c' \leq c$ of $\CFI^*[\bbZ_{2^i},G, U]$.
		If $\bar{a} $ and $\bar{b}$ are $\Cc^{3c'}$-equivalent,
		then $\bar{a}$ and $\bar{b}$ are in the same $c'$-orbit. 
	\end{claim}
	\begin{proof}[Proof Sketch]
		The vertices $x$ and $y$ must have the same origin $v$,
		so let $x = (v,S)$ and $y = (v,T)$ for some $S,T \in \bbZ_{2^i}^{E(v)}$.
		To construct an automorphism $\pi$ that pointwise fixes $\bar{\gamma}$
		and maps $x$ to $y$,
		we have to shift the edges $F:=\setcond{e \in E(v)} {S(e) \neq T(e)}$.
		Let $B$ be the set of all origins of vertices in $\bar{\gamma}$.
		Let $\Pp$ be the partition of $F$ according to the connected components of $G-B-\{v\}$
		into which the edges in $F$ lead.
		Such an automorphism $\pi$ exists if and only if
		every $P \in \Pp$ satisfies $\sum_{e\in P} S(e) -T(e) = 0$.
		Suppose this is not the case.
		At least two parts of $\Pp$ do not satisfy the condition, since $\sum S = \sum T$.
		Because $G$ is nice, the corresponding connected component
		of at least one of the parts is an induced subgraph of a grid of height~$c'$.
		Because non-isomorphic CFI graphs over grids of height~$c'$
		are not $\Cc^{3c'}$-equivalent~\cite{Furer2001}, 
		the tuples $\bar{a} $ and $\bar{b}$ are not $\Cc^{3c'}$-equivalent,
		which is a contradiction.
		For the full proof see \cref{app:inv-map-hom-indistinguish}.
	\end{proof}
To prove the lemma, first note that if two tuples are in the same orbit,
	then they are equivalent in every logic.
	So it remains to prove the other direction.
	We show by induction on the length~$c'$ of the tuples~$\bar{a}$ and~$\bar{b}$
	that if~$\bar{a}$ and~$\bar{b}$ are $\Cc^{3c'}$-equivalent,
	then they are in the same $c'$-orbit,
	i.e., there is an automorphism of $\CFI^*[\bbZ_{2^i},G,U]$ that maps~$\bar{a}$ to~$\bar{b}$.
	
	For $c'=1$, the result follows from Claim~\ref{cln:nice-implies-homogeneous-step} using~$\gamma$ as the empty tuple.
	For the inductive step, assume $\bar{a} = \bar{a}'x$ and $\bar{b} = \bar{b}'y$ are $\Cc^{3(c+1)'}$\nobreakdash-equivalent.
	Then~$\bar{a}'$ and~$\bar{b}'$ are $\Cc^{3c'}$\nobreakdash-equivalent.
	By induction, there exists an automorphism~$\pi'$ such that $\pi'(\bar{a}') = \bar{b}'$.
	Then the tuples~$\pi'(\bar{a})$ and~$\bar{b}$ agree on all entries except potentially the last one.
	They are $\Cc^{3(c+1)'}$\nobreakdash-equivalent because logical formulas do not distinguish between tuples in the same orbit.
	By Claim~\ref{cln:nice-implies-homogeneous-step},
	there is an automorphism $\pi$ such that $\pi(\pi'(\bar{a})) = \bar{b}$.
	So~$\bar{a}$ and~$\bar{b}$ are in the same orbit.
	\end{proof}
For a graph $G$, we call two sets $V,W\subseteq V(G)$
	adjacent if there are $v \in V$ and $w\in W$
	such that~$v$ and~$w$ are adjacent in~$G$.
	\begin{lemma}[restate = orbitsIndependentNice, name = ]
		\label{lem:orbits-independent-nice}
		Let $i \in \bbN$, $G$ be an $(r,d,g,c)$-nice ordered base graph witnessed by a vertex $w \in V(G)$, and let $U \in \bbZ_{2^i}^{V(G)}$.
		Furthermore, let $\phi$ be an automorphism of $\CFI^*[\bbZ_{2^i},G, U]$.
		If $\bar{x}$, $\bar{y}$, and $\bar{z}$ are tuples of $\CFI^*[\bbZ_{2^i},G, U]$
		such that
		\begin{enumerate}
			\item $|\bar{x}\bar{y}\bar{z}| \leq c$,
			\item the sets of all origins of vertices in $\bar{x}$, $\bar{y}$, and $\bar{z}$, respectively, are pairwise not adjacent in~$G$, and
			\item all origins of vertices in $\bar{x}$ and $\bar{y}$ are contained in the $(r-1)$-ball around $w$,
		\end{enumerate}
		then $\bar{x}\bar{y}\bar{z}$,  $\phi(\bar{x})\bar{y}\bar{z}$, and $\bar{x}\phi(\bar{y})\bar{z}$ are in the same orbit of $\CFI^*[\bbZ_{2^i},G, U]$.
	\end{lemma}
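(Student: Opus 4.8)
The plan is to make the automorphism group of $\CFI^*[\bbZ_{2^i},G,U]$ completely explicit and then to build the two required automorphisms by ``re-routing'' $\phi$ through the well-connected region of $G$ around the witness vertex~$w$.

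\emph{Step 1: automorphisms are flows.} Every automorphism of $\CFI^*[\bbZ_{2^i},G,U]$ fixes each class $V_u$ setwise, since the preorder $\preceq$ has exactly the $V_u$ as its classes and orders them by the total order $\leq^G$, so the induced order-automorphism on classes is the identity. Using $N_{u,v}$ and $C_{u,v}$ one sees that, on $V_u$, the automorphism is a translation $(u,S)\mapsto(u,S+\tau_u)$ for a unique $\tau_u\in\bbZ_{2^i}^{E(u)}$; membership of the image in $V_u$ forces $\sum\tau_u=0$, and preservation of the relations $I_j$ forces $\tau_u(uv)=-\tau_v(uv)$ for every edge $uv$. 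Hence $c:=\bigl(\tau_u(uv)\bigr)$ is a $\bbZ_{2^i}$-valued flow on $G$ (an antisymmetric arc function, conservative at every vertex), and conversely every flow defines an automorphism this way. So $\Aut\bigl(\CFI^*[\bbZ_{2^i},G,U]\bigr)$ is the group of flows, and two tuples of vertices lie in the same orbit iff some flow induces the required translation on every origin occurring in them. This is the rigorous form of the path-isomorphism/cycle-automorphism discussion above.

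\emph{Step 2: localising $\phi$.} Let $c$ be the flow of $\phi$ and let $X,Y,Z\subseteq V(G)$ be the pairwise non-adjacent (and, as we may assume, pairwise disjoint) sets of origins occurring in $\bar x,\bar y,\bar z$; put $V':=X\cup Y\cup Z$, so $|V'|\le c$. Because $X,Y,Z$ are pairwise non-adjacent and disjoint, each edge of $G$ is incident to at most one of them, so we may define a partial arc function $c'$ by setting $c':=c$ on every edge incident to $X$ and $c':=0$ on every edge incident to $Y\cup Z$ without conflict; moreover flow conservation already holds at every vertex of $V'$ (trivially at $Y$- and $Z$-vertices, and at $X$-vertices because $c$ is a flow). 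It remains to extend $c'$ over the edges of $G-V'$ so that conservation holds everywhere. By the circulation theorem over the abelian group $\bbZ_{2^i}$ (route the prescribed demands along a spanning tree of each component), such an extension exists precisely when, on each connected component of $G-V'$, the demand induced by the already-fixed edges sums to~$0$.

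\emph{Step 3: niceness closes the argument.} A vertex of $G-V'$ carries nonzero demand only if it is adjacent to $X$, since edges meeting $Y\cup Z$ carry the prescribed value $0$. As the origins in $X$ lie in the $(r-1)$-ball around $w$, every such vertex lies in the $r$-ball $W$, hence---since $|V'|\le c$ and $G$ is $(r,d,g,c)$-nice---in the single component $K$ of $G-V'$ containing $W\setminus V'$. A routine double count shows the total demand over all of $G-V'$ is~$0$ (all of $V'$ is conservative and each free edge contributes oppositely to two demands); therefore the demand sums to $0$ on $K$ and is identically $0$ on every other component, so $c'$ exists. The automorphism $\psi$ defined by $c'$ agrees with $\phi$ on $V_u$ for $u\in X$ and is the identity on $V_u$ for $u\in Y\cup Z$, whence $\psi(\bar x\bar y\bar z)=\phi(\bar x)\,\bar y\,\bar z$. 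Repeating the construction with the roles of $X$ and $Y$ interchanged---legitimate because the origins of $\bar y$ also lie in the $(r-1)$-ball by hypothesis---gives an automorphism sending $\bar x\bar y\bar z$ to $\bar x\,\phi(\bar y)\,\bar z$, and transitivity of the orbit relation finishes the proof. The load-bearing step is Step~3: one must check that the demand of the partial flow is supported inside the ball $W$ and globally balanced, which is exactly where the local connectivity in the niceness definition (deleting $\le c$ vertices leaves $W$ in one component) is indispensable; none of the other clauses of niceness is needed here, and the remaining work of unwinding the relations $\preceq,N_{u,v},C_{u,v},I_j$ is routine bookkeeping.
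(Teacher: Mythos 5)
Your proof is correct, and it reaches the conclusion by a route that differs in execution from the paper's. The paper decomposes $\phi$ into cycle-automorphisms and then \emph{re-routes} each cycle: for every cycle meeting an origin of $\bar x$, it replaces the arcs away from $O(\bar x)$ by paths inside the single large component of $G-O(\bar x)-O(\bar y)-O(\bar z)$ guaranteed by niceness, obtaining a new cycle-automorphism that agrees with the old one on $\bar x$ but fixes $\bar y$ and $\bar z$. You instead identify $\Aut(\CFI^*[\bbZ_{2^i},G,U])$ outright with the group of $\bbZ_{2^i}$-circulations on $G$ (your Step~1 is a correct unwinding of $\preceq$, $N_{u,v}$, $C_{u,v}$, $I_j$), restrict the circulation of $\phi$ to the edges meeting $X$, zero it on the edges meeting $Y\cup Z$, and extend to a global circulation by a demand-feasibility argument. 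Both arguments lean on exactly the same clause of niceness (deleting $\le c$ vertices leaves the ball $W$ in one component): in the paper it supplies the re-routing paths, in your version it forces all nonzero demands into the single component $K$, where the global balance computation (conservation at $V'$ plus antisymmetry) makes the extension feasible. Your version is arguably cleaner, as it avoids the bookkeeping of cycle decompositions and wrap-around indices, and it makes explicit which structural facts are actually used. Two small remarks: your parenthetical that each \emph{free} edge "contributes oppositely to two demands" is slightly misphrased (the demands are induced by the \emph{prescribed} edges; the cancellation you need is conservation at $V'$ together with antisymmetry on edges inside $V'$), but the double count itself is right; and your assumption that $X$, $Y$, $Z$ are pairwise disjoint is genuinely needed for the partial assignment to be well defined --- the paper's proof tacitly needs the same, and in the lemma's application the origin sets are indeed disjoint, so this is a shared edge case of the statement rather than a gap in your argument.
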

	The proof of \cref{lem:orbits-independent-nice}
	makes use of  standard arguments for CFI graphs and cycle-automorphisms.
	Such cycles can always be found for~$\bar{x}$ and~$\bar{y}$
	because removing all origins of vertices in~$\bar{x}$ and~$\bar{y}$
	does not disconnect~$G$ because~$G$ is nice (see \cref{app:inv-map-hom-indistinguish} for details).

	\begin{lemma}
		\label{lem:nice-implies-2equiv}
		For every $k \in \bbN$, there are $r,d,g,c,i \in \bbN$ such that,
		for every $(r,d,g,c)$\nobreakdash-nice ordered base graph $G$
		and every $U,U' \in \bbZ_{2^i}^V$ such that $\sum U = \sum U' + 2^{i-1}$,
		we have $\CFI^*[\bbZ_{2^i}, G, U] \ImEquiv{k}{\{2\}} \CFI^*[\bbZ_{2^i}, G, U']$.
	\end{lemma}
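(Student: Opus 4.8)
The plan is to hand Duplicator a winning strategy in the invertible-map game $\Mm^{k,\{2\}}$ played on $\CFI^*[\bbZ_{2^i}, G, U]$ and $\CFI^*[\bbZ_{2^i}, G, U']$; by the characterisation of invertible-map equivalence through this game recalled above, that suffices. I would set $c := k$, take $d$ and $g$ at least as large as the thresholds needed to apply \cref{lem:nice-implies-homogeneous} and \cref{lem:orbits-independent-nice}, take $r$ large enough that after deleting any at most $k$ vertices, any two vertices of the $r$-ball $W$ around the witnessing vertex $w$ can still be joined by a path inside $W$ avoiding any prescribed set of at most $k$ further vertices (this is the connectivity clause in the definition of nice graph with $c=k$, plus radius to spare), and finally take $i$ large enough. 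The choice of $i$ is the quantitative crux and is inherited from \cite{lichter2023separating}: $i$ must be so large that the $\bbF_2$-linear-algebraic invariants the game can read off a CFI structure cannot tell it apart from its twist by the unique order-two element $2^{i-1}$ of $\bbZ_{2^i}$.

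Duplicator's invariant at a position $(\bar a, \bar b)$ with $|\bar a| = |\bar b| = m \le k$ is: there are $\hat U, \hat U' \in \bbZ_{2^i}^{V(G)}$ with $\CFI^*[\bbZ_{2^i}, G, \hat U] \cong \CFI^*[\bbZ_{2^i}, G, U]$ and $\CFI^*[\bbZ_{2^i}, G, \hat U'] \cong \CFI^*[\bbZ_{2^i}, G, U']$ via path-isomorphisms, together with a vertex $u_0$ at distance more than some threshold $\rho = \rho(k)$ from $w$ and from every origin occurring in $\bar a$, such that $\hat U = \hat U' + 2^{i-1}u_0$, and such that $\bar a$ and $\bar b$ transported along these path-isomorphisms become one and the same tuple of vertices, lying entirely outside the fibre over $u_0$. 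Since the two structures $\CFI^*[\bbZ_{2^i}, G, \hat U]$ and $\CFI^*[\bbZ_{2^i}, G, \hat U']$ share the induced substructure on the union of all vertex fibres other than the one over $u_0$, the pebbled vertices then define a partial isomorphism (pull the identity back along the path-isomorphisms), so Duplicator never loses on that count. As $\sum \hat U$ and $\sum \hat U'$ differ by $2^{i-1}$, this invariant is compatible with $\CFI^*[\bbZ_{2^i}, G, U] \not\cong \CFI^*[\bbZ_{2^i}, G, U']$ even though the structures look identical near the pebbles.

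For a round, suppose Spoiler picks $\ell$ with $2\ell \le k$ and removes $2\ell$ pebbles, leaving a partial position $(\bar a_0,\bar b_0)$ with $|\bar a_0| \le k$. Duplicator first uses path-isomorphisms (\cref{lem:iso}) and the rerouting slack of $r$-niceness to push $u_0$ far from the origins in $\bar a_0$ as well, updating $\hat U, \hat U'$ but not the position. She partitions $A^\ell \times A^\ell$ into the orbits of $A^\ell \times A^\ell$ under the automorphisms of $\CFI^*[\bbZ_{2^i}, G, \hat U]$ fixing $\bar a_0$ pointwise, which by \cref{lem:nice-implies-homogeneous} (usable since $|\bar a_0|+2\ell \le k = c$) are exactly the $\Cc^{3(|\bar a_0|+2\ell)}$-classes of the tuples $\bar a_0\bar u_1\bar u_2$ — so the partition is coarse enough to be well defined and fine enough for the sequel — and partitions $B^\ell \times B^\ell$ in the mirror way via $\CFI^*[\bbZ_{2^i}, G, \hat U']$, with $f$ matching an orbit to its counterpart read off under the identification away from $u_0$. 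The remaining task is to exhibit one invertible $\bbF_2$-matrix $S$ with $\chi^P = S\,\chi^{f(P)}\,S^{-1}$ for all blocks $P$ at once. This is the main obstacle: $S$ cannot be a permutation matrix, since the structures are non-isomorphic, so instead one imports and localises the key construction of \cite{lichter2023separating}, where $S$ is an $\bbF_2$-linear automorphism realising a slide of the order-two twist along a path inside $W$, exploiting that for $\bbZ_{2^i}$ with $i$ large the characteristic matrices of corresponding orbits over the twisted and over the untwisted structure admit a common conjugator; proving that this single $S$ simultaneously conjugates all orbit matrices, over $\bbF_2$ and confined to $W$ by niceness, is the technical heart.

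Finally, after Spoiler fixes a block $P$, tuples $\bar u \in P$ and $\bar v \in f(P)$ and places the $2\ell$ pebbles, Duplicator re-establishes the invariant: the new origins are either within distance $\rho$ of the old pebbles or of $w$ — where the two structures literally coincide and the correspondence is forced — or far from them, in which case \cref{lem:orbits-independent-nice} makes membership in matching orbits decouple across the old tuple, the new tuple and the remainder, so Duplicator can again produce $\hat U, \hat U'$ differing by one far-away order-two twist with the whole new position in correspondence. Hence the invariant survives the round, Duplicator wins $\Mm^{k,\{2\}}$, and $\CFI^*[\bbZ_{2^i}, G, U] \ImEquiv{k}{\{2\}} \CFI^*[\bbZ_{2^i}, G, U']$. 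Besides the construction and correctness of $S$, I expect the second genuinely delicate point to be verifying that the $r$-niceness parameters really do support every path-rerouting and orbit-decoupling step uniformly in the position; everything else is routine CFI game bookkeeping.
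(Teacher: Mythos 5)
Your proposal follows the same route as the paper's proof: set up a winning strategy for Duplicator in the characteristic-$2$ invertible-map game, maintain an invariant in which the two boards are identified by path-isomorphisms up to a single order-two twist at a vertex far from the pebbled origins, play orbit partitions, import the similarity-matrix construction from Lichter's rank-logic separation, and use the orbit-decoupling lemma (\cref{lem:orbits-independent-nice}) to argue that that construction only ever relies on local structure around the active twist, which niceness guarantees. Two small but real imprecisions to fix: first, the parameters $r,d,g,c,i$ cannot simply be set by what \cref{lem:nice-implies-homogeneous} and \cref{lem:orbits-independent-nice} require (and $c:=k$ in particular almost certainly undershoots); they must be taken from the bounds in~\cite{lichter2023separating} — the paper takes $d := \max_{2m\le k} d(m,k-2m)$ and analogously for $g,c,r,i$, because those are the thresholds under which the similarity matrix exists — and then the graph is asked to be $(2(k+2)r,d,g,c)$-nice to leave room for $r$-balls around every candidate twist location. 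Second, the twist's position must be required to have its $(r+1)$-ball \emph{contained in} the nice region (i.e.\ near the witness vertex), not ``at distance more than $\rho$ from $w$''; the whole point of the nice region is that it is the only part of the graph with the degree, girth, and connectivity needed for the similarity-matrix argument, so pushing the twist away from the witness would strand it where the construction fails. With these corrected, the argument matches the paper's.
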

	\begin{proof}
		The proof is based on a close inspection of the proof in~\cite{lichter2023separating}:
		For every $2m \leq k$,
		base graphs of degree at least $d(m,k-2m)$,
		girth at least $g(m,k-2m)$,
		and vertex-connectivity at least $c(m,k-2m)$ are considered
		(for the definitions of~$d$,~$g$, and~$c$, see~\cite{lichter2023separating}).
		Of particular interest is the $r(m,k-2m)$-ball around some vertex,
		which we will see later.
		The CFI graphs are constructed over $\bbZ_{2^i}$, for some $i(m,k-2m) \in \bbN$.
		Define
		$d = d(k) := \max_{2m \leq k} d(m,k-2m)$
		and define $g= g(k)$, $c=c(k)$, $r=r(k)$, and $i=i(k)$ analogously.
		
		Assume $G$ is a $(2(k+2)r, d, g, c)$-nice and ordered base graph
		and let $u\in V(G)$ be a vertex witnessing this.
		We call the $4(k+2)r$-ball around $u$ the \emph{nice region} of $G$.
		Let $U,U' \in \bbZ_{2^i}^V$ with $\sum U = \sum U' + 2^{i-1}$
		and consider
		$\AA := \CFI^*[\bbZ_{2^i}, G, U]$ and $\BB := \CFI^*[\bbZ_{2^i}, G, U']$.
		To~prove $\AA \equiv^{\IM}_{k,\{2\}} \BB$,
		we show that Duplicator wins the characteristic $2$ IM game with $k$-pebbles $\Mm^{k,\{2\}}$ played on~$\AA$ and~$\BB$.	
		Duplicator maintains as invariant that
		in position $\bar{v}, \bar{v}'$,
		there is an isomorphism $\phi \colon \BB \to \BB'$
		where $\BB' := \CFI[\bbZ_{2^i}, G, U'']$ for some $U'' \in \bbZ_{2^i}^V$
		such that
		\begin{enumerate}
			\item $\phi(\bar{v}') = \bar{v}$,
			\item there is only a single vertex $w \in V$ such that $U(w) \neq U''(w)$ that we call \emph{twisted}, and
			\item the $(r+1)$-ball around $w$ is contained in the nice region and
			does not contain the origin of a vertex in $\bar{v}$.
		\end{enumerate}
		Clearly, the invariant holds initially.
		So assume that the invariant holds by the inductive hypothesis
		and that it is Spoiler's turn.
		W.l.o.g., we can assume to play on $\AA$ and $\BB'$
		in position $\bar{v},\bar{v}$.
		Spoiler chooses an arity $2m \leq k$
		and picks up~$2m$ pebbles from~$\AA$ and the corresponding ones (with the same labels) from~$\BB'$.
		Duplicator picks the $2m$-orbit partition~$\Pp$ of $(\AA,\bar{v})$,
		and the $2m$-orbit partition~$\Pp'$ of $(\BB,\bar{v})$.
		We construct a suitable bijection $\Pp \to \Pp'$ using the techniques of~\cite{lichter2023separating}.
If $G$ were regular with degree at least $d(m,k-2m)$, of girth at least $g(m, k-2m)$, and of vertex-connectivity at least $c(m, k-2m)$,
		then there would indeed be a similarity matrix as required by the game~\cite{lichter2023separating}.
		One crucial property of base graphs with vertex-connectivity strictly larger than $k$ is the following:
		Let $\bar{x}\bar{y}$ be a tuple of $\AA$ of length at most $k$
	 	such that the set of all origins of vertices in $\bar{x}$ is not adjacent to the
		same set for $\bar{y}$.
		In this case, automorphisms can be applied independently, that is,
		if $\phi$ is an automorphism, then $\bar{x}\bar{y}$ 
		is in the same orbit as $\phi(\bar{x})\bar{y}$, $\bar{x}\phi(\bar{y})$,
		and $\phi(\bar{x}\bar{y})$.
		The construction of the similarity matrix in~\cite{lichter2023separating} heavily depends on this fact.
		However, non-trivial automorphisms are only applied to such parts of tuples,
		for which all entries are contained in the $r(m,k-2m)$-ball around the twisted vertex (called the ``active region'' in~\cite{lichter2023separating}).
		This still holds for the $(2(k+2)r,d,g,c)$-nice base graph~$G$,
		if the $r(m,k-2m)$-ball around the twisted vertex $w$ is contained in the nice region:
		Let $\bar{x}\bar{y}\bar{z}$ be a tuple of vertices of~$\AA$ of length at most~$k$
		such that the sets of all origins of vertices of~$\bar{x}$,~$\bar{y}$, and respectively~$\bar{z}$ are pairwise not adjacent and the sets of all origins of vertices of~$\bar{x}$ and~$\bar{y}$
		are contained within the $r$-ball around~$w$.
		Then automorphisms can be applied independently in the sense above (Lemma~\ref{lem:orbits-independent-nice}).
		Hence, the same construction of the similarity matrix of~\cite{lichter2023separating}
		can also be applied here.
		All arguments requiring large girth and degree only consider vertices in the ``active region'', for which we also have long cycles and large degree in the nice region.
		
		Spoiler pebbles a $2m$-tuple in some block $P \in \Pp$
		and a $2m$-tuple in $f(P) \in \Pp'$
		resulting in the position~$\bar{v}''$ and~$\bar{v}'''$.
		By the properties of the similarity matrix
		and the bijection from~\cite{lichter2023separating},
		the pebbles define a partial isomorphism, and 
		there is an isomorphism $\psi \colon \AA \to \BB''$
		such that $\psi(\bar{v}'') = \bar{v}'''$
		and there is only a single twisted vertex between~$\AA$ and~$\BB''$.
		That is, Conditions~1 and~2 of the invariant are satisfied.
		
		To satisfy Condition~3, we use a path-isomorphism to move the twist to a vertex
		which has distance at least~$r$ to all origins of vertices in~$\bar{v}$
		as follows.
		Because~$G$ is nice, we can move the twist to all vertices in the nice region (because removing the origins of pebbled vertices does not separate the nice region).
		Because at most~$k$ vertices in the nice region are pebbled,
		the vertices in at most~$k$ many $r$-balls have distance less than~$r$ to 
		the all origins of vertices in~$\bar{v}''$.
		Since the nice region is a $4(k+2)r$-ball around~$u$,
		there is a vertex in the nice region whose $(r+1)$-ball is not pebbled and contained in the nice region.
		We move the twist to such a vertex.
		Duplicator maintains the invariant and wins the invertible-map game.
	\end{proof}
	
	\begin{lemma}
		\label{lem:planar-base-graph-im-equivalent-structures}
		For every $k \in \bbN$, there is a planar ordered base graph $G$
		and an $i \in \bbN$ such that, for all $U,U' \in \bbZ_{2^i}^{V(G)}$
		with $\sum U = \sum U' + 2^{i-1}$, we have
		$\CFI^*[\bbZ_{2^i}, G, U] \ImEquiv{k}{\bbP}\CFI^*[\bbZ_{2^i}, G, U']$.
	\end{lemma}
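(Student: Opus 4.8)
The plan is to assemble the statement from \cref{lem:nice-planar-graphs} and \cref{lem:nice-implies-2equiv}, and then to push invertible-map equivalence over $\{2\}$ up to invertible-map equivalence over all of $\bbP$.

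First, given $k$, I would take the parameters $r,d,g,c,i \in \bbN$ supplied by \cref{lem:nice-implies-2equiv}, set $n := \max\{r,d,g,c\}$, and invoke \cref{lem:nice-planar-graphs} to obtain a planar $(n,2n,2n,n)$-nice graph $G$. Since $n \geq r$, $2n \geq d$, $2n \geq g$ and $n \geq c$, a direct unravelling of the definition of niceness shows that $G$ is also $(r,d,g,c)$-nice: passing to a smaller ball around the distinguished vertex only weakens conditions (1)--(3), the inequalities $2n \geq d$ and $2n \geq g$ give the degree and girth bounds, and condition (4) for sets $V'$ with $|V'| \leq c$ is a special case of condition (4) for sets with $|V'| \leq n$. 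Equipping $V(G)$ with an arbitrary total order turns $G$ into a planar ordered base graph, and for this $G$ and this $i$, \cref{lem:nice-implies-2equiv} gives $\CFI^*[\bbZ_{2^i}, G, U] \ImEquiv{k}{\{2\}} \CFI^*[\bbZ_{2^i}, G, U']$ whenever $\sum U = \sum U' + 2^{i-1}$.

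Writing $\AA := \CFI^*[\bbZ_{2^i}, G, U]$ and $\BB := \CFI^*[\bbZ_{2^i}, G, U']$, it then remains to upgrade $\AA \ImEquiv{k}{\{2\}} \BB$ to $\AA \ImEquiv{k}{\bbP} \BB$, i.e.\ to show that Duplicator also wins those rounds of the invertible-map game $\Mm^{k,\bbP}$ in which Spoiler picks an odd prime. The crucial point is that the winning strategy behind \cref{lem:nice-implies-2equiv} is prime-independent where it matters: the invariant it maintains (an isomorphism onto a CFI structure $\CFI[\bbZ_{2^i}, G, U'']$ differing from $\AA$ in a single twisted vertex whose $(r+1)$-ball lies inside the nice region and avoids the origins of pebbled vertices) does not refer to the prime, and in each round Duplicator answers with the two $2m$-orbit partitions of the current structures together with an invertible matrix over $\bbF_p$ witnessing their simultaneous similarity. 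Since $\AA \ImEquiv{k}{\{2\}} \BB$ already implies $\AA \equivC{k} \BB$ (because $\LA{k}{Q}$ subsumes $\Cc^k$ for every $Q$), it therefore suffices to show: in any position satisfying this invariant, and for every prime $p$ and every $2m \leq k$, the two $2m$-orbit partitions admit a simultaneous similarity over $\bbF_p$; for $p = 2$ this is precisely \cref{lem:nice-implies-2equiv}.

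The hard part will be producing this $\bbF_p$-similarity for odd $p$, and here I would re-inspect \cite{lichter2023separating}, which in fact proves invertible-map equivalence over \emph{all} primes, so that odd characteristic is already handled there; what needs checking is only that the construction survives the relaxation from globally regular, high-girth, highly connected base graphs to our locally nice $G$ --- exactly as in the proof of \cref{lem:nice-implies-2equiv}, replacing the global regularity and connectivity arguments by their local counterparts via \cref{lem:nice-implies-homogeneous} and \cref{lem:orbits-independent-nice} applied inside the nice region. Conceptually, the odd case should be the benign one: the single-edge CFI twist separating $\AA$ from $\CFI[\bbZ_{2^i}, G, U'']$ is a modulo-$2$ phenomenon, so over $\bbF_p$ with $p$ odd it can be absorbed by a $\pm 1$-scaling, which is invertible over $\bbF_p$; together with the fact that, away from the twisted vertex, $\Cc^k$-equivalence forces the two $2m$-orbit partitions to agree up to a permutation (by the orbit analysis of \cref{lem:nice-implies-homogeneous}), this produces the required similarity. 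I expect this transfer to be the most laborious step --- a bookkeeping-heavy replay of Lichter's argument restricted to nice base graphs --- rather than a source of genuinely new difficulty.
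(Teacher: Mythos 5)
Your first half matches the paper: choosing the parameters from \cref{lem:nice-implies-2equiv}, taking the maximum, invoking \cref{lem:nice-planar-graphs}, and observing that an $(n,2n,2n,n)$-nice graph is $(r,d,g,c)$-nice is exactly what the paper does. The gap is in the passage from $\ImEquiv{k}{\{2\}}$ to $\ImEquiv{k}{\bbP}$. Your claim that \cite{lichter2023separating} ``in fact proves invertible-map equivalence over all primes, so that odd characteristic is already handled there'' is not correct: that construction only yields the characteristic-$2$ equivalence (this is exactly what \cref{lem:nice-implies-2equiv} extracts), and the odd-prime case is not a variant of the same argument. The similarity matrices over $\bbF_2$ come from the specific characteristic-$2$ blurring of the CFI twist; over odd $p$ they come from an entirely different source, namely the homogeneity theorem of Dawar, Gr\"adel, and Pakusa \cite{limitationsOfIM}: for structures whose orbits are definable in counting logic (which is what \cref{lem:nice-implies-homogeneous} supplies, and why that lemma is needed here at all), invertible-map equivalence over $\bbP\setminus\{2\}$ collapses to counting-logic equivalence. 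Your proposed mechanism --- absorbing the twist by a ``$\pm1$-scaling'' --- is a heuristic, not an argument; the actual proof is representation-theoretic and is imported as a black box.

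There is a second, related gap: even granting per-prime equivalences $\ImEquiv{k'}{\{2\}}$ and $\ImEquiv{k'}{\bbP\setminus\{2\}}$, Spoiler may alternate primes across rounds of $\Mm^{k,\bbP}$, so the two Duplicator strategies must be merged into one. The paper does this via \cite[Lemma~10]{dawarGraedelLichter}, which again requires orbit-definability and costs variables --- this is why the paper applies \cref{lem:nice-implies-2equiv} with $k' = 3k+1$ rather than $k$, whereas you instantiate it with $k$ itself. Your plan of showing that, in every position satisfying the invariant, the $2m$-orbit partitions are simultaneously similar over \emph{every} $\bbF_p$ would indeed yield a unified strategy without variable loss, but that statement for odd $p$ is precisely the hard content of \cite{limitationsOfIM} relativised to pebbled parameters, and nothing in your proposal establishes it.
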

	\begin{proof}
		Let $k \in \bbN$ be arbitrary.
		Let $r,d,g,c$, and $i$ be the constants given by \cref{lem:nice-implies-2equiv} for $k' := 3k+1$
		and let $\ell := \max\{r,d,g,c\}$.
		By \cref{lem:nice-planar-graphs}, there is a planar graph $G=(V,E)$
		that is $(\ell,2\ell,2\ell,\ell)$-nice.
		One easily sees that $G$ is also $(r,d,g,c)$-nice.
		Hence, \[\CFI^*[\bbZ_{2^i}, G, U] \equiv^{\IM}_{3k+1,\{2\}} \CFI^*[\bbZ_{2^i}, G, U']\] by \cref{lem:nice-implies-2equiv} for all $U,U'\in \bbZ_{2^i}^V$
		with $\sum U = \sum U' + 2^{i-1}$.
		By \cref{lem:nice-implies-homogeneous},
		the $k'$\nobreakdash-orbits of these CFI structures
		are $\Cc^{3k'}$-definable
		and hence the class of CFI structures over $(\ell,2\ell,2\ell,\ell)$-nice and ordered base graphs is homogeneous in the sense of~\cite{limitationsOfIM}. From~\cite{limitationsOfIM} it follows that
		\[\CFI^*[\bbZ_{2^i}, G, U] \equiv^{\IM}_{3k+1,\mathbb{P} \setminus \{2\}} \CFI^*[\bbZ_{2^i}, G, U'].\]
		To show that these two equivalences 
		imply 
		\[\CFI^*[\bbZ_{2^i}, G, U] \equiv^{\IM}_{k,\mathbb{P}} \CFI^*[\bbZ_{2^i}, G, U'],\]
		we use the arguments from \cite[Lemma 10]{dawarGraedelLichter}.
		The authors prove for $k'' = k +2$ the following:
		If the $k$-orbits of two structures $H$ and $H'$ are definable in $\Cc^{k''}$ and for two sets of primes $P$ and $Q$ we have
		$H \equiv^{\IM}_{k''+1,P} H'$  and $H\equiv^{\IM}_{k''+1,Q} H'$,
		then $H \equiv^{\IM}_{k, P \cup Q} H'$.
		The same argument also applies for $k'' = 3k$
		and the claim of the lemma is proven.
	\end{proof}
	
	\begin{proof}[Proof of \cref{lem:planar-base-graph-im-equivalent}]
		Because $\CFI[\bbZ_{2^i}, G, U]$
		is up to renaming relation symbols
		a reduct of $\CFI^*[\bbZ_{2^i}, G, U]$
		(only the relation $I_0$ is kept),
		$\CFI^*[\bbZ_{2^i}, G, U] \equiv^{\IM}_{k,\mathbb{P}} \CFI^*[\bbZ_{2^i}, G, U']$ (Lemma~\ref{lem:planar-base-graph-im-equivalent-structures}) implies
		$\CFI[\bbZ_{2^i}, G, U] \equiv^{\IM}_{k,\mathbb{P}} \CFI[\bbZ_{2^i}, G, U']$.
	\end{proof}
	
\section{Comonads}
	In \cite{abramsky_pebbling_2017}, comonads on the category of relational structures were introduced which capture equivalences over certain fragments of first-order logic.
	For example, the \emph{pebbling comonad}~$\mathbb{T}_k$ has the property that two structures $\AA$ and $\BB$ satisfy the same sentences over $k$-variable first-order logic with counting quantifiers if and only if they are isomorphic in the co-Kleisli-category of~$\mathbb{T}_k$.
We refer the reader to \cite{dawar_lovasz-type_2021} and the previously mentioned references for formal definitions. 
	The following Lovász-type theorem for comonads allows us to derive \cref{thm:no-comonad} from \cref{thm:main}:
	
	\begin{theorem}[\cite{dawar_lovasz-type_2021,reggio_polyadic_2021}] \label{thm:dawar}
		Let $\mathbb{C}$ be a finite-rank comonad on the category of (not necessarily finite) graphs. Then there exists a graph class $\mathcal{F}$ such that two finite graphs  are isomorphic in the co-Kleisli category of $\mathbb{C}$ if and only if they are homomorphism indistinguishable over~$\mathcal{F}$.
	\end{theorem}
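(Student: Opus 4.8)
The plan is to prove the theorem by factoring it through a purely categorical Lovász-type argument together with the cofree–forgetful adjunction of $\mathbb{C}$. The first step is to move from the co-Kleisli category to the Eilenberg–Moore category: recall that $\mathcal{K}(\mathbb{C})$ is equivalent to the full subcategory of $\EM(\mathbb{C})$ spanned by the cofree coalgebras, via $A \mapsto (\mathbb{C}A,\delta_A)$. Hence two finite graphs $A$ and $B$ are isomorphic in $\mathcal{K}(\mathbb{C})$ if and only if the cofree coalgebras $(\mathbb{C}A,\delta_A)$ and $(\mathbb{C}B,\delta_B)$ are isomorphic in $\EM(\mathbb{C})$, so it suffices to detect the latter by counting morphisms.

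The heart of the argument is an abstract Lovász theorem inside $\EM(\mathbb{C})$: for coalgebras $X$ and $Y$, one has $X \cong Y$ if and only if $|\Hom_{\EM(\mathbb{C})}(C,X)| = |\Hom_{\EM(\mathbb{C})}(C,Y)|$ for every \emph{finite} coalgebra $C$ (i.e.\ one whose underlying graph is finite). I would prove this by transporting Lovász's original argument. The surjective/injective factorisation of graph homomorphisms lifts to $\EM(\mathbb{C})$; each coalgebra morphism factors through its image; and over the finite poset of quotient coalgebras of a fixed finite $C$ one performs Möbius inversion to convert the counts $|\Hom_{\EM(\mathbb{C})}(C,-)|$ into injective-morphism counts. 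A largest-object argument then extracts mutual coalgebra embeddings $X \hookrightarrow Y$ and $Y \hookrightarrow X$, which one upgrades to an isomorphism. This last step is where the hypothesis that $\mathbb{C}$ has \emph{finite rank} is indispensable: the cofree coalgebra $\mathbb{C}A$ of a finite graph is in general an infinite graph, and finite rank guarantees that it is the directed colimit of its finite sub-coalgebras, so that $\EM(\mathbb{C})$ is finitely accessible and a cofree coalgebra is determined up to isomorphism by the class of finite coalgebras admitting a morphism into it.

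It remains to translate morphism counts in $\EM(\mathbb{C})$ into ordinary homomorphism counts. Since $A \mapsto (\mathbb{C}A,\delta_A)$ is right adjoint to the forgetful functor $U \colon \EM(\mathbb{C}) \to \mathsf{Graph}$, for any coalgebra $(F,\alpha)$ and any graph $A$ there is a natural bijection $\Hom_{\EM(\mathbb{C})}\big((F,\alpha),(\mathbb{C}A,\delta_A)\big) \cong \Hom_{\mathsf{Graph}}(F,A)$; in particular its cardinality equals $\hom(F,A)$ and is independent of the coalgebra structure $\alpha$. Defining $\mathcal{F}$ to be the class of finite graphs that carry some $\mathbb{C}$-coalgebra structure, the abstract Lovász theorem then reads: $A$ and $B$ are isomorphic in $\mathcal{K}(\mathbb{C})$ if and only if $\hom(F,A) = \hom(F,B)$ for every $F \in \mathcal{F}$, i.e.\ $A \homEquiv{\mathcal{F}} B$, as required. (For the pebbling comonad $\mathbb{T}_k$ this $\mathcal{F}$ is precisely the class of graphs of treewidth $<k$, recovering Dvořák's theorem.)

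I expect the main obstacle to be the abstract Lovász theorem in $\EM(\mathbb{C})$, and within it the largest-object step: Lovász's argument crucially inspects the count $|\Hom(X,X)|$ for the object $X$ itself, but here $X = (\mathbb{C}A,\delta_A)$ is not finite, so one cannot simply plug it in. Making the argument work requires replacing ``the largest finite object'' by a compactness/accessibility argument over the finite sub-coalgebras of $\mathbb{C}A$, and checking that this interacts correctly with the factorisation system and the Möbius inversion — this is exactly the point at which finite rank enters, and it is the technically delicate part of the whole proof.
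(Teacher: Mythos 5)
The paper does not prove \cref{thm:dawar}: it is stated as a black box with citations to Dawar--Jakl--Reggio and Reggio, and the text immediately after it refers the reader to those works for the definition of finite rank. There is therefore no ``paper's own proof'' to compare against. What I can assess is whether your reconstruction is faithful to the cited arguments and internally sound.

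Your outline matches the shape of the Dawar--Jakl--Reggio argument closely: pass from the co-Kleisli category to $\EM(\mathbb{C})$ via the full and faithful embedding $A \mapsto (\mathbb{C}A,\delta_A)$; prove a categorical Lov\'asz theorem inside $\EM(\mathbb{C})$ by lifting the epi--mono factorisation and running M\"obius inversion over the finite poset of quotients (or, dually, sub-objects) of each finite coalgebra; translate morphism counts back to ordinary $\hom$-counts via the adjunction $\Hom_{\EM(\mathbb{C})}\bigl((F,\alpha),(\mathbb{C}A,\delta_A)\bigr) \cong \Hom_{\mathsf{Graph}}(F,A)$; and take $\mathcal{F}$ to be the class of finite graphs admitting a $\mathbb{C}$-coalgebra structure. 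All of this is correct, including the observation that the last bijection is independent of the chosen coalgebra structure $\alpha$, which is what makes $\mathcal{F}$ well-defined as a class of unadorned graphs. You have also correctly located the genuine technical crux: in the classical Lov\'asz argument one plugs $F := G$ into $\hom(F,G) = \hom(F,H)$, but here the ``$G$'' is the cofree coalgebra $\mathbb{C}A$, which is typically infinite, so one cannot simply do this; finite rank is exactly the hypothesis that repairs the step by ensuring $\mathbb{C}A$ is a directed colimit of its finite sub-coalgebras and that $\EM(\mathbb{C})$ is accessible enough for hom-counts from finite coalgebras to determine cofree coalgebras up to isomorphism.

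Two points where your sketch is more of a promissory note than a proof. First, you assert that the epi--mono factorisation of graph homomorphisms ``lifts'' to $\EM(\mathbb{C})$; this requires an argument (roughly, that the forgetful functor creates the relevant factorisations, which holds for comonads on a regular/Set-like base such as $\mathsf{Graph}$ but is not free). Second, your intermediate lemma is phrased as ``$X \cong Y$ iff finite-coalgebra hom-counts agree'' for arbitrary coalgebras $X,Y$, which is stronger than what is needed or what the accessibility argument gives; the statement one can actually prove (and all that is needed) is restricted to cofree coalgebras over finite graphs, and the mutual-embedding step you describe does not directly yield isomorphism for infinite objects without precisely the locally-finite/accessibility machinery you gesture at. Neither of these is a wrong turn -- both are points the cited works address -- but they are the places a full write-up would have to do real work.

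As a small aside on attribution: of the two references, Reggio's polyadic-sets framework is the more abstract route, whereas Dawar--Jakl--Reggio argue directly in the coalgebra category; your proposal is closer in spirit to the latter.
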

	For a definition of finite rank, see \cite[Definition~B.2]{reggio_polyadic_2021}.
	Less generally, one may think of a finite-rank comonad as a comonad which sends finite structures to finite structures. 
	Note that \cref{thm:no-comonad} does not rule out that invertible-map equivalence can be characterised comondically in a different way, i.e., not as co-Kleisli isomorphism but via a more involved construction.

\section{Modular Homomorphism Indistinguishability}

	In this section, we consider homomorphism indistinguishability modulo integers $n \in \mathbb{N}$.
	For a graph class $\mathcal{F}$, two graphs $G$ and $H$ are said to be \emph{homomorphism indistinguishable over~$\mathcal{F}$ modulo~$n$}, in symbols $G \homEquivMod{\Ff}{n} H$, if
	$\hom(F, G) \equiv \hom(F, H) \mod n$ for every $F \in\Ff$.
	We write $G \homEquivMod{\Ff}{N} H$ for $N \subseteq \mathbb{N}$ if $G \homEquivMod{\Ff}{n} H$ for every $n \in N$.
	
	In contrary to the classical result of Lovász~\cite{lovasz_operations_1967} asserting that two graphs are homomorphism indistinguishable over all graphs if and only if they are isomorphic,
	homomorphism counts modulo a prime $p$ do not suffice to determine a graph up to isomorphism.
	In \cite{faben_complexity_2015}, homomorphism indistinguishability over all graphs modulo~$p$ was characterised as follows:
	For a graph $G$ with automorphism $\sigma$, write $G^\sigma$ for the subgraph of $G$ induced by the fixed points of $\sigma$.
	Write $G \to_p G'$ for two graphs $G$ and $G'$ if there is an automorphism $\sigma$ of $G$ of order~$p$ such that $G^\sigma \cong G'$
	and write $G \to^*_p H$ if there is a sequence of graphs $G_1, \dots, G_n$ such that $G \to_p G_1 \to_p G_2 \to_p \dots \to_p G_n \to_p H$.
	By \cite[Theorem~3.7]{faben_complexity_2015}, for every graph $G$ and prime~$p$,
	there is a graph $G^*_p$, unique up to isomorphism, such that $G^*_p$ has no automorphisms of order~$p$, and $G \to^*_p G^*_p$. Furthermore, by \cite[Theorem~3.4]{faben_complexity_2015}, $G$ and $G^*_p$ are homomorphism indistinguishable over all graphs modulo~$p$.
	
	\begin{theorem}[{\cite[Lemma~3.10]{faben_complexity_2015}}]
		\label{thm:homCountingModPIsIsomorphismOfpRigidSubgraphs}
		Let $p$ be a prime. Two graphs $G$ and $H$ are homomorphism indistinguishable over all graphs modulo~$p$ if and only if $G^*_p$ and $H^*_p$ are isomorphic.
	\end{theorem}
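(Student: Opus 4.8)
The plan is to reduce the statement to Lovász's classical theorem by passing from homomorphism counts to injective‑homomorphism counts, using only the two facts already recalled about $G^*_p$: that it is $p$-rigid (has no automorphism of order $p$) and that $\hom(F,G)\equiv\hom(F,G^*_p)\pmod p$ for every $F$ (\cite[Theorem~3.4]{faben_complexity_2015}). The backward direction is immediate: if $G^*_p\cong H^*_p$, then for every graph $F$ we get $\hom(F,G)\equiv\hom(F,G^*_p)=\hom(F,H^*_p)\equiv\hom(F,H)\pmod p$, so $G$ and $H$ are homomorphism indistinguishable over all graphs modulo $p$.

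For the forward direction, first apply \cite[Theorem~3.4]{faben_complexity_2015} twice to transfer the hypothesis from $G,H$ to $G^*_p,H^*_p$; it therefore suffices to prove that two $p$-rigid graphs $G',H'$ that are homomorphism indistinguishable over all graphs modulo $p$ must be isomorphic. Next I would use the standard factorisation of a homomorphism $F\to K$ as a surjection onto its image followed by an embedding, $F\twoheadrightarrow F''\hookrightarrow K$, together with Möbius inversion over the partition lattice of $V(F)$, to obtain for each graph $F$ \emph{integer} coefficients $\lambda_{F''}$, indexed by the finitely many simple graphs $F''$ arising as quotients of $F$, such that $\mathrm{inj}(F,K)=\sum_{F''}\lambda_{F''}\,\hom(F'',K)$ for every graph $K$, where $\mathrm{inj}$ counts injective homomorphisms. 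Since the $\lambda_{F''}$ are integers, the hypothesis gives $\mathrm{inj}(F,G')\equiv\mathrm{inj}(F,H')\pmod p$ for every $F$.

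The conceptual heart is then a one‑line application of Cauchy's theorem: a finite group whose order is divisible by $p$ has an element of order $p$, so $p$-rigidity of $G'$ forces $p\nmid|\Aut(G')|$. Every injective endomorphism of a finite graph is an automorphism, hence $\mathrm{inj}(G',G')=|\Aut(G')|\not\equiv 0\pmod p$, and therefore $\mathrm{inj}(G',H')\not\equiv 0\pmod p$; in particular $\mathrm{inj}(G',H')>0$, i.e.\ $G'$ admits an injective homomorphism into $H'$. By symmetry $H'$ embeds injectively into $G'$, so $|V(G')|=|V(H')|$. A vertex‑bijective injective homomorphism $G'\to H'$ then maps the $|E(G')|$ edges of $G'$ injectively into $E(H')$, giving $|E(G')|\le|E(H')|$, and symmetrically $|E(H')|\le|E(G')|$; hence the edge sets have equal size and each such map is an isomorphism. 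Thus $G'\cong H'$, i.e.\ $G^*_p\cong H^*_p$.

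The only step I expect to require care is the Möbius‑inversion identity expressing $\mathrm{inj}$ as an integer combination of $\hom$‑counts: although it is entirely standard and underlies the equivalence of homomorphism‑count and subgraph‑count distinguishability, one has to check that the partitions of $V(F)$ that do not collapse an edge form an order ideal in the partition lattice, so that the relevant Möbius values — and hence the coefficients $\lambda_{F''}$ — are well defined and integral. Everything else is bookkeeping together with the single invocation of Cauchy's theorem, which is precisely what makes $p$-rigidity the right hypothesis.
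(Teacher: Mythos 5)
Your argument is correct, and it is essentially the proof of the cited result \cite[Lemma~3.10]{faben_complexity_2015}: the paper itself gives no proof but only the citation, and Faben--Jerrum's argument is exactly this reduction to $p$-rigid graphs via \cite[Theorem~3.4]{faben_complexity_2015}, followed by the M\"obius-inversion passage to injective homomorphism counts and the observation via Cauchy's theorem that $\mathrm{inj}(G',G')=|\Aut(G')|\not\equiv 0 \pmod p$. All the individual steps (integrality of the inversion coefficients, bijective endomorphisms of finite graphs being automorphisms, the edge-counting conclusion) check out.
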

In general, modular homomorphism indistinguishability relations are rather oblivious to striking differences between graphs:
	
	\begin{example} \label{ex:cliquecoclique}
		For $n \in \mathbb{N}$, the one-vertex graph $K_1$ and
		the coclique $\overline{K_{n+1}}$ are homomorphism indistinguishable over all graphs modulo~$n$.
	\end{example}
	\begin{proof}
		If $F$ is an edge-less graph, then $\hom(F, K_1) = 1 \equiv (n+1)^{|V(F)|} = \hom(F, \overline{K_{n+1}}) \mod n$.
		If otheriwse $F$ contains an edge, then $\hom(F, K_1) = 0 = \hom(F, \overline{K_{n+1}})$.
	\end{proof}
Before we move to modular homomorphism indistinguishability characterisations for certain logic fragments, we clarify the relationship between the various notions introduced so far:

	\begin{lemma} \label{lem:modhomind} \label{cor:homCountingModInfinitePrimes}
		Let $\mathcal{F}$ and $\mathcal{M}$ be graph classes.
		Let $N \subseteq \mathbb{N}$ and $n \in \mathbb{N}$.
		\begin{enumerate}
			\item If $N$ is infinite, then $\equiv^N_{\mathcal{F}}$ and $\equiv_{\mathcal{F}}$ coincide.
			\item If $N$ is finite and $m$ is the least common multiple of the numbers in $N$, then  $\equiv^N_{\mathcal{F}}$ and $\equiv^m_{\mathcal{F}}$ coincide.
			\item If $\equiv_{\mathcal{F}}$ and $\equiv^n_{\mathcal{M}}$ coincide, then $\mathcal{F} = \emptyset$, i.e., all graphs are $\equiv_{\mathcal{F}}$-equivalent.
		\end{enumerate}
	\end{lemma}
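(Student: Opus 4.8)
The plan is to settle items~1 and~2 by elementary number theory and to put the real work into item~3. For items~1 and~2 fix $F \in \mathcal{F}$ and set $d_F \coloneqq \hom(F,G)-\hom(F,H)$; then $G \equiv^{N}_{\mathcal{F}} H$ says precisely that $n \mid d_F$ for every $n \in N$ and every $F \in \mathcal{F}$. If $N$ is infinite, this forces $d_F = 0$ (a nonzero integer cannot be divisible by arbitrarily large numbers), so $\equiv^{N}_{\mathcal{F}}$ collapses to $\equiv_{\mathcal{F}}$, the reverse inclusion being trivial. If $N$ is finite, then ``$n \mid d_F$ for all $n \in N$'' is equivalent to ``$m \mid d_F$'' for $m = \operatorname{lcm}(N)$, so $\equiv^{N}_{\mathcal{F}}$ and $\equiv^{m}_{\mathcal{F}}$ coincide. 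Neither item requires more.

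The content lies in item~3: $\equiv^{n}_{\mathcal{M}}$ is so coarse that it cannot agree with any nontrivial $\equiv_{\mathcal{F}}$. I would prove that if $\mathcal{F}$ contains a graph with at least one vertex, then $\equiv_{\mathcal{F}} \neq \equiv^{n}_{\mathcal{M}}$ for every $\mathcal{M}$ and every $n \geq 1$. The key gadget is the \emph{false-twin extension}: given a graph $G$, a vertex $w \in V(G)$, and $n \geq 1$, let $G_w^{+n}$ be obtained from $G$ by adding $n$ fresh vertices, each with neighbourhood exactly $N_G(w)$, pairwise non-adjacent and non-adjacent to $w$. Collapsing the copies of $w$ is a homomorphic retraction $G_w^{+n} \to G$, and partitioning $\Hom(F, G_w^{+n})$ along it yields, for every graph $F$,
\[
\hom(F, G_w^{+n}) = \sum_{\phi \in \Hom(F,G)} (n+1)^{|\phi^{-1}(w)|},
\]
since each $\phi \colon F \to G$ lifts in exactly $(n+1)^{|\phi^{-1}(w)|}$ ways (no edge of $F$ has both endpoints in $\phi^{-1}(w)$ because graphs are loopless). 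Reducing modulo $n$ gives $\hom(F, G_w^{+n}) \equiv \hom(F,G) \pmod n$ for \emph{every} $F$; hence $G$ and $G_w^{+n}$ are homomorphism indistinguishable over all graphs modulo $n$, a fortiori over $\mathcal{M}$, i.e.\ $G \equiv^{n}_{\mathcal{M}} G_w^{+n}$. (This generalises \cref{ex:cliquecoclique}, which is the case $G = K_1$.)

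Now suppose $\equiv_{\mathcal{F}} = \equiv^{n}_{\mathcal{M}}$ and some $F \in \mathcal{F}$ has a vertex. Then $G \equiv_{\mathcal{F}} G_w^{+n}$ for every $G$ and every $w \in V(G)$, so $\hom(F,G) = \sum_{\phi \in \Hom(F,G)} (n+1)^{|\phi^{-1}(w)|}$. Every summand is at least $1$, with equality exactly when $w \notin \operatorname{im}\phi$; so equality with $\hom(F,G) = |\Hom(F,G)|$ forces every homomorphism $F \to G$ to miss $w$ — for all $G$ and all $w \in V(G)$. Taking $G = F$ and $\phi = \id_F$ is a contradiction. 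Hence every graph in $\mathcal{F}$ has no vertices, i.e.\ $\mathcal{F} = \emptyset$; equivalently $\equiv_{\mathcal{F}}$ identifies all graphs (one checks $\hom(F,\cdot)$ is non-constant whenever $F$ has a vertex, e.g.\ by comparing the target $K_1$ with the target $F$).

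The only genuinely non-routine step is isolating the right ``$\equiv^{n}$-invisible'' operation: false twins of an \emph{arbitrary} vertex of an \emph{arbitrary} graph do the job, whereas the bare clique/coclique pair of \cref{ex:cliquecoclique} only rules out part of $\mathcal{F}$. The verification of the homomorphism-count identity and of the congruence is then bookkeeping. One should also note that item~3 genuinely needs $n \geq 1$: for $n = 0$ one has $\equiv^{0}_{\mathcal{M}} = \equiv_{\mathcal{M}}$ and the conclusion fails.
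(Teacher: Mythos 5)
Your items~1 and~2 are correct and match the paper's argument in substance (for item~1 both pick $n\in N$ exceeding the possible difference of counts; for item~2 your appeal to the fact that divisibility by all of $N$ equals divisibility by $\operatorname{lcm}(N)$ is exactly what the paper derives via the Chinese Remainder Theorem). For item~3 your proof is correct but takes a genuinely different route. The paper works with categorical powers: it shows via Euler's theorem that $a^{\ell}(a^{\phi(n)}-1)\equiv 0 \bmod n$, hence $G^{\times(\phi(n)+\ell)}$ and $G^{\times \ell}$ are homomorphism indistinguishable over \emph{all} graphs modulo $n$ (where $\ell$ is the largest prime-power exponent in $n$), and then uses a factor $K_c$ with $c$ above the chromatic number to force $\hom(G,G\times K_c)>1$, so that the exact counts of the two powers differ for $G\in\mathcal{F}$. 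You instead blow up a single vertex $w$ of $G$ into $n+1$ pairwise non-adjacent false twins; the fibre count $\hom(F,G^{+n}_w)=\sum_{\phi\in\Hom(F,G)}(n+1)^{|\phi^{-1}(w)|}$ is verified correctly (looplessness is exactly what rules out edges inside $\phi^{-1}(w)$), it collapses to $\hom(F,G)$ modulo $n$, and it strictly exceeds $\hom(F,G)$ over $\mathbb{N}$ as soon as some homomorphism $F\to G$ hits $w$ --- which $\id_F$ does for $G=F$. Your construction is more elementary (no Euler, no CRT, no chromatic-number device), directly generalises \cref{ex:cliquecoclique} as you observe, and pins down precisely that every graph in $\mathcal{F}$ must be vertexless; the paper's tensor-power argument, on the other hand, reuses machinery ($\hom(F,G^{\times k})=\hom(F,G)^k$) that is standard in the homomorphism-indistinguishability literature. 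Both establish the statement in the intended reading (the lemma's ``$\mathcal{F}=\emptyset$'' is, as you and the paper both note, shorthand for triviality of $\equiv_{\mathcal{F}}$, since a vertexless graph contributes the constant count $1$).
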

	\begin{proof}
		For the first claim, let $G$ and $H$ be graphs and $F \in \mathcal{F}$.
		Since $N$ is infinite, there exists $n \in N$ greater than $|V(G)|^{|V(F)|}$ and $|V(H)|^{|V(F)|}$.
		Then $\hom(F, G) \equiv \hom(F, H) \mod n$ implies that $\hom(F, G) = \hom(F, H)$.
		
		For the second claim, first observe that $G \equiv^m_{\mathcal{F}} H$ entails $G \equiv^N_{\mathcal{F}} H$ since all $n \in N$ divide~$m$. Conversely,
		for a prime $p$ write $\nu(p)$ for the greatest integer $k \geq 0$ such that there is an $n\in N$ that is divisible by $p^k$.
		Then $m = \prod_{p \in \bbP} p^{\nu(p)}$, where the product ranges over all primes.
		Hence, if $\hom(F, G) \equiv \hom(F, H) \mod n$ for all $n \in N$, then $\hom(F, G) \equiv \hom(F, H) \mod p^{\nu(p)}$ for all primes $p$ appearing as divisors of elements in $N$, i.e., $\nu(p) > 0$.
		Hence, by the Chinese Remainder Theorem, also $\hom(F, G) \equiv \hom(F, H) \mod m$.
		
For the third claim, suppose $G \in \mathcal{F}$ towards a contradiction. 
		Write $\ell$ for the maximum integer such that $p^\ell$ divides $n$ for some prime $p$.
		Write $\phi \colon \mathbb{N} \to \mathbb{N}$ for Euler's totient function and $G^{\times k}$ for the $k$-th categorical power of the graph $G$, cf.\@~\cite[p.\@~40]{lovasz_large_2012}.
		\begin{claim}\label{cl:phi}The graphs $G^{\times (\phi(n)+ \ell)}$ and $G^{\times \ell}$
			are homomorphism indistinguishable over all graphs modulo $n$.
		\end{claim}
		\begin{claimproof}
			We show that $a^{\ell}(a^{\phi(n)} - 1) \equiv 0 \mod n$ for every $a \in \mathbb{N}$.
			By the Chinese Remainder Theorem, 
			writing $n = \prod p_i^{\ell_i}$ as product of prime factors,
			it suffices to show this equality modulo $p_i^{\ell_i}$ for every $i$.
			By Euler's Theorem, $a^{\phi(p_i^{\ell_i})} \equiv 1 \mod p_i^{\ell_i}$ if $a$ and $p_i$ are coprime. Since $\phi(n) = \prod \phi(p_i^{\ell_i})$, also $a^{\phi(n)} \equiv 1 \mod p_i^{\ell_i}$.
			If $p_i$ divides $a$, then $a^\ell \equiv 0 \mod p_i^{\ell_i}$ as $\ell_i \leq \ell$.
			Finally,
			for every graph $F$, $\hom(F, G^{\times (\phi(n)+ \ell)}) = \hom(F, G)^{\phi(n)+ \ell} \equiv \hom(F, G)^{\ell} \mod n$ by~\cite[(5.30)]{lovasz_large_2012}.
		\end{claimproof}
		Let $c \in \mathbb{N}$ be greater than the chromatic number of $G$, in particular satisfying that $\hom(G, K_c) > 1$.
		By \cite[(5.30)]{lovasz_large_2012}, we then have that ${\hom(G, (G \times K_c)^{\times (\phi(n)+\ell)}) \neq \hom(G, (G \times K_c)^{\times \ell}})$ because $\hom(G, G \times K_c) > 1$ and $\phi(n) \geq 1$.
		However, $(G \times K_c)^{\times (\phi(n)+\ell)} \equiv_{\mathcal{M}}^n (G \times K_c)^{\times \ell}$ by \cref{cl:phi} contradicting that $\equiv_{\mathcal{F}}$ and $\equiv^N_{\mathcal{M}}$ coincide.
	\end{proof}
\Cref{lem:modhomind} shows that non-trivial modular homomorphism indistinguishability relations cannot be expressed by (non-modular) homomorphism indistinguishability relations. Furthermore, considering sets of moduli does not yield more relations. We may restrict our attention to homomorphism indistinguishability relations modulo some not necessarily prime $n \in \mathbb{N}$. In the remainder of this section, we give an example and a non-example of a logic whose equivalence can be characterised as modular homomorphism indistinguishability relation.

We have seen already that the relation $\equiv_{k,\mathbb{P}}^{\IM}$ is not a homomorphism indistinguishability relation over any graph class. But since $\equiv_{k,\mathbb{P}}^{\IM}$ is a relation based on linear algebra over finite fields, it might a priori be that it can be characterised as a homomorphism indistinguishability relation \emph{modulo a prime}. 
	This can be ruled out, at least in the following sense:

	\begin{theorem}
		\label{thm:IMnoHomRelationModP}
		Let $k \geq 2$ and $P$ be a set of primes. 
		Then there exists no graph class~$\mathcal{F}$ and no $n \in \mathbb{N}$ such that $\equiv_{k,P}^{\IM}$ and $\equiv^n_{\mathcal{F}}$ coincide.
	\end{theorem}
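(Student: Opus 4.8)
The plan is to derive a contradiction from the hypothesis that $\equiv_{k,P}^{\IM}$ coincides with $\equiv^n_{\mathcal{F}}$ for some graph class $\mathcal{F}$ and some modulus $n$, by pitting two elementary facts against each other. The first is that $\equiv_{k,P}^{\IM}$ already separates graphs of different order: by the rules of the invertible-map game $\Mm^{k,P}$, Spoiler wins immediately whenever the two input structures have different cardinalities, so $G \equiv_{k,P}^{\IM} H$ forces $|V(G)| = |V(H)|$, uniformly in $k$ and $P$ (so these parameters play no further role). The second is that counting homomorphisms modulo $n$, over \emph{any} class of graphs, is far too coarse to detect the order of a graph --- which is exactly what \cref{ex:cliquecoclique} records: $K_1$ and the coclique $\overline{K_{n+1}}$ are homomorphism indistinguishable modulo $n$ over the class of all graphs.

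Concretely, I would argue as follows. Since $\mathcal{F}$ is contained in the class of all graphs, \cref{ex:cliquecoclique} yields $K_1 \equiv^n_{\mathcal{F}} \overline{K_{n+1}}$, hence $K_1 \equiv_{k,P}^{\IM} \overline{K_{n+1}}$ by the assumed coincidence. But $K_1$ has one vertex while $\overline{K_{n+1}}$ has $n+1 \ge 2$ vertices, so Spoiler wins $\Mm^{k,P}$ on these two graphs outright; thus $K_1 \not\equiv_{k,P}^{\IM} \overline{K_{n+1}}$, a contradiction. The degenerate case $n=1$ is not special: $\equiv^1_{\mathcal{F}}$ relates all graphs whereas $\equiv_{k,P}^{\IM}$ separates $K_1$ from $\overline{K_2}$, which is precisely this argument.

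The proof uses no closure property of $\mathcal{F}$ and nothing about $k$ or $P$ beyond the immediate cardinality rule of the game, so it handles all $k \ge 2$ and all sets of primes simultaneously. If a version independent of \cref{ex:cliquecoclique} is desired, the same contradiction can be extracted from \cref{cl:phi} in the proof of \cref{lem:modhomind}: for any fixed graph $G$ (say $G = K_3$), the categorical powers $G^{\times m}$ and $G^{\times (m+\phi(n))}$ are homomorphism indistinguishable modulo $n$ over all graphs once $m$ is at least the largest prime-power exponent $\ell$ dividing $n$ (take $m = \max(\ell,1)$ to keep the graphs loopless), yet they have different orders since $\phi(n) \ge 1$. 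I do not expect any genuine obstacle here: the only ``step'' is the observation that the coarseness of modular homomorphism counting, already isolated in \cref{ex:cliquecoclique}, is in direct conflict with the minimal fineness of invertible-map equivalence, namely that it refines equality of vertex counts.
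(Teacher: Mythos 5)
Your proposal is correct and follows essentially the same route as the paper: both derive a contradiction from \cref{ex:cliquecoclique} applied to $K_1$ and $\overline{K_{n+1}}$, the only cosmetic difference being that the paper witnesses $K_1 \not\equiv_{k,P}^{\IM} \overline{K_{n+1}}$ by the $2$-variable FO sentence $\exists x_1 \exists x_2.\ x_1 \neq x_2$ rather than by the cardinality rule of the game $\Mm^{k,P}$.
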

	\begin{proof}
		Towards a contradiction, suppose that $\equiv_{k,P}^{\IM}$ and $\equiv^n_{\mathcal{F}}$ coincide for some graph class~$\mathcal{F}$ and some $n \in \mathbb{N}$.
		Consider the clique $K_{1}$ and the coclique $\overline{K_{n+1}}$, which are homomorphism indistinguishable over all graphs modulo~$b$ by~\cref{ex:cliquecoclique}, and the FO-sentence ${\phi \coloneqq \exists x_1 \exists x_2.\ x_1 \neq x_2}$.
		Clearly, $K_{1} \not\models \phi$ while $\overline{K_{n+1}} \models \phi$.
		Hence, $2$-variable FO distinguishes the two graphs and $K_{1} \not\equiv_{k,P}^{\IM} \overline{K_{n+1}}$.
	\end{proof}	
\label{sec:treewidth}
	By extending techniques of \cite{dvorak_recognizing_2010}, we prove that homomorphism indistinguishability over graphs of bounded treewidth counted modulo a prime characterises  equivalence in first-order logic with modular counting quantifiers.
	For a definition of treewidth, see \cite{bodlaender_partial_1998} or \cref{app:treewidth}.
	Let $p$ be a prime.
	Let $\Cmod{p}$ denote the set of formulas inductively defined as follows:
	\begin{itemize}
		\item for variables $x$ and $y$, the formulas $x = y$ and $E(x,y)$ are in $\Cmod{p}$,
		\item if $\phi, \psi \in\Cmod{p}$, then $\neg \phi, \phi \land \psi, \phi \lor \psi \in \Cmod{p}$, and
		\item if $\phi \in \Cmod{p}$, $x$ is a variable, and $c \in \mathbb{F}_p$, then $\exists^c x.\ \phi$ is in $\Cmod{p}$.
	\end{itemize}
	The semantics is as expected, i.e., a structure $\AA$ satisfies a sentence $\exists^{c} x.\ \phi(x)$ if there exist $c \mod p$ distinct $a \in A$ such that $\AA \models \phi(a)$.
	Let $\CkMod{k+1}{p}$ denote the $(k+1)$-variable fragment of this logic.

	\begin{theorem} \label{thm:dvorak-mod-p}
		Let $p$ be a prime and $k \geq 0$.
		Two arbitrary graphs $G$ and $H$ are homomorphism indistinguishable over all graphs of treewidth at most $k$ modulo~$p$
		if and only if
		$G$ and $H$ are $\CkMod{k+1}{p}$-equivalent.
	\end{theorem}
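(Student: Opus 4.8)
\textbf{Proof plan for \cref{thm:dvorak-mod-p}.}

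The plan is to mirror Dvořák's original argument \cite{dvorak_recognizing_2010} for the non-modular case, tracking where the counting happens so that it goes through with arithmetic in $\mathbb{F}_p$ rather than $\mathbb{N}$. The backward direction ($\CkMod{k+1}{p}$-equivalence implies homomorphism indistinguishability modulo $p$ over treewidth-$\le k$ graphs) is the more mechanical one. For a graph $F$ of treewidth at most $k$, one fixes a nice tree decomposition of width $k$ and processes it bottom-up: for each bag $t$ with vertex set $\bar z$ (an at most $(k+1)$-tuple), and for each target graph $G$, one wants to show that the map $\bar a \mapsto \hom(F_t, G, \bar z \mapsto \bar a) \bmod p$, where $F_t$ is the subgraph of $F$ induced by the subtree rooted at $t$ with $\bar z$ designated as the boundary, depends only on the $\CkMod{k+1}{p}$-type of $\bar a$ in $G$. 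One argues by induction on the tree decomposition. At a leaf this is trivial. At an \emph{introduce} node one multiplies by an edge-indicator, which is $\CkMod{k+1}{p}$-definable (it is quantifier-free). At a \emph{forget} node one sums over all choices of the forgotten vertex $x$: here is the only place counting enters, and summing $\hom(F_{t'}, G, \cdot)\bmod p$ over all $a' \in V(G)$ with $\AA \models \phi(a')$ for each $\CkMod{k+1}{p}$-formula $\phi$ appearing in the type is exactly what the $\exists^c x$ quantifiers compute modulo $p$. At a \emph{join} node one multiplies the two partial homomorphism counts, which stays within the ring $\mathbb{F}_p$. At the root (an empty bag) one recovers $\hom(F,G) \bmod p$, and equality of $\CkMod{k+1}{p}$-types of $G$ and $H$ (for the empty tuple, i.e.\ $\CkMod{k+1}{p}$-equivalence) forces $\hom(F,G) \equiv \hom(F,H) \bmod p$.

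For the forward direction, one shows the contrapositive: if $G \not\equiv_{\CkMod{k+1}{p}} H$, there is a treewidth-$\le k$ graph $F$ with $\hom(F,G) \not\equiv \hom(F,H) \bmod p$. The standard approach is to pass to a ``$\bmod p$ pebble game'' characterising $\CkMod{k+1}{p}$-equivalence (the $(k+1)$-pebble bijective game with Duplicator's moves restricted to bijections that preserve, modulo $p$, the cardinalities of definable classes), and then to encode Spoiler's winning strategy tree into a family of gadget graphs of treewidth at most $k$ whose homomorphism counts into $G$ and $H$ differ mod $p$. Concretely, one builds, for each formula in the Hintikka-style $\CkMod{k+1}{p}$-type hierarchy, a graph with a tree decomposition of width $k$ (the bags following the pebbled positions); disjunctions become disjoint unions with a shared boundary, conjunctions become products, negations are handled by the inclusion–exclusion / interpolation trick that turns ``$\hom(\cdot)=c$'' into a polynomial identity in the counts, and the modular counting quantifier $\exists^c x$ becomes a forget node together with a selection of the $p$-many residue classes. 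One has to be a little careful that the construction produces a single graph $F$ (not a formal linear combination) whose count lands in the right residue class; this can be arranged either by taking disjoint unions with suitable multiplicities in $\{0,\dots,p-1\}$ or by working in the ring $\mathbb{F}_p$ with the observation that homomorphism counts of disjoint unions add and of (boundary-)glued graphs multiply, so the set of achievable residues of $\hom(F,G)-\hom(F,H)$ over treewidth-$\le k$ graphs $F$ is closed under addition and multiplication, hence is all of $\mathbb{F}_p$ as soon as it is nonzero.

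The main obstacle I expect is handling \emph{negation / complementation} in the modular setting. Over $\mathbb{N}$, Dvořák uses that a Boolean combination of ``$F$-count equals $c$'' conditions can be simulated by integer linear combinations of homomorphism counts via Möbius-type inversion on a finite lattice of types; over $\mathbb{F}_p$ one must redo this inversion with coefficients in $\mathbb{F}_p$, which is fine as long as the relevant transition matrices (counting homomorphisms between the finitely many ``$(k+1)$-boundaried type graphs'') are invertible mod $p$ — and they need not be. The fix is to avoid inverting anything: instead of isolating individual types, observe that the span (over $\mathbb{F}_p$) of the vectors $\big(\hom(F,G)\big)_{G}$ for $F$ ranging over treewidth-$\le k$ graphs is a subalgebra of $\mathbb{F}_p^{\{\text{graphs}\}}$ closed under pointwise product (disjoint union) and addition, and that $\CkMod{k+1}{p}$-definable properties of tuples are recovered from this subalgebra by the forget-node summation; so one argues directly that two graphs with the same value under every such $F$ must have the same $\CkMod{k+1}{p}$-theory, again by induction on formula structure, where the modular quantifier step uses that $\sum_{a} [\![\phi(a)]\!] \bmod p$ is computed by a forget node. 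This ``closure of the homomorphism-count algebra'' viewpoint sidesteps the non-invertibility issue and is, I think, the cleanest route; the bookkeeping of tree decompositions for the gadgets is routine once that is set up.
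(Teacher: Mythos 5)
Your plan is essentially the paper's approach, and your diagnosis of the crux (negation and the $\exists^c$ quantifier) is correct, but your worry about it is overblown. The paper's argument is structured around $(k+1)$-labelled graphs and the $\mathbb{F}_p$-algebra of formal linear combinations $\mathbb{F}_p\mathcal{TW}^k$ equipped with the boundary-gluing product $\odot$ (note: for labelled graphs the pointwise product of hom-counts comes from gluing on the boundary, not from disjoint union — the latter gives the product only for \emph{unlabelled} graphs). The backward direction matches yours: walk a tree decomposition bottom-up, and the only place counting appears is at forget nodes, where the $\exists^c$ quantifiers exactly compute the needed sums mod~$p$. For the forward direction, your fallback ``closure of the homomorphism-count algebra'' route is precisely what the paper does; your first idea (encoding a Spoiler strategy from a $\bmod\,p$ pebble game) is not needed. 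Concretely, the paper proves by induction on formula structure that every $\phi \in \CkMod{k+1}{p}$ is \emph{modelled} by some $\mathfrak{q} \in \mathbb{F}_p\mathcal{TW}^k$ (meaning $\hom(\mathfrak{q},\boldsymbol{G})$ is the $\{0,1\}$-indicator of $\boldsymbol{G}\models\phi$). Negation is then trivial: $\boldsymbol{I} - \mathfrak{t}$ is a 0/1-indicator if $\mathfrak{t}$ is. The genuinely non-trivial step — turning ``$\hom(\mathfrak{q},\cdot) \in X_1 \subseteq \mathbb{F}_p$'' back into a 0/1-valued element of the algebra, needed after a forget node — is handled by \emph{Lagrange interpolation over $\mathbb{F}_p$}: take $p(X)=\sum_{x\in X_1}\prod_{y\neq x}\frac{X-y}{x-y}$ and evaluate it in the algebra, i.e.\ set $\mathfrak{r}:=p(\mathfrak{q})$ using $\odot$ for multiplication. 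This is an explicit, closed-form formula whose denominators are nonzero elements of the field, so there is no transition matrix to invert and no Möbius-type inversion whose invertibility mod~$p$ might fail; your concern there does not materialize. With that clarification your plan goes through and coincides with the paper's.
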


	\section{Conclusion}
	
	We studied linear-algebraic logic, a logic stronger than first-order logic with counting, and proved that equivalence with respect to it can neither be characterised as a homomorphism indistinguishability relation, nor as co-Kleisli isomorphism for a finite-rank comonad. The latter answers an open question of {\'O}~Conghaile and Dawar~\cite{conghaile_game_2021} and shows a limitation of the game comonad programme for capturing logical equivalences. It would be desirable to understand more generally which properties are responsible for making a logic suitable for a homomorphism indistinguishability or game comonad characterisation. 
	We know that game comonads can be defined for FO with all Lindström quantifiers up to a fixed arity \cite{conghaile_game_2021} -- what we do not know is whether these are the only Lindström extensions of FO admitting such a characterisation. Other interesting classes of Lindström quantifiers to look at besides the linear-algebraic ones could be CSP quantifiers. The corresponding logic defined in \cite{hella_expressive_2023} comes with a fairly natural game characterising equivalence. Thus, one may ask whether this CSP logic admits a game comonad or if this can be ruled out with similar methods as in this paper. 
	The same question is also open for (bounded variable fragments of) counting monadic second order logic CMSO.
	In principle, our approach works for every extension of counting logic for which there exists a CFI-like lower bound construction that works over planar base graphs and with only one binary relation. It remains to devise such a construction for CSP logic and CMSO.
	
	A different topic, that we have merely touched upon, is homomorphism counting in prime fields. We have shown that the corresponding homomorphism indistinguishability relations do not characterise IM-equivalence. On the other hand, we stated an example of a logic that is captured by a modular homomorphism indistinguishability relation, namely FO with modulo counting quantifiers.
	A more comprehensive theory of modular homomorphism counting is yet to be developed. A particularly interesting question, which is not in the scope of this article, is whether the known connections between homomorphism counting and solutions to semidefinite/linear programs for graph isomorphism \cite{roberson_lasserre_2023} have a meaningful generalisation to prime fields.

	\bibliographystyle{plainurl}

\begin{thebibliography}{10}

\bibitem{abramsky_pebbling_2017}
Samson Abramsky, Anuj Dawar, and Pengming Wang.
\newblock The {Pebbling} {Comonad} in {Finite} {Model} {Theory}.
\newblock In {\em Proceedings of the 32nd {Annual} {ACM}/{IEEE} {Symposium} on
  {Logic} in {Computer} {Science}}, {LICS} '17. IEEE Press, 2017.
\newblock event-place: Reykjav{\'i}k, Iceland.
\newblock \href {https://doi.org/10.1109/LICS.2017.8005129}
  {\path{doi:10.1109/LICS.2017.8005129}}.

\bibitem{abramsky_discrete_2022}
Samson Abramsky, Tom{\'a}{\v s} Jakl, and Thomas Paine.
\newblock Discrete {Density} {Comonads} and~{Graph} {Parameters}.
\newblock In Helle~Hvid Hansen and Fabio Zanasi, editors, {\em Coalgebraic
  {Methods} in {Computer} {Science}}, pages 23--44, Cham, 2022. Springer
  International Publishing.
\newblock \href {https://doi.org/10.1007/978-3-031-10736-8_2}
  {\path{doi:10.1007/978-3-031-10736-8_2}}.

\bibitem{abramsky_relating_2021}
Samson Abramsky and Nihil Shah.
\newblock Relating structure and power: {Comonadic} semantics for computational
  resources.
\newblock {\em Journal of Logic and Computation}, 31(6):1390--1428, September
  2021.
\newblock \href {https://doi.org/10.1093/logcom/exab048}
  {\path{doi:10.1093/logcom/exab048}}.

\bibitem{atserias_quantum_2019}
Albert Atserias, Laura Man\v{c}inska, David~E. Roberson, Robert \v{S}{\'a}mal,
  Simone Severini, and Antonios Varvitsiotis.
\newblock Quantum and non-signalling graph isomorphisms.
\newblock {\em J. Comb. Theory, Ser. B}, 136:289--328, 2019.
\newblock \href {https://doi.org/10.1016/j.jctb.2018.11.002}
  {\path{doi:10.1016/j.jctb.2018.11.002}}.

\bibitem{bodlaender_partial_1998}
Hans~L. Bodlaender.
\newblock A partial $k$-arboretum of graphs with bounded treewidth.
\newblock {\em Theoretical Computer Science}, 209(1):1--45, December 1998.
\newblock \href {https://doi.org/10.1016/S0304-3975(97)00228-4}
  {\path{doi:10.1016/S0304-3975(97)00228-4}}.

\bibitem{caifurimm92}
Jin-Yi Cai, Martin F{\"u}rer, and Neil Immerman.
\newblock An optimal lower bound on the number of variables for graph
  identification.
\newblock {\em Combinatorica}, 12(4):389--410, December 1992.
\newblock \href {https://doi.org/10.1007/BF01305232}
  {\path{doi:10.1007/BF01305232}}.

\bibitem{dawarGraedelLichter}
Anuj Dawar, Erich Gr{\"{a}}del, and Moritz Lichter.
\newblock Limitations of the invertible-map equivalences.
\newblock {\em J. Log. Comput.}, 33(5):961--969, 2023.
\newblock \href {https://doi.org/10.1093/logcom/exac058}
  {\path{doi:10.1093/logcom/exac058}}.

\bibitem{limitationsOfIM}
Anuj Dawar, Erich Gr{\"a}del, and Wied Pakusa.
\newblock {Approximations of Isomorphism and Logics with Linear-Algebraic
  Operators}.
\newblock In Christel Baier, Ioannis Chatzigiannakis, Paola Flocchini, and
  Stefano Leonardi, editors, {\em 46th International Colloquium on Automata,
  Languages, and Programming (ICALP 2019)}, volume 132 of {\em Leibniz
  International Proceedings in Informatics (LIPIcs)}, pages 112:1--112:14,
  Dagstuhl, Germany, 2019. Schloss Dagstuhl--Leibniz-Zentrum fuer Informatik.
\newblock \href {https://doi.org/10.4230/LIPIcs.ICALP.2019.112}
  {\path{doi:10.4230/LIPIcs.ICALP.2019.112}}.

\bibitem{rankLogicDiscovery}
Anuj Dawar, Martin Grohe, Bjarki Holm, and Bastian Laubner.
\newblock {Logics with Rank Operators}.
\newblock In {\em Proceedings of the 24th Annual {IEEE} Symposium on Logic in
  Computer Science, {LICS} 2009, 11-14 August 2009, Los Angeles, CA, {USA}},
  pages 113--122. {IEEE} Computer Society, 2009.
\newblock \href {https://doi.org/10.1109/LICS.2009.24}
  {\path{doi:10.1109/LICS.2009.24}}.

\bibitem{DawarHolm12}
Anuj Dawar and Bjarki Holm.
\newblock Pebble games with algebraic rules.
\newblock In {\em 39th International Colloquium on Automata, Languages, and
  Programming, {ICALP} 2012, Warwick, UK, July 9-13, 2012, Proceedings, Part
  {II}}, volume 7392 of {\em Lecture Notes in Computer Science}, pages
  251--262. Springer, 2012.
\newblock \href {https://doi.org/10.1007/978-3-642-31585-5\_25}
  {\path{doi:10.1007/978-3-642-31585-5\_25}}.

\bibitem{dawar_lovasz-type_2021}
Anuj Dawar, Tom{\'a}{\v s} Jakl, and Luca Reggio.
\newblock Lov{\'a}sz-{Type} {Theorems} and {Game} {Comonads}.
\newblock In {\em 36th {Annual} {ACM}/{IEEE} {Symposium} on {Logic} in
  {Computer} {Science}, {LICS} 2021, {Rome}, {Italy}, {June} 29 - {July} 2,
  2021}, pages 1--13. IEEE, 2021.
\newblock \href {https://doi.org/10.1109/LICS52264.2021.9470609}
  {\path{doi:10.1109/LICS52264.2021.9470609}}.

\bibitem{dell_lovasz_2018}
Holger Dell, Martin Grohe, and Gaurav Rattan.
\newblock {Lov{\'a}sz Meets Weisfeiler and Leman}.
\newblock In Ioannis Chatzigiannakis, Christos Kaklamanis, D{\'a}niel Marx, and
  Donald Sannella, editors, {\em 45th International Colloquium on Automata,
  Languages, and Programming (ICALP 2018)}, volume 107 of {\em Leibniz
  International Proceedings in Informatics (LIPIcs)}, pages 40:1--40:14,
  Dagstuhl, Germany, 2018. Schloss Dagstuhl--Leibniz-Zentrum fuer Informatik.
\newblock \href {https://doi.org/10.4230/LIPIcs.ICALP.2018.40}
  {\path{doi:10.4230/LIPIcs.ICALP.2018.40}}.

\bibitem{dvorak_recognizing_2010}
Zden{\v e}k Dvo{\v r}{\'a}k.
\newblock On recognizing graphs by numbers of homomorphisms.
\newblock {\em Journal of Graph Theory}, 64(4):330--342, August 2010.
\newblock \href {https://doi.org/10.1002/jgt.20461}
  {\path{doi:10.1002/jgt.20461}}.

\bibitem{faben_complexity_2015}
John Faben and Mark Jerrum.
\newblock The {Complexity} of {Parity} {Graph} {Homomorphism}: {An} {Initial}
  {Investigation}.
\newblock {\em Theory of Computing}, 11(2):35--57, 2015.
\newblock \href {https://doi.org/10.4086/toc.2015.v011a002}
  {\path{doi:10.4086/toc.2015.v011a002}}.

\bibitem{Furer2001}
Martin F{\"{u}}rer.
\newblock {W}eisfeiler-{L}ehman refinement requires at least a linear number of
  iterations.
\newblock In {\em 28th International Colloquium on Automata, Languages, and
  Programming, {ICALP} 2001, Crete, Greece, July 8-12, 2001, Proceedings},
  volume 2076 of {\em Lecture Notes in Computer Science}, pages 322--333.
  Springer, 2001.
\newblock \href {https://doi.org/10.1007/3-540-48224-5\_27}
  {\path{doi:10.1007/3-540-48224-5\_27}}.

\bibitem{GradelPakusa19}
Erich Gr{\"{a}}del and Wied Pakusa.
\newblock Rank logic is dead, long live rank logic!
\newblock {\em J. Symb. Log.}, 84(1):54--87, 2019.
\newblock \href {https://doi.org/10.1017/jsl.2018.33}
  {\path{doi:10.1017/jsl.2018.33}}.

\bibitem{grohe_homomorphism_2022}
Martin Grohe, Gaurav Rattan, and Tim Seppelt.
\newblock Homomorphism {Tensors} and {Linear} {Equations}.
\newblock In Miko{\l}aj Boja{\'n}czyk, Emanuela Merelli, and David~P. Woodruff,
  editors, {\em 49th {International} {Colloquium} on {Automata}, {Languages},
  and {Programming} ({ICALP} 2022)}, volume 229 of {\em Leibniz {International}
  {Proceedings} in {Informatics} ({LIPIcs})}, pages 70:1--70:20, Dagstuhl,
  Germany, 2022. Schloss Dagstuhl {\textendash} Leibniz-Zentrum f{\"u}r
  Informatik.
\newblock ISSN: 1868-8969.
\newblock \href {https://doi.org/10.4230/LIPIcs.ICALP.2022.70}
  {\path{doi:10.4230/LIPIcs.ICALP.2022.70}}.

\bibitem{hella_expressive_2023}
Lauri Hella.
\newblock The {Expressive} {Power} of {CSP}-{Quantifiers}.
\newblock In Bartek Klin and Elaine Pimentel, editors, {\em 31st {EACSL}
  {Annual} {Conference} on {Computer} {Science} {Logic} ({CSL} 2023)}, volume
  252 of {\em Leibniz {International} {Proceedings} in {Informatics}
  ({LIPIcs})}, pages 25:1--25:19, Dagstuhl, Germany, 2023. Schloss Dagstuhl
  {\textendash} Leibniz-Zentrum f{\"u}r Informatik.
\newblock ISSN: 1868-8969.
\newblock URL: \url{https://drops.dagstuhl.de/opus/volltexte/2023/17486}, \href
  {https://doi.org/10.4230/LIPIcs.CSL.2023.25}
  {\path{doi:10.4230/LIPIcs.CSL.2023.25}}.

\bibitem{Holm2011}
Bjarki Holm.
\newblock {\em Descriptive complexity of linear algebra}.
\newblock PhD thesis, University of Cambridge, 2011.

\bibitem{kolaitis1995generalized}
Phokion~G. Kolaitis and Jouko~A. V{\"a}{\"a}n{\"a}nen.
\newblock Generalized quantifiers and pebble games on finite structures.
\newblock {\em Annals of Pure and Applied Logic}, 74(1):23--75, June 1995.
\newblock \href {https://doi.org/10.1016/0168-0072(94)00025-X}
  {\path{doi:10.1016/0168-0072(94)00025-X}}.

\bibitem{lichter2023separating}
Moritz Lichter.
\newblock Separating rank logic from polynomial time.
\newblock {\em Journal of the ACM}, 70(2):1--53, 2023.
\newblock \href {https://doi.org/10.1145/3572918} {\path{doi:10.1145/3572918}}.

\bibitem{lovasz_operations_1967}
Lászlo Lov{\'a}sz.
\newblock Operations with structures.
\newblock {\em Acta Mathematica Academiae Scientiarum Hungarica},
  18(3):321--328, September 1967.
\newblock \href {https://doi.org/10.1007/BF02280291}
  {\path{doi:10.1007/BF02280291}}.

\bibitem{lovasz_large_2012}
László Lovász.
\newblock {\em Large networks and graph limits}.
\newblock Number volume 60 in American {Mathematical} {Society} colloquium
  publications. American Mathematical Society, Providence, Rhode Island, 2012.
\newblock \href {https://doi.org/10.1090/coll/060}
  {\path{doi:10.1090/coll/060}}.

\bibitem{mancinska_quantum_2020}
Laura Man{\v c}inska and David~E. Roberson.
\newblock Quantum isomorphism is equivalent to equality of homomorphism counts
  from planar graphs.
\newblock In {\em 2020 {IEEE} 61st {Annual} {Symposium} on {Foundations} of
  {Computer} {Science} ({FOCS})}, pages 661--672, 2020.
\newblock \href {https://doi.org/10.1109/FOCS46700.2020.00067}
  {\path{doi:10.1109/FOCS46700.2020.00067}}.

\bibitem{montacute_pebble-relation_2022}
Yo{\`a}v Montacute and Nihil Shah.
\newblock The {Pebble}-{Relation} {Comonad} in {Finite} {Model} {Theory}.
\newblock In Christel Baier and Dana Fisman, editors, {\em {LICS} '22: 37th
  {Annual} {ACM}/{IEEE} {Symposium} on {Logic} in {Computer} {Science},
  {Haifa}, {Israel}, {August} 2 - 5, 2022}, pages 13:1--13:11. ACM, 2022.
\newblock \href {https://doi.org/10.1145/3531130.3533335}
  {\path{doi:10.1145/3531130.3533335}}.

\bibitem{NeuenSchweitzer17}
Daniel Neuen and Pascal Schweitzer.
\newblock Benchmark graphs for practical graph isomorphism.
\newblock In {\em 25th Annual European Symposium on Algorithms, {ESA} 2017,
  September 4-6, 2017, Vienna, Austria}, volume~87 of {\em LIPIcs}, pages
  60:1--60:14. Schloss Dagstuhl - Leibniz-Zentrum f{\"{u}}r Informatik, 2017.
\newblock \href {https://doi.org/10.4230/LIPIcs.ESA.2017.60}
  {\path{doi:10.4230/LIPIcs.ESA.2017.60}}.

\bibitem{conghaile_game_2021}
Adam {\'O}~Conghaile and Anuj Dawar.
\newblock Game {Comonads} \& {Generalised} {Quantifiers}.
\newblock In Christel Baier and Jean Goubault-Larrecq, editors, {\em 29th
  {EACSL} {Annual} {Conference} on {Computer} {Science} {Logic} ({CSL} 2021)},
  volume 183 of {\em Leibniz {International} {Proceedings} in {Informatics}
  ({LIPIcs})}, pages 16:1--16:17, Dagstuhl, Germany, 2021. Schloss
  Dagstuhl{\textendash}Leibniz-Zentrum f{\"u}r Informatik.
\newblock ISSN: 1868-8969.
\newblock \href {https://doi.org/10.4230/LIPIcs.CSL.2021.16}
  {\path{doi:10.4230/LIPIcs.CSL.2021.16}}.

\bibitem{reggio_polyadic_2021}
Luca Reggio.
\newblock Polyadic {Sets} and {Homomorphism} {Counting}.
\newblock {\em arXiv:2110.11061 [cs, math]}, October 2021.
\newblock URL: \url{http://arxiv.org/abs/2110.11061}.

\bibitem{roberson2022oddomorphisms}
David~E. Roberson.
\newblock Oddomorphisms and homomorphism indistinguishability over graphs of
  bounded degree, 2022.
\newblock \href {http://arxiv.org/abs/2206.10321} {\path{arXiv:2206.10321}}.

\bibitem{roberson_lasserre_2023}
David~E. Roberson and Tim Seppelt.
\newblock {Lasserre Hierarchy for Graph Isomorphism and Homomorphism
  Indistinguishability}.
\newblock In Kousha Etessami, Uriel Feige, and Gabriele Puppis, editors, {\em
  50th International Colloquium on Automata, Languages, and Programming (ICALP
  2023)}, volume 261 of {\em Leibniz International Proceedings in Informatics
  (LIPIcs)}, pages 101:1--101:18, Dagstuhl, Germany, 2023. Schloss Dagstuhl --
  Leibniz-Zentrum f{\"u}r Informatik.
\newblock \href {https://doi.org/10.4230/LIPIcs.ICALP.2023.101}
  {\path{doi:10.4230/LIPIcs.ICALP.2023.101}}.

\bibitem{seppelt2023logical}
Tim Seppelt.
\newblock {Logical Equivalences, Homomorphism Indistinguishability, and
  Forbidden Minors}.
\newblock In J\'{e}r\^{o}me Leroux, Sylvain Lombardy, and David Peleg, editors,
  {\em 48th International Symposium on Mathematical Foundations of Computer
  Science (MFCS 2023)}, volume 272 of {\em Leibniz International Proceedings in
  Informatics (LIPIcs)}, pages 82:1--82:15, Dagstuhl, Germany, 2023. Schloss
  Dagstuhl -- Leibniz-Zentrum f{\"u}r Informatik.
\newblock \href {https://doi.org/10.4230/LIPIcs.MFCS.2023.82}
  {\path{doi:10.4230/LIPIcs.MFCS.2023.82}}.

\end{thebibliography}

	\newpage
	
	\appendix
	
	\section{Material Omitted in \Cref{sec:roberson}}
	\label{app:roberson}
		Let $\Gamma$ be an arbitrary finite abelian group.
		\lemIso*
		\begin{proof}
		Let $uv \in E(G)$.
		Denote the vertex set of $\CFI[\Gamma, G,U]$ respectively $\CFI[\Gamma, G,U']$ by $V_U$ and $V_{U'}$.
		First consider $U' \coloneqq U + u - v$ where $u$ and $v$ denote the vectors in $\Gamma^{V(G)}$ with one at the $u$-th and $v$-th component, respectively, and zero otherwise. Define
		the map $\phi \colon V_U \to V_{U'}$ by
		\[
		\phi((w, S)) \coloneqq \begin{cases}
			(u, S + uv), & \text{if } w = u, \\
			(v, S - uv), & \text{if } w = v, \\
			(w, S), & \text{otherwise.}
		\end{cases}
		\]
		where $uv$ denotes the vector in $\Gamma^{E(u)}$ in the first case or in $\Gamma^{E(v)}$ in the second case with one at the $uv$-th component and zero otherwise.
Observe that $\sum_{e \in E(v)} (S-uv)(e) = U(v) - 1 = U'(v)$ and analogously for $u$. Hence, $\phi$ is indeed a well-defined map to $V_{U'}$. Clearly, $\phi$ is a bijection.
Let $(x,S),(y, T) \in V_U$ be arbitrary vertices of $\CFI[\Gamma, G,U]$ and write $\phi(x,S) \eqqcolon (x, S')$ and $\phi(y, T) \eqqcolon (y, T')$. Then $S'(xy) + T'(xy) = S(xy) + T(xy)$.
		Hence, $(x,S)$ and $(y, T)$ are adjacent in $\CFI[\Gamma, G,U]$ if and only if they are adjacent in $\CFI[\Gamma, G,U']$.
		
		Since $G$ is connected, the maps constructed above can be composed to yield $\CFI[\Gamma, G,U] \cong \CFI[\Gamma, G,U + u - v]$ for every pair of vertices $u, v$. This yields $\CFI[\Gamma, G,U] \cong \CFI[\Gamma, G,U']$ as desired.
	\end{proof}
	
	\section{Material Omitted in Section~\ref{sec:inv-map-hom-indistinguish}}
	\label{app:inv-map-hom-indistinguish}
	
	Let $G$ be an ordered base graph that is $(r,d,g,c)$-nice,
	$i \in \bbN$, and $U \in \bbZ_{2^i}^{V(G)}$.
	\niceImpliesHomogeneousStep*
	\begin{claimproof}
		The vertices~$x$ and~$y$ must have the same origin $v \in V(G)$ (if they are not, then they can easily be distinguished using~$\preceq$ and~$3$ variables). 
		So let $x = (v,S)$ and  $y = (v, T)$ for some $S, T \in \bbZ_{2^i}^{N_G(v)}$.
		We need to find an automorphism~$\pi$ of $\CFI[\bbZ_{2^i},G,U]$ such that~$\pi$ fixes the tuple~$\bar{\gamma}$ pointwise and $\pi(x) = y$. In case that $x= y$, there is nothing to show, so assume $x \neq y$.
		
		First assume that some vertex $z$ in~$\bar{\gamma}$ also has origin~$v$.
		In this case,~$x$ and~$y$ can be distinguished by a $(c'+2)$-variable formula,
		using~$\gamma$ as parameters and the relations $C_{v,u}$:
		There must be a vertex $u$ such that the distance only via $C_{v,u}$-edges between $z$ and $x$ is different from the one between $z$ and $y$, which can be expressed using $2$ additional variables.
		This contradicts that~$\bar{a}$ and~$\bar{b}$ are $\Cc^{3c'}$\nobreakdash-equivalent.
		
		So assume that all vertices in~$\bar{\gamma}$ do not have origin~$v$.
		Let $F := \setcond{ e \in E(v)}{S(e) \neq T(e)}$ denote the incident edges of~$v$ that~$\pi$ has to \emph{shift}.
		Since $\sum S = \sum T$, we have $|F| > 1$.
		Let $B \subseteq V(G)$ be the set of all origins of vertices in $\bar{\gamma}$. 
		Because~$\pi$ must fix~$\bar{\gamma}$, the automorphism~$\pi$ cannot apply a shift to any edge that is incident to a vertex in~$B$.
		One can see that~$\pi$ with the required properties exists if and only if there exists a partition~$\Pp$ of~$F$ such that for every part $P \in \Pp$, $\sum_{e \in P} S(e) -T(e) = 0$ and the edges in~$P$ lead into the same connected component of $G - B - \{v\}$:
		
		If the condition holds, then we can construct~$\pi$ as follows: For each $P \in \Pp$,
		pick a vertex~$w$ in the component~$C$ of $G - B - \{v\}$ that is connected with the edges in~$P$.
		For each $e \in P$, fix a path in~$C$ that starts with the edge~$e$ and ends in~$w$.
		Then, for every $e \in P$,~$\pi$ shifts the edge~$e$ at vertex~$v$ by $T(e) - S(e)$. The shifts are propagated to~$w$ by a path-isomorphism along the respective path. The sum of the shifts at~$w$ will be $\sum_{e \in P} S(e) -T(e) = 0$.
		Hence, this defines an automorphism with the desired properties.
		It is not difficult to see that if such a partition~$\Pp$ of~$F$ does not exist, then there is no~$\pi$ that fixes~$\bar{\gamma}$ and maps~$x$ to~$y$.
		In this case, it is not possible to apply the required shifts to the edges in~$F$ in such a way that no edge in~$B$ is shifted and the shifts at each vertex sum up to zero.
		
		Thus, it remains to show that if the required partition of~$F$ does not exist, then~$\bar{\gamma}x$ and~$\bar{\gamma}y$ are not $\Cc^{3c'}$-equivalent.
		If such a partition does not exist, then $G - B - \{v\}$ must be disconnected.
		By the definition of niceness (since $|B| \leq c' \leq c$), all but at most one connected components of $G - B$ are induced subgraphs
		of a grid of height~$c'$.
		For at least two components the required shifts do not sum up to zero (because all together they sum up to zero).
		Hence, there is at least one component that is an induced subgraph of a grid of height~$c'$ -- call it~$X$ -- for which the shifts in the edges in $F$ leading from~$v$ into~$X$ do not sum up to zero.
		Let $X_B := X \cup \setcond{ u \in B}{ut \in E(G) \text{ for some } t \in X}$. Let $\AA = \CFI[\bbZ_{2^i},G,U]$
		and denote by~$\AA[X_B]$ the substructure of~$\AA$
		induced by all vertices whose origin is in~$X_B$.
		To distinguish $(\AA[X_B], \bar{\gamma}x)$
		from $(\AA[X_B], \bar{\gamma}y)$,
		we can actually distinguish $(\AA[X_B], \bar{\gamma}x)$
		from $(\AA'[X_B], \bar{\gamma}x)$,
		where $\AA' = \CFI[\bbZ_{2^i},G,U']$ for some $U' \in V(G)^{\bbZ_{2^i}}$
		such that $U(u) = U'(u)$ for all $u \in B$
		and $\sum_{u \in X} U(x)$ and $\sum_{u \in X} U'(x)$
		differ by the sum of shifts required into~$C$.
		This is the case
		because $(\AA'[X_B], \bar{\gamma}x)$ 
		is isomorphic to $(\AA[X_B], \bar{\gamma}y)$.
		Instead of distinguishing $(\AA[X_B], \bar{\gamma}x)$
		from $(\AA[X_B], \bar{\gamma}y)$
		in $\Cc^{3c'}$ by a formula with~$c'$ free variables
		interpreted as~$\bar{\gamma}y$,
		it essentially suffices to distinguish $\CFI[\bbZ_{2^i},G[X],U[X]]$
		from $\CFI[\bbZ_{2^i},G[X],U'[X]]$ in $\Cc^{2c'}$.
		Since $G[X]$ is an induced subgraph of a grid of height $c'$,
		it follows from $\cite{Furer2001}$
		that the two structures are not $\Cc^{c'+1}$-equivalent
		so in particular not $\Cc^{2c'}$-equivalent.
	\end{claimproof}
	
	\orbitsIndependentNice*
	\begin{proof}
		Let $O(\bar{x}) \subseteq V(G)$ be the set of all origins of vertices in~$\bar{x}$ and define $O(\bar{y})$ and $O(\bar{z})$ similarly.
		It is well-known for CFI graphs that every automorphism is composed of cycle-automorphisms.
		For our CFI structures over $\bbZ_{2^i}$ see e.g.~\cite{lichter2023separating}.
		It suffices to show that there are automorphisms $\psi$ and $\pi$
		such that $\psi(\bar{x}\bar{y}\bar{z}) = \phi(\bar{x})\bar{y}\bar{z}$
		and $\pi(\bar{x}\bar{y}\bar{z}) =\bar{x}\phi(\bar{y})\bar{z}$.
		We construct such an automorphism $\psi$ as follows:
		First decompose $\phi$ into cycle-automorphisms,
		that is, $\phi = \phi_1 \circ \cdots \circ \phi_m$,
		where all the~$\phi_i$ are cycle-automorphisms.
		First assume that the cycle corresponding to every~$\phi_i$ contains an origin of a vertex in~$\bar{x}$.
		Assume $i \in [m]$ and let $c_i = u_1,\dots, u_\ell,u_1$ be the  cycle corresponding to~$\phi_i$
		and let $j_1, \dots, j_n$ be all indices such that~$u_{j_i}$ is a vertex in~$\bar{x}$.
		Note that the vertices $u_{j_i+1}$ and $u_{j_i-1}$ for all $i \in [n]$ (indices wrapping around) are all contained in the $r$-ball around~$w$.
		That is, these vertices are contained in the same connected component of 
		$G - O(\bar{x}) - O(\bar{y}) - O(\bar{z})$ because~$G$ is nice.
		Hence, for every $i \in [n]$,
		there are (possibly empty) paths~$p_i$ from $u_{j_i+1}$ to $u_{j_{i+1}-1}$ that do not contain vertices of $O(\bar{x})$, $O(\bar{y})$,
		and $O(\bar{z})$ (indices again wrapping around).
		So we obtain a cycle $c_i' = u_{j_1} p_1 \cdots p_{n-1} u_{j_n} p_n u_{j_1}$
		(which possibly uses vertices multiple times)
		that does not contain vertices of $O(\bar{y})$ and $O(\bar{z})$
		and contains exactly the edges that are incident to vertices in $O(\bar{x})$ and contained in~$c_i$.
		Actually, these edges are used in the same direction in~$c_i'$ as in~$c_i$.
		Let~$\psi_i$ be the cycle-automorphism corresponding to~$c_i'$.
		Then $\psi_i(\bar{y}) = \bar{y}$ and $\psi_i(\bar{z}) = \bar{z}$
		because~$c_i'$ does not contain vertices of $O(\bar{y})$ and $O(\bar{z})$.
		Furthermore, we have $\psi_i(\bar{x}) = \phi_i(\bar{x})$
		because~$c_i'$ uses the same edges as~$c_i$ that are incident to $O(\bar{x})$ and in particular uses them in the same direction.
		Now let $\psi = \psi_1 \circ \cdots \circ \psi_m$.
		One easily sees that $\psi(\bar{x}\bar{y}\bar{z}) = \phi(\bar{x})\bar{y}\bar{z}$.
		
		If some~$\psi_i$ does not contain vertices of $O(\bar{x})$, it is just ignored in the construction.
		By analogous reasoning, we can construct the desired automorphism~$\pi$.
	\end{proof}
	
	\section{Material Omitted in \Cref{sec:treewidth}}
	\label{app:treewidth}
	
	Let $F$ be a graph. A \emph{tree decomposition} for $F$ is a tuple $(T, \beta)$ where $T$ is a tree and $\beta \colon V(T) \to 2^{V(F)}$ is such that
	\begin{enumerate}
		\item $\bigcup_{t \in V(T)} \beta(t) = V(F)$,
		\item for all $uv \in E(F)$, there exists $t \in V(T)$ such that $u, v \in \beta(t)$,
		\item for all $u \in V(F)$, the set of all $t \in V(T)$ such that $u \in \beta(t)$ induces a connected subgraph of $T$.
	\end{enumerate}
	The \emph{width} of $(T, \beta)$ is $\max_{t\in V(T)}|\beta(t)| -1$. The \emph{treewidth} of $F$ is the minimal width over all tree decomposition of $F$. See \cite{bodlaender_partial_1998} for further details on treewidth.
	
	Let $k \geq 1$.
	A \emph{$k$-labelled graph} is a tuple $\boldsymbol{F} = (F, \boldsymbol{u})$ where $F$ is a graph and $\boldsymbol{u} \in V(F)^k$. A homomorphism between $k$-labelled graphs $(F, \boldsymbol{u}) \to (G, \boldsymbol{w})$ is a homomorphism $h \colon F \to G$ such that $h(u_i) = w_i$ for all $i\in [k]$. We write $\hom(\boldsymbol{F}, \boldsymbol{G})$ for the number of homomorphisms between the labelled graphs $\boldsymbol{F}$ and $\boldsymbol{G}$.

	For two $k$-labelled graphs $\boldsymbol{F} = (F, \boldsymbol{u})$ and $\boldsymbol{K} = (K, \boldsymbol{v})$, define their \emph{gluing product} $\boldsymbol{F} \odot \boldsymbol{K}$ as the $k$-labelled graph whose underlying graph is obtained by taking the disjoint union of $F$ and $K$ and identifying $u_i$ and $v_i$ for all $i \in [k]$. It can be easily seen that $\hom(\boldsymbol{F} \odot \boldsymbol{K}, \boldsymbol{G}) = \hom(\boldsymbol{F}, \boldsymbol{G}) \hom(\boldsymbol{K}, \boldsymbol{G})$ for all $\boldsymbol{G}$. See \cite{mancinska_quantum_2020,grohe_homomorphism_2022} for further details.

	We consider a certain family of $(k+1)$-labelled graphs:
	\begin{definition} \label{def:twk}
		Let $\mathcal{TW}^k$ be the family of $(k+1)$-labelled graphs $\boldsymbol{F} = (F, \boldsymbol{u})$ such that $F$ admits a tree decomposition $(T, \beta)$ such that
		\begin{enumerate}
			\item there exists a bag $r \in V(T)$ such that $\beta(r) = \{u_1, \dots, u_{k+1}\}$,\label{it:twk1}
			\item if $|V(T)| \geq 2$, then $|\beta(t)| = k+1$ for all $t \in V(T)$ and $|\beta(s) \cap \beta(t)| = k$ for all $st \in E(T)$.\label{it:twk2}
		\end{enumerate}
	\end{definition}

	\begin{theorem}\label{thm:dvorak-mod-p-local}
		Let $p$ be a prime. Let $k \geq 1$.
		For all $(k+1)$-labelled graphs $\boldsymbol{G}$ and $\boldsymbol{H}$, the following are equivalent:
		\begin{enumerate}
			\item $\hom(\boldsymbol{F}, \boldsymbol{G}) \equiv \hom(\boldsymbol{F}, \boldsymbol{H}) \mod p$ for all $\boldsymbol{F} \in \mathcal{TW}^k$.
			\item For all formulae $\phi(x_1, \dots, x_{k+1}) \in \CkMod{k+1}{p}$, we have $\boldsymbol{G} \models \phi$ if and only if $\boldsymbol{H} \models \phi$.
		\end{enumerate}
	\end{theorem}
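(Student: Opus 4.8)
The plan is to establish the equivalence by two inclusions, adapting Dvořák's proof \cite{dvorak_recognizing_2010} from $\mathbb{Z}$ to $\mathbb{F}_p$, with Fermat's little theorem taking over the role of the exact cardinality arguments. I would first record three facts about $\mathcal{TW}^k$: (a) it is closed under the gluing product $\odot$, with $\hom(\boldsymbol{F} \odot \boldsymbol{K}, \boldsymbol{G}) = \hom(\boldsymbol{F}, \boldsymbol{G}) \cdot \hom(\boldsymbol{K}, \boldsymbol{G})$, since tree decompositions can be glued along their common root bag; (b) if $\boldsymbol{F} = (F, \boldsymbol{w}) \in \mathcal{TW}^k$ has pairwise distinct labels, $j \in [k+1]$, and $z$ is a fresh isolated vertex, then the graph $(F \cup \{z\}, \boldsymbol{w}')$, where $\boldsymbol{w}'$ replaces $w_j$ by $z$ in the label tuple, again lies in $\mathcal{TW}^k$ — witnessed by prepending to any tree decomposition of $\boldsymbol{F}$ a new root bag $\{w_l : l \neq j\} \cup \{z\}$ adjacent to the old root bag; (c) for all $i, j \in [k+1]$ there are single-bag members of $\mathcal{TW}^k$ whose homomorphism count into $(G, \boldsymbol{g})$ is $\mathbf{1}[g_i = g_j]$ resp.\ $\mathbf{1}[g_i g_j \in E(G)]$. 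Note that by (a) and (c) every $\hom(\boldsymbol{F}, -)$ with $\boldsymbol{F} \in \mathcal{TW}^k$ having a repeated label is a product of the functions from (c), so such graphs cause no trouble below.

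For the direction ``$\CkMod{k+1}{p}$-equivalent $\Rightarrow$ homomorphism indistinguishable modulo $p$'', I would prove, by induction on a rooted tree decomposition of $\boldsymbol{F} \in \mathcal{TW}^k$, that for each $c \in \mathbb{F}_p$ there is a formula $\phi^c_{\boldsymbol{F}}(x_1, \dots, x_{k+1}) \in \CkMod{k+1}{p}$ satisfied by $\boldsymbol{G}$ exactly when $\hom(\boldsymbol{F}, \boldsymbol{G}) \equiv c \pmod p$. For a single bag, $\hom(\boldsymbol{F}, \boldsymbol{G}) \in \{0, 1\}$ and $\phi^1_{\boldsymbol{F}}$ is the quantifier-free conjunction of the atoms $x_l = x_{l'}$ over coinciding labels and $E(x_l, x_{l'})$ over edges of $F$. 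For the inductive step I would let the root bag index the variables $x_1, \dots, x_{k+1}$; since the vertex sets of the subtrees $F_{t_1}, \dots, F_{t_m}$ hanging off the root pairwise meet only inside the root bag, $\hom(\boldsymbol{F}, \boldsymbol{G})$ is a quantifier-free ``the root bag embeds'' condition times the product over $i$ of $\sum_{v \in V(G)} \hom(\boldsymbol{F}_{t_i}, \boldsymbol{G}')$, where $v$ ranges over the images of the unique vertex $w_i$ of $\beta(t_i)$ outside the root bag and $\boldsymbol{G}'$ is the corresponding relabelling of $G$. The decisive observation is that $w_i$ can be summed over by a modular counting quantifier $\exists^{d} x_{j_i}$ that reuses the variable $x_{j_i}$ indexing the unique root-bag vertex missing from $\beta(t_i)$, so that no new variable symbol is needed; the residue of each factor, and then of the whole product, is then obtained as a finite Boolean combination of the formulas $\exists^{d} x_{j_i}.\, \phi^e_{\boldsymbol{F}_{t_i}}$ provided by the induction hypothesis. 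This exhibits $\hom(\boldsymbol{F}, -) \bmod p$ as a function of the $\CkMod{k+1}{p}$-type, which gives the implication.

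For the converse I would consider the $\mathbb{F}_p$-vector space $\mathcal{R}$ spanned by the functions $\boldsymbol{G} \mapsto \hom(\boldsymbol{F}, \boldsymbol{G}) \bmod p$ for $\boldsymbol{F} \in \mathcal{TW}^k$; by (a) it is an $\mathbb{F}_p$-algebra containing $1$, and by (c) it contains the indicators $\mathbf{1}[x_i = x_j]$ and $\mathbf{1}[E(x_i, x_j)]$. The key closure property is that $\mathcal{R}$ is stable under the operation $f \mapsto \bigl[\boldsymbol{G} \mapsto \textstyle\sum_{v \in V(G)} f(\boldsymbol{G}[j := v]) \bmod p\bigr]$ for each coordinate $j$: it suffices to check this on the spanning functions, where $\sum_v \hom(\boldsymbol{F}, \boldsymbol{G}[j := v])$ counts homomorphisms of $F$ with every label but the $j$-th pinned, which equals $\hom((F \cup \{z\}, \boldsymbol{w}'), \boldsymbol{G})$ for the $\mathcal{TW}^k$-graph of (b) (using for the repeated-label case the reduction noted above). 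Combining this with (c) and the Fermat identity $\mathbf{1}[s = c] = 1 - (s - c)^{p-1}$ valid for $s \in \mathbb{F}_p$, an induction on formula structure shows that $\mathbf{1}[\psi] \in \mathcal{R}$ for every $\psi(x_1, \dots, x_{k+1}) \in \CkMod{k+1}{p}$: negation, conjunction, and disjunction are polynomial operations on indicators, and $\mathbf{1}[\exists^{c} y.\, \psi]$ is obtained by summing $\mathbf{1}[\psi]$ out over the coordinate of $y$ and then applying the Fermat identity. Since homomorphism indistinguishable $(k+1)$-labelled graphs agree on every spanning function and hence on all of $\mathcal{R}$, they satisfy the same $\CkMod{k+1}{p}$-formulas.

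The conceptual content of this plan is light; the real effort should go into the bookkeeping of the second step. One must verify carefully that exactly $k+1$ variable symbols suffice along the recursion on the tree decomposition — which is precisely why reusing $x_{j_i}$ is essential — and one must treat the degenerate cases (a bag equal to the whole decomposition, or a label tuple with repeated entries) separately, both when defining $\phi^c_{\boldsymbol{F}}$ and when checking (a)--(c) against \cref{def:twk}. These are exactly the points at which the present argument over $\mathbb{F}_p$ deviates in detail from Dvořák's argument over $\mathbb{Z}$ \cite{dvorak_recognizing_2010}.
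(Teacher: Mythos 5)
Your proposal is correct and takes essentially the same route as the paper: your first direction is the paper's \cref{lem:graph-to-formula} (inductive construction of formulas $\phi^c_{\boldsymbol{F}}$ along the tree decomposition, reusing the variable of the bag vertex dropped when passing to a child), and your second direction is the paper's \cref{lem:interpolation,lem:formala-to-graph} (structural induction showing each formula's indicator lies in the $\mathbb{F}_p$-algebra spanned by the functions $\hom(\boldsymbol{F},\cdot)$ for $\boldsymbol{F}\in\mathcal{TW}^k$, with your Fermat identity $1-(s-c)^{p-1}$ playing the role of the paper's Lagrange interpolation polynomial). The degenerate single-bag and repeated-label cases are handled at the same level of detail as in the paper's own case analysis.
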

	Here, for $\boldsymbol{G} = (G, \boldsymbol{u})$, the expression $\boldsymbol{G} \models \phi$ indicates that $G \models \phi(u_1, \dots, u_{k+1})$.
	
	\begin{lemma} \label{lem:graph-to-formula}
		Let $p$ be a prime and $k \geq 1$.
		For every $\boldsymbol{F} \in \mathcal{TW}^k$ and every $m \in \mathbb{F}_p$, there exists a formula $\phi_m(x_1, \dots, x_{k+1}) \in \CkMod{k+1}{p}$ such that for every $(k+1)$-labelled graph $\boldsymbol{G}$, 
		\[
		\boldsymbol{G} \models \phi \iff \hom(\boldsymbol{F}, \boldsymbol{G}) \equiv m \mod p.
		\]
	\end{lemma}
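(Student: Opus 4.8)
The plan is to adapt Dvořák's translation of a tree decomposition into a counting-logic formula~\cite{dvorak_recognizing_2010} to modular counting. Fix $\boldsymbol{F} = (F, \boldsymbol{u}) \in \mathcal{TW}^k$ together with a tree decomposition $(T, \beta)$ as in \cref{def:twk}, rooted at the bag $r$ with $\beta(r) = \{u_1, \dots, u_{k+1}\}$. For each bag $t$ fix an enumeration $\beta(t) = \{b^t_1, \dots, b^t_{k+1}\}$ with $b^r_i = u_i$, assign every edge $e \in E(F)$ to the bag $t_e \in V(T)$ closest to the root that contains both endpoints of $e$ (well-defined by the tree-decomposition axioms), and set $\boldsymbol{F}_t := (F_t, (b^t_1, \dots, b^t_{k+1}))$, where $F_t$ is the subgraph of $F$ on $\bigcup_{t' \in \mathrm{subtree}(t)} \beta(t')$ with edge set $\{e : t_e \in \mathrm{subtree}(t)\}$. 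Then $\boldsymbol{F}_r = \boldsymbol{F}$, and each $\boldsymbol{F}_t$ again lies in $\mathcal{TW}^k$. I would prove, by induction on $|\mathrm{subtree}(t)|$, the following strengthening: for every $t$ and every $m \in \mathbb{F}_p$ there is $\phi_{t,m}(x_1, \dots, x_{k+1}) \in \CkMod{k+1}{p}$ such that $\boldsymbol{G} \models \phi_{t,m}$ if and only if $\hom(\boldsymbol{F}_t, \boldsymbol{G}) \equiv m \pmod p$ for every $(k+1)$-labelled graph $\boldsymbol{G}$. The lemma is then the case $t = r$.

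For the base case $t$ is a leaf, so $V(F_t) = \beta(t)$ and $\hom(\boldsymbol{F}_t, (G, \boldsymbol{w}))$ is the $\{0,1\}$-indicator of $\boldsymbol{w}$ preserving the edges of $F_t$; this is expressed by the quantifier-free $\epsilon_t := \bigwedge_{b^t_i b^t_j \in E(F_t)} E(x_i, x_j)$, so take $\phi_{t,1} := \epsilon_t$, $\phi_{t,0} := \neg \epsilon_t$, and $\phi_{t,m} := \neg(x_1 = x_1)$ for $m \notin \{0,1\}$. For the inductive step, let $c_1, \dots, c_d$ be the children of $t$; since distinct subtrees $\mathrm{subtree}(c_j)$ meet only inside $\beta(t)$ and each $v_j$, defined as the unique vertex of $\beta(c_j) \setminus \beta(t)$, occurs only in $\mathrm{subtree}(c_j)$, a homomorphism $F_t \to G$ fixing $\beta(t)$ decomposes independently over the children, yielding
\begin{equation*}
\hom(\boldsymbol{F}_t, (G,\boldsymbol{w})) = \llbracket \boldsymbol{w}\ \text{preserves}\ \{e : t_e = t\}\rrbracket \cdot \prod_{j=1}^d \Bigl( \sum_{w^\ast \in V(G)} \hom\bigl(\boldsymbol{F}_{c_j}, (G, \boldsymbol{w}_j(w^\ast))\bigr) \Bigr),
\end{equation*}
where $\boldsymbol{w}_j(w^\ast)$ is the $\beta(c_j)$-indexed tuple agreeing with $\boldsymbol{w}$ on $\beta(t) \cap \beta(c_j)$ and sending $v_j$ to $w^\ast$.

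Modulo $p$ the inner sum depends only on the residues of its summands. Let $v'_j$ be the unique vertex of $\beta(t) \setminus \beta(c_j)$ and $i_j$ its index in the enumeration of $\beta(t)$. Substituting into $\phi_{c_j,m'}$ (supplied by the induction hypothesis) the variables of $\beta(t) \cap \beta(c_j)$ for the corresponding variables of $\beta(c_j)$, and $x_{i_j}$ for the variable of $v_j$, gives a formula with free variables $\{x_i : i \neq i_j\}$ which altogether uses exactly $x_1, \dots, x_{k+1}$; applying $\exists^{c} x_{i_j}$ then captures $|\{w^\ast : \hom(\boldsymbol{F}_{c_j}, (G, \boldsymbol{w}_j(w^\ast))) \equiv m' \}| \bmod p$. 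A finite disjunction over the admissible residue vectors $(c_{m'})_{m' \in \mathbb{F}_p}$ with $\sum_{m'} m' c_{m'} \equiv n$ produces $\chi_{c_j,n}(x_1, \dots, x_{k+1}) \in \CkMod{k+1}{p}$ (with $x_{i_j}$ bound, hence not free) expressing that the $j$-th factor is $\equiv n \pmod p$. Finally, $\phi_{t,m}$ is the disjunction of $\neg \epsilon_t$ in the case $m = 0$ with $\epsilon_t \wedge \bigvee_{(n_1, \dots, n_d)\,:\, n_1 \cdots n_d \equiv m} \bigwedge_{j=1}^d \chi_{c_j,n_j}$; since each $\chi_{c_j,n_j}$ recycles $x_{i_j}$ as its bound variable, the whole formula stays within $x_1, \dots, x_{k+1}$.

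The main obstacle is exactly this variable budget: the quantification over the image of the vertex $v_j$ newly introduced at a child bag must reuse the variable naming the vertex $v'_j$ that is dropped in passing from $\beta(t)$ to $\beta(c_j)$, and the legitimacy of this reuse hinges on conditions~\ref{it:twk1} and~\ref{it:twk2} of \cref{def:twk}, which force adjacent bags to differ in precisely one vertex. The remaining ingredients — well-definedness of the edge assignment $e \mapsto t_e$, the separator property ensuring that the recursion genuinely factors as a product over the children, and the elementary fact that a sum or product in $\mathbb{F}_p$ is determined by the $\mathbb{F}_p$-residues of its arguments — are routine bookkeeping.
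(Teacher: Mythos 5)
Your proposal is correct and follows essentially the same route as the paper's proof: a bottom-up induction over the tree decomposition, reusing the variable naming the vertex dropped between adjacent bags (legitimised by conditions~\ref{it:twk1} and~\ref{it:twk2} of \cref{def:twk}) to quantify over the image of the newly introduced vertex, and encoding the modular sum and product via disjunctions over residue-count vectors with $\sum_{m'} m'c_{m'} \equiv n$ and over factorisations of $m$ across the children. The only (easily repaired) imprecision is in your base case: when $|V(T)|=1$ the labels $u_1,\dots,u_{k+1}$ need not be distinct, so $\epsilon_t$ must additionally contain the equality atoms $\bigwedge_{u_i=u_j}(x_i=x_j)$, since a homomorphism of labelled graphs can only exist when the corresponding coordinates of $\boldsymbol{w}$ coincide.
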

	\begin{proof}
		Let $\boldsymbol{F} = (F, \boldsymbol{u})$ with tree decomposition $(T, \beta)$ be as in \cref{def:twk}.
		The proof is by induction on the size of $T$.
		
		If $|V(T)| = 1$, then all vertices of $F$ are labelled and $\hom(\boldsymbol{F},\boldsymbol{G}) \in \{0,1\}$ for every $\boldsymbol{G}$.
		If~$m$ is neither $0$ nor $1$ modulo~$p$, set $\phi_m$ to $\mathsf{false}$.
		If $m = 1$, set
		\[
		\phi_1 \coloneqq \bigwedge_{\substack{1 \leq i \neq j \leq k+1 \\ u_i = u_j}} (x_i = x_j) \land \bigwedge_{\substack{1 \leq i \neq j \leq k+1 \\ u_iu_j \in E(F)}} E(x_i,x_j).
		\]
		Then $\boldsymbol{G} \models \phi_1 \iff \hom(\boldsymbol{F},\boldsymbol{G}) = 1 \iff \hom(\boldsymbol{F},\boldsymbol{G}) \equiv 1 \mod p$, where the second equivalence holds since $\hom(\boldsymbol{F},\boldsymbol{G}) \in \{0,1\}$. Finally, set $\phi_0 \coloneqq \neg \phi_1$.
		
		If $|V(T)| \geq 2$, let $r \in V(T)$ denote the vertex from \cref{it:twk1} of \cref{def:twk}.
		First consider the case when $r$ has a single neighbour $s$ in $T$.
		Let $S$ denote the tree obtained from $T$ by deleting $r$.
		Write $F'$ for the subgraph of $F$ induced by $\bigcup_{t \in V(S)} \beta(t)$.
		By \cref{it:twk2} of \cref{def:twk}, one may find $\boldsymbol{v} \in V(F')^{k+1}$ such that $\beta(s) = \{v_1, \dots, v_{k+1}\}$ and $v_i = u_i$ for all $i \in [k+1] \setminus \{\ell\}$ for some $\ell \in [k+1]$.
		Then $\boldsymbol{F}' \coloneqq (F', \boldsymbol{v}) \in \mathcal{TW}^k$.
		Furthermore, let $A \coloneqq F[\beta(r)]$ and $\boldsymbol{A} \coloneqq (A, \boldsymbol{u}) \in \mathcal{TW}^k$. Then, writing $\boldsymbol{G} = (G, \boldsymbol{w})$,
		\[
		\hom(\boldsymbol{F}, \boldsymbol{G}) = \hom(\boldsymbol{A}, \boldsymbol{G}) \sum_{w \in V(G)} \hom(\boldsymbol{F}', (G, w_1 \dots w_{\ell-1} w w_{\ell+1}\dots w_{k+1} )).
		\]
		Let $\psi_m$ and $\chi_m$ denote the formulae constructed inductively for $\boldsymbol{A}$ and $\boldsymbol{F}'$ respectively. Let
		\[
		\phi_m \coloneqq \bigvee_{\substack{m', m'' \in \mathbb{F}_p, \\ m'm'' = m.}} \Big( \psi_{m'} \land 
		\bigvee_{\substack{c_1, \dots, c_p \in \mathbb{F}_p, \\ \sum_{i=1}^p i c_i = m''}} \bigwedge_{1 \leq i \leq p}   \exists^{c_i} x_\ell.\ \chi_{i} \Big). 		
		\]
		If $\boldsymbol{G} \models \phi_m$, then there exist $m', c_1, \dots, c_p \in \mathbb{F}_p$ such that $\hom(\boldsymbol{A}, \boldsymbol{G}) \equiv m' \mod p$, 
		\[ \left|\setcond[\big]{w \in V(G) }{ \hom(\boldsymbol{F}', (G, w_1 \dots w_{\ell-1} w w_{\ell+1}\dots w_{k+1} ) \equiv i \mod p }\right| \equiv c_i \mod p \] for all $i \in [p]$, and $m' \sum_{i=1}^p ic_i = m$.
		Hence, $\hom(\boldsymbol{F}, \boldsymbol{G}) \equiv m' \sum_{i=1}^p i c_i = m$.
		The converse is readily verified.
		
		It remains to consider the case when $r$ has multiple neighbours. 
		In this case, $\boldsymbol{F} = \boldsymbol{F}^1 \odot \dots \odot \boldsymbol{F}^r$ for some graphs $\boldsymbol{F}^1, \dots, \boldsymbol{F}^r \in \mathcal{TW}^k$ falling into the case considered above. Let $\phi^1_m, \dots, \phi^r_m$ denote the corresponding inductively constructed formulae. Set
		\[
		\phi_m \coloneqq \bigvee_{\substack{m_1, \dots, m_r \in \mathbb{F}_p, \\ m_1 \cdots m_r = m.}} \phi^1_{m_1} \land \dots \land \phi^r_{m_r}.
		\]
		Since $\hom(\boldsymbol{F}, \boldsymbol{G}) = \prod_{i=1}^r \hom(\boldsymbol{F}^i, \boldsymbol{G})$, this formula is as desired.
	\end{proof}

	Subsequently, we consider finite $\mathbb{F}_p$-linear combinations of graphs in $\mathcal{TW}^k$.
	For such a linear combination $\mathfrak{q} \coloneqq \sum \alpha_i \boldsymbol{F}^i$ with $\alpha_i \in \mathbb{F}_p$ and $\boldsymbol{F}^i \in \mathcal{TW}^k$, write $\hom(\mathfrak{q}, \boldsymbol{G}) \coloneqq \sum \alpha_i \hom(\boldsymbol{F}^i, \boldsymbol{G}) \in \mathbb{F}_p$.
	Write $\mathbb{F}_p\mathcal{TW}^k$ for the set of all such linear combinations.
	The gluing operation $\odot$ can be extended linearly to turn $\mathbb{F}_p\mathcal{TW}^k$ into an $\mathbb{F}_p$-algebra.
	Observe that $\hom(\mathfrak{q}_1 \odot \mathfrak{q}_2, \boldsymbol{G}) = \hom(\mathfrak{q}_1, \boldsymbol{G})\hom(\mathfrak{q}_1, \boldsymbol{G})$ for all $\mathfrak{q}_1, \mathfrak{q}_2 \in \mathbb{F}_p\mathcal{TW}^k$.
	
	\begin{lemma} \label{lem:interpolation}
		Let $p$ be a prime and $k \geq 1$.
		Let $\mathfrak{q} \in \mathbb{F}_p\mathcal{TW}^k$ and
		$X_1 \subseteq \mathbb{F}_p$.
		Then there exists $\mathfrak{r} \in \mathbb{F}_p\mathcal{TW}^k$ such that for all $(k+1)$-labelled graphs $\boldsymbol{G}$,
		\begin{itemize}
			\item if $\hom(\mathfrak{q}, \boldsymbol{G}) \not\in X_1$, then $\hom(\mathfrak{r}, \boldsymbol{G}) = 0$,
			\item if $\hom(\mathfrak{q}, \boldsymbol{G}) \in X_1$, then $\hom(\mathfrak{r}, \boldsymbol{G}) = 1$.
		\end{itemize}
	\end{lemma}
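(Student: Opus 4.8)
The plan is to do polynomial interpolation inside the $\mathbb{F}_p$-algebra $\mathbb{F}_p\mathcal{TW}^k$. The two ingredients I would set up first are: (i)~for every $(k+1)$-labelled graph $\boldsymbol{G}$, the map $\hom(-,\boldsymbol{G}) \colon \mathbb{F}_p\mathcal{TW}^k \to \mathbb{F}_p$ is an $\mathbb{F}_p$-algebra homomorphism — linearity is immediate from the definition of $\hom(\mathfrak{q},\boldsymbol{G})$, and multiplicativity follows by bilinearity from the gluing identity $\hom(\boldsymbol{F}^1 \odot \boldsymbol{F}^2, \boldsymbol{G}) = \hom(\boldsymbol{F}^1,\boldsymbol{G})\hom(\boldsymbol{F}^2,\boldsymbol{G})$ recorded just above the lemma; and (ii)~there is a constant $\boldsymbol{1} \in \mathcal{TW}^k$ with $\hom(\boldsymbol{1},\boldsymbol{G}) = 1$ for every $\boldsymbol{G}$, namely the single-bag $(k+1)$-labelled graph whose vertex set consists of the pairwise distinct vertices $u_1,\dots,u_{k+1}$ and which has no edges. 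It lies in $\mathcal{TW}^k$ via the one-node tree decomposition, and for $\boldsymbol{G} = (G,\boldsymbol{w})$ the unique homomorphism $\boldsymbol{1} \to \boldsymbol{G}$ sends $u_i \mapsto w_i$.

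Next I would choose the polynomial realising the indicator of $X_1$ on $\mathbb{F}_p$. Put $P(x) \coloneqq \sum_{a \in X_1}\bigl(1 - (x-a)^{p-1}\bigr) \in \mathbb{F}_p[x]$. Since $t^{p-1} = 1$ for $t \in \mathbb{F}_p \setminus \{0\}$ and $0^{p-1} = 0$, the $a$-th summand evaluates to $1$ at $x = a$ and to $0$ at every other point of $\mathbb{F}_p$, so $P(b) = 1$ if $b \in X_1$ and $P(b) = 0$ otherwise. I then define $\mathfrak{r}$ by substituting $\mathfrak{q}$ for the indeterminate and $\boldsymbol{1}$ for $1$, using $\odot$ as multiplication:
\[
\mathfrak{r} \;\coloneqq\; \sum_{a \in X_1} \Bigl( \boldsymbol{1} \;-\; (\mathfrak{q} - a\boldsymbol{1})^{\odot (p-1)} \Bigr),
\]
where $(-)^{\odot(p-1)}$ denotes the $(p-1)$-fold $\odot$-product. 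This is manifestly an element of $\mathbb{F}_p\mathcal{TW}^k$, being built from $\mathfrak{q}$ and $\boldsymbol{1}$ by $\mathbb{F}_p$-linear combinations and $\odot$.

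Finally I would evaluate: for any $(k+1)$-labelled graph $\boldsymbol{G}$, applying the algebra homomorphism $\hom(-,\boldsymbol{G})$ and using $\hom(\boldsymbol{1},\boldsymbol{G}) = 1$ yields
\[
\hom(\mathfrak{r},\boldsymbol{G}) = \sum_{a \in X_1}\bigl(1 - (\hom(\mathfrak{q},\boldsymbol{G}) - a)^{p-1}\bigr) = P\bigl(\hom(\mathfrak{q},\boldsymbol{G})\bigr),
\]
which equals $1$ if $\hom(\mathfrak{q},\boldsymbol{G}) \in X_1$ and $0$ if $\hom(\mathfrak{q},\boldsymbol{G}) \notin X_1$, exactly as required. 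I expect no genuine obstacle here: the only points deserving care are checking the membership $\boldsymbol{1} \in \mathcal{TW}^k$ and that $\hom(-,\boldsymbol{G})$ is genuinely multiplicative on all of $\mathbb{F}_p\mathcal{TW}^k$ rather than merely on single labelled graphs, both of which are routine. This interpolation lemma is then the tool needed to perform case distinctions of the form "$\hom(\mathfrak{q},\boldsymbol{G})$ lies in such-and-such subset of $\mathbb{F}_p$" at the level of the algebra, which is what the converse direction of \cref{thm:dvorak-mod-p-local} (handling the quantifiers $\exists^c x$) will require.
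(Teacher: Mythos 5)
Your proof is correct and takes essentially the same route as the paper: the paper also defines $\mathfrak{r}$ by evaluating an interpolation polynomial (it uses the Lagrange form $\sum_{x \in X_1}\prod_{y \neq x}\frac{X-y}{x-y}$ rather than your Fermat-based indicator $\sum_{a \in X_1}(1-(X-a)^{p-1})$, but these agree as functions on $\mathbb{F}_p$) inside the $\mathbb{F}_p$-algebra $\mathbb{F}_p\mathcal{TW}^k$ and then applies the algebra homomorphism $\hom(-,\boldsymbol{G})$. Your explicit verification that the edgeless fully-labelled graph is the unit and that $\hom(-,\boldsymbol{G})$ is multiplicative on all of $\mathbb{F}_p\mathcal{TW}^k$ is a detail the paper leaves implicit, but there is no substantive difference.
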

	\begin{proof}
		Consider the Lagrange polynomial $p(X) = \sum_{x \in X_1} \prod_{y \in \mathbb{F}_p \setminus \{x\}} \frac{X - y}{x-y} \in \mathbb{F}_p[X]$.
		Observe that $p(z) = 1$ if $z \in X_1$ and $p(z) = 0$ if $z \not\in X_1$.
		Define $\mathfrak{r} \coloneqq p(\mathfrak{q})$ via the $\mathbb{F}_p$-algebra structure of $\mathbb{F}_p\mathcal{TW}^k$.
		Then $\hom(\mathfrak{r}, \boldsymbol{G}) = p(\hom(\mathfrak{q}, \boldsymbol{G}))$, as desired.
	\end{proof}
	
	\begin{lemma} \label{lem:formala-to-graph}
		For every $\phi(x_1, \dots, x_{k+1}) \in \CkMod{k+1}{p}$, there exists a $\mathfrak{q} \in \mathbb{F}_p\mathcal{TW}^k$ such that  for all $(k+1)$-labelled $\boldsymbol{G}$,
		\begin{itemize}
			\item if $\boldsymbol{G} \not\models \phi$, then $\hom(\mathfrak{q}, \boldsymbol{G}) = 0$,
			\item if $\boldsymbol{G} \models \phi$, then $\hom(\mathfrak{q}, \boldsymbol{G}) = 1$.
		\end{itemize}
	\end{lemma}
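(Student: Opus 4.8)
The plan is to prove the lemma by structural induction on $\phi$, maintaining at each stage a representative $\mathfrak{q}_\phi \in \mathbb{F}_p\mathcal{TW}^k$ with the stated $\{0,1\}$-dichotomy of homomorphism numbers; the lemma is exactly the outer induction claim. For the atomic formulas I would use single-bag members of $\mathcal{TW}^k$: represent $x_i = x_j$ by the edgeless $(k+1)$-labelled graph in which the $i$-th and $j$-th labels sit on a common vertex and all other labels on distinct vertices, and represent $E(x_i, x_j)$ by the $(k+1)$-labelled graph with $k+1$ distinct labelled vertices and a single edge joining the $i$-th and $j$-th. In each case the underlying graph admits the trivial one-bag tree decomposition whose bag is the label set, so it lies in $\mathcal{TW}^k$ (see \cref{def:twk}), and its homomorphism count into $(G, \boldsymbol{w})$ is $1$ if $w_i = w_j$ (resp.\ $w_iw_j \in E(G)$) and $0$ otherwise. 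Write $\mathbf{1} \in \mathcal{TW}^k$ for the edgeless $(k+1)$-labelled graph on $k+1$ distinct labelled vertices, so $\hom(\mathbf{1}, \boldsymbol{G}) = 1$ always. Given representatives $\mathfrak{q}_\phi, \mathfrak{q}_\psi$ taking values in $\{0,1\}$, set $\mathfrak{q}_{\neg \phi} := \mathbf{1} - \mathfrak{q}_\phi$ and $\mathfrak{q}_{\phi \wedge \psi} := \mathfrak{q}_\phi \odot \mathfrak{q}_\psi$, using $\hom(\mathfrak{q}_1 \odot \mathfrak{q}_2, \boldsymbol{G}) = \hom(\mathfrak{q}_1, \boldsymbol{G}) \hom(\mathfrak{q}_2, \boldsymbol{G})$ and that $\mathcal{TW}^k$ (hence $\mathbb{F}_p\mathcal{TW}^k$) is closed under $\odot$: to glue two members of $\mathcal{TW}^k$ one merges their two root bags (both equal to the common label set) into a single bag, and the resulting decomposition still satisfies \cref{def:twk}. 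Disjunction follows from $\phi \vee \psi \equiv \neg(\neg \phi \wedge \neg \psi)$.

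The heart of the argument is the modular counting quantifier $\exists^c x_\ell.\ \phi$. Suppose inductively that $\mathfrak{q} = \sum_i \alpha_i \boldsymbol{F}^i$ represents $\phi(x_1, \dots, x_{k+1})$. I would introduce a \emph{partial unlabelling} operation $\Sigma_\ell \colon \mathbb{F}_p\mathcal{TW}^k \to \mathbb{F}_p\mathcal{TW}^k$, defined on a single $\boldsymbol{F} = (F, \boldsymbol{u}) \in \mathcal{TW}^k$ by turning $u_\ell$ into an ordinary (unlabelled) vertex, adjoining a fresh isolated vertex $z$, declaring $z$ to carry label $\ell$, and then extended linearly. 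A direct computation gives, for every $(k+1)$-labelled $\boldsymbol{G} = (G, \boldsymbol{w})$,
\[
\hom(\Sigma_\ell(\mathfrak{q}), (G, \boldsymbol{w})) = \sum_{w \in V(G)} \hom\big(\mathfrak{q}, (G, w_1, \dots, w_{\ell-1}, w, w_{\ell+1}, \dots, w_{k+1})\big),
\]
because the labelled vertex $z$ is isolated and must map to $w_\ell$, contributing a factor $1$, while summing over the image of the now-free vertex $u_\ell$ precisely deletes the constraint at coordinate $\ell$. By the induction hypothesis each summand on the right is the $\{0,1\}$-indicator of $(G, \dots, w, \dots) \models \phi$, so the left-hand side equals, in $\mathbb{F}_p$, the number of witnesses $w$ — exactly the quantity that $\exists^c x_\ell.\ \phi$ compares with $c$ modulo $p$. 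Applying \cref{lem:interpolation} with $\mathfrak{q}$ replaced by $\Sigma_\ell(\mathfrak{q})$ and $X_1 := \{c\}$ yields $\mathfrak{r} \in \mathbb{F}_p\mathcal{TW}^k$ with $\hom(\mathfrak{r}, \boldsymbol{G}) = 1$ iff $\hom(\Sigma_\ell(\mathfrak{q}), \boldsymbol{G}) = c$, i.e.\ iff $\boldsymbol{G} \models \exists^c x_\ell.\ \phi$; I would take $\mathfrak{q}_{\exists^c x_\ell.\phi} := \mathfrak{r}$.

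I expect the main obstacle to be verifying that $\Sigma_\ell$ indeed lands in $\mathcal{TW}^k$, i.e.\ the tree-decomposition bookkeeping. Given a decomposition $(T, \beta)$ of $F$ with root bag $\beta(r) = \{u_1, \dots, u_{k+1}\}$ as in \cref{def:twk}, I would keep $(T,\beta)$ and attach a single new leaf $r'$ adjacent to $r$ with $\beta(r') := (\beta(r) \setminus \{u_\ell\}) \cup \{z\}$. Then $\beta(r')$ is exactly the label set of $\Sigma_\ell(\boldsymbol{F})$; the isolated vertex $z$ occurs only in $\beta(r')$; the new edge $rr'$ satisfies $|\beta(r) \cap \beta(r')| = k$; and all bags have size $k+1$ (when $|V(T)| = 1$ the new decomposition has two bags of size $k+1$ with intersection $k$, consistent with \cref{it:twk2}). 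One checks this is still a valid tree decomposition of the underlying graph $F \sqcup \{z\}$, the only nontrivial point being that for $i \neq \ell$ the bags containing $u_i$ still induce a connected subtree — they previously contained $r$ and now also $r'$, a neighbour of $r$. The remaining verifications — the homomorphism-count identities for $\Sigma_\ell$, for $\odot$ and sums, and the base cases — are routine and I would dispatch them with short direct arguments.
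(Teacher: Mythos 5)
Your argument follows the paper's proof essentially verbatim: the same atomic representatives, gluing for conjunction, $\boldsymbol{I} - \mathfrak{q}_\phi$ for negation (the paper calls your $\mathbf{1}$ the graph $\boldsymbol{I}$), and the same key device for the modular quantifier — adding a fresh isolated vertex carrying label $\ell$, so your $\Sigma_\ell$ is precisely the paper's passage $\boldsymbol{F}^i \mapsto \boldsymbol{K}^i$ — followed by Lagrange interpolation via \cref{lem:interpolation}. Two places where you are less careful than the paper. First, the atom $E(x_i,x_i)$ is not covered by your edge gadget (which would need a loop); the paper assigns it the zero element of $\mathbb{F}_p\mathcal{TW}^k$ since no homomorphism into a simple graph can satisfy it. Second, your tree-decomposition verification for $\Sigma_\ell$ is too quick in the single-bag case. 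You assert that when $|V(T)|=1$ the new decomposition has two bags of size $k+1$, but that presupposes $|\beta(r)| = k+1$, i.e.\ that $u_1,\dots,u_{k+1}$ are pairwise distinct; this fails already for $\boldsymbol{I}^{ij}$, whose single bag has size $k$, and then attaching a second bag violates \cref{it:twk2} of \cref{def:twk}. The paper distinguishes the sub-cases $|V(K^i)|\leq k+1$ (keep a single bag) and $|V(K^i)|=k+2$ (a two-node tree), and you need the same split. You should also double-check, both in your write-up and against the paper's, the sub-case where $u_\ell$ carries no duplicate label while some \emph{other} pair of labels coincides: then $u_\ell$ becomes an unlabelled vertex lying outside the new label set, and it is not automatic that the resulting labelled graph still admits a decomposition satisfying \cref{def:twk}.
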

	In this case, we say that $\mathfrak{q}$ \emph{models} $\phi$.
	\begin{proof}
		By induction on the structure of $\phi$.
		\begin{itemize}
			\item If $\phi = \mathsf{true}$, then the graph $\boldsymbol{I} = (I, (1,\dots, k+1)) \in \mathcal{TW}^k$ with $V(I) = [k+1]$ and $E(I) = \emptyset$ models $\phi$.
			\item If $\phi = (x_i = x_j)$ for $i < j$, then the graph $\boldsymbol{I}^{ij} = (I^{ij}, (1, \dots, j-1, i, j+1, \dots, k+1))  \in \mathcal{TW}^k$ with $V(I^{ij}) = [k+1] \setminus \{j\}$ and $E(I) = \emptyset$ models $\phi$.
			\item If $\phi = E(x_i, x_j)$ and $i \neq j$, then the graph $\boldsymbol{A}^{ij} = (A^{ij}, (1, \dots, k+1))  \in \mathcal{TW}^k$ with $V(A^{ij}) = [k+1]$ and $E(I) = \{ij\}$ models $\phi$.
			\item If $\phi = E(x_i, x_j)$ and $i = j$, then $0$ models $\phi$.
			\item If $\phi = \phi_1 \land \phi_2$, let $\mathfrak{t}_1, \mathfrak{t}_2 \in \mathbb{F}_p\mathcal{TW}^k$ denote the elements modelling $\phi_1$ and $\phi_2$ respectively. Their product $\mathfrak{t}_1 \odot \mathfrak{t}_2$ models $\phi$.
			\item If $\phi = \neg \phi_1$ and $\mathfrak{t}_1$ is as above, then $\boldsymbol{I} - \mathfrak{t}_1$ models $\phi$. 
			\item If $\phi = (\exists^{m}x_{\ell})\psi$, then let $\mathfrak{t} = \sum \alpha_i \boldsymbol{F}^i$ denote the element modelling $\psi$. 
			For every $\boldsymbol{F}^i = (F^i, \boldsymbol{v}^i)$, construct a graph $\boldsymbol{K}^i = (K^i, \boldsymbol{w}^i)$ by letting $V(K^i) \coloneqq V(F^i) \sqcup \{x\}$, $E(K^i) \coloneqq E(F^i)$, and $w^i_j \coloneqq v^i_j$ for all $j \in [k+1] \setminus \{\ell\}$ and $w^i_\ell \coloneqq x$.
			For a tree decomposition $(T_i, \beta_i)$ for $F^i$ as in \cref{def:twk} with vertex $r_i \in V(T_i)$ such that $\beta_i(r_i) = \{\boldsymbol{v}^i_1,\dots, \boldsymbol{v}^i_{k+1} \}$, distinguish cases:
			\begin{itemize}
				\item If $|V(T_i)| \geq 2$, define a tree $S_i$ by $V(S_i) \coloneqq V(T_i) \sqcup \{s_i\}$ and $E(S_i) \coloneqq E(T_i) \sqcup \{r_is_i\}$.
				Extend $\beta$ to a map defined on $S_i$ by letting $\beta(s_i) \coloneqq \{w^i_1, \dots, w^i_{k+1}\}$.
				\item If $|V(T_i)| = 1$ and $|V(K_i)| \leq k+1$, define a tree decomposition for $K_i$ with a single bag.
				\item If $|V(T_i)| = 1$ and $|V(K_i)| = k+2$, define a tree decomposition on the single edge tree with bags $\{w^i_1, \dots, w^i_{k+1}\}$ and $\{v^i_1, \dots, v^i_{k+1}\}$.
			\end{itemize}
			In any case, the tree decomposition is as in \cref{def:twk} and hence $\boldsymbol{K}^i \in \mathcal{TW}^k$. Observe that for $\boldsymbol{G} = (G, \boldsymbol{x})$,
			\[
			\hom(\boldsymbol{K}^i, \boldsymbol{G}) = \sum_{x \in V(G)} \hom(\boldsymbol{F}^i, (G, x_1 \dots x_{\ell-1} x x_{\ell+1}\dots x_{k+1} )).
			\]
			Let $\mathfrak{q} \coloneqq \sum \alpha_i \boldsymbol{K}^i \in \mathbb{F}_p\mathcal{TW}^k$.
			By induction, 
			\begin{align*}
				\hom(\mathfrak{q}, \boldsymbol{G}) 
				&\equiv \sum_{x \in V(G)} \hom(\mathfrak{t}, (G, x_1 \dots x_{\ell-1} x x_{\ell+1}\dots x_{k+1} )) \\
				&\equiv |\{x \in V(G) \mid (G, x_1 \dots x_{\ell-1} x x_{\ell+1}\dots x_{k+1} ) \models \psi\}| \mod p.
			\end{align*}
			The desired graph can now be easily constructed via \cref{lem:interpolation}. \qedhere
		\end{itemize} 
	\end{proof}
	
	\begin{proof}[Proof of \cref{thm:dvorak-mod-p-local}]
		The forward direction follows from \cref{lem:formala-to-graph}, the backward direction from \cref{lem:graph-to-formula}.
	\end{proof}
	
	\begin{proof}[Proof of \cref{thm:dvorak-mod-p}]
		First suppose that $G$ and $H$ satisfy the same $\CkMod{k+1}{p}$-sentences.
		Let~$F$ be a graph of treewidth at most $k$. By \cite[Lemma~8]{bodlaender_partial_1998}, there exists $\boldsymbol{u} \in V(F)^{k+1}$ such that $(F, \boldsymbol{u}) \in \mathcal{TW}^k$.
		Let $m \in [p]$.
		For every $0 \leq \ell \leq k+1$ and starting with $\ell = k+1$, we construct inductively a formula $\phi^\ell_m(x_1, \dots, x_\ell)$ with $\ell$ free variables such that
		\begin{equation} \label{eq:formula-length-induction}
			\hom((F, u_1 \dots u_\ell), (G, x_1 \dots x_\ell)) 
			\equiv m \mod p \iff (G, x_1 \dots x_\ell) \models \phi^\ell_m.
		\end{equation}
		By \cref{lem:graph-to-formula}, there exists a formula $\phi^{k+1}_m$ satisfying this condition.
		For $0 \leq \ell < k+1$, observe that
		\begin{equation}\label{eq:formula-length-induction2}
			\hom((F, u_1 \dots u_\ell), (G, x_1 \dots x_\ell)) 
			= \sum_{x \in V(G)} \hom((F, u_1 \dots u_{\ell+1}), (G, x_1 \dots x_\ell x)).
		\end{equation}
		Given $\phi^{\ell+1}_m$, define
		\[
		\phi^\ell_m \coloneqq \bigvee_{\substack{c_1,\dots, c_p \in \mathbb{F}_p \\ \sum i c_i = m}} \bigwedge_{i \in [p]} \exists^{c_i} x_{\ell+1}.\ \phi^{\ell+1}_{i}(x_1, \dots, x_{\ell+1}).
		\]
		Then this formula has $\ell$ free variables and satisfies \cref{eq:formula-length-induction}.
		Indeed, if \[\hom((F, u_1 \dots u_\ell), (G, x_1 \dots x_\ell)) 
		\equiv m \mod p,\] then, by \cref{eq:formula-length-induction2}, with
		\[
		c_i \coloneqq |\{x \in V(G) \mid \hom((F, u_1 \dots u_{\ell+1}), (G, x_1 \dots x_\ell x)) \equiv i \mod p\}|
		\]
		for $i \in [p]$, it holds that $\sum_{i \in [p]} i c_i \equiv m \mod p$ implying that $(G, x_1 \dots x_\ell) \models \phi^\ell_m$.
		Conversely, let $c_1, \dots, c_p \in \mathbb{F}_p$ be such that $\sum ic_i = m$ and $(G, x_1 \dots x_\ell) \models \exists^{c_i} x_{\ell+1}.\ \phi_i^{\ell+1}(x_1, \dots, x_{\ell+1})$ for all $i \in [p]$. By \cref{eq:formula-length-induction2}, $\hom((F, u_1 \dots u_\ell), (G, x_1 \dots x_\ell)) \equiv m \mod p$.
		
		Finally, we have that $\hom(F, G) \equiv m \mod p$ if and only if $G \models \phi^0_m$ for the sentence $\phi^0_m$.
		Hence, $G$ and $H$ are homomorphism indistinguishable over all graphs of treewidth at most $k$ modulo~$p$.
		
		Conversely, suppose that $G$ and $H$ are homomorphism indistinguishable over all graphs of treewidth at most $k$ modulo~$p$.
		Let $\phi$ be a $\CkMod{k+1}{p}$-sentence. We have to show that $G \models \phi$ if and only if $H \models \phi$.
		If $\phi$ is of the form $\neg \psi$ or $\psi \land \chi$ for some $\CkMod{k+1}{p}$-sentence $\psi$ and $\chi$,
		these sentences can be considered separately. Hence, it can be assumed that $\phi$ is of the form $\exists^c x_i.\ \psi(x_i)$ for some $i \in [p]$, $c \in \mathbb{F}_p$, and  a $\CkMod{k+1}{p}$-formula $\psi$ with one free variable.
		Let $\chi(x_1, \dots, x_{k+1}) \coloneqq \psi(x_1) \land \bigwedge_{j \neq i} (x_j = x_i)$. 
		Then $G \models \exists^c x_i.\ \psi(x_i)$ if and only if $G \models \exists^c x_1 \exists^1 x_2 \dots \exists^1 x_{k+1}.\ \chi(x_1, \dots, x_{k+1})$ and analogously for $H$.
		Let $\mathfrak{q} = \sum_{j \in J} \alpha_j \boldsymbol{F}^j \in \mathbb{F}_p\mathcal{TW}^k$ be as in \cref{lem:formala-to-graph} for $\chi$ with some finite index set $J$.
		Write $F^j$ for the unlabelled graph underlying $\boldsymbol{F}^j$.
		Then 
		\begin{align*}
			\left|\setcond[\big]{\boldsymbol{x} \in V(G)^{k+1} }{ (G, \boldsymbol{x}) \models \chi}\right|
			& \equiv \sum_{\boldsymbol{x} \in V(G)^{k+1}} \hom(\mathfrak{q}, (G, \boldsymbol{x})) \\
			& \equiv \sum_{j \in J} \alpha_j \hom(F^j, G) \\
			& \equiv \sum_{j \in J} \alpha_j \hom(F^j, H) \\
			& \equiv \sum_{\boldsymbol{y} \in V(H)^{k+1}} \hom(\mathfrak{q}, (H, \boldsymbol{y})) \\
			& \equiv \left|\setcond[\big]{\boldsymbol{y} \in V(H)^{k+1} }{ (H, \boldsymbol{y}) \models \chi}\right| \mod p.
		\end{align*}
		In particular, $G \models \phi$ if and only if $H \models \phi$.
	\end{proof}
\end{document}